\newif\iflong
\newif\ifCoRRversion
\definecolor{mygreen}{rgb}{0,0.7,0.1}
\definecolor{mygray}{rgb}{0.85,0.85,0.85}
\definecolor{mymauve}{rgb}{0.58,0,0.82}
\definecolor{myblue}{rgb}{0.,0.,1}
\definecolor{myred}{rgb}{0.85,0.2,0.1}
\bf\color{myred},       
\tiny\color{black}, 
\def\Q{\mbox{${\mathbb Q}$}}
\def\R{\mbox{${\mathbb R}$}}
\def\Z{\mbox{${\mathbb Z}$}}
\def\A {\ensuremath{\mathbf{A}}}
\def\a {\ensuremath{\mathbf{a}}}
\def\b {\ensuremath{\mathbf{b}}}
\def\c {\ensuremath{\mathbf{c}}}
\def\d {\ensuremath{\mathbf{d}}}
\def\g {\ensuremath{\mathbf{g}}}
\def\p {\ensuremath{\mathbf{p}}}
\def\r {\ensuremath{\mathbf{r}}}
\def\s {\ensuremath{\mathbf{s}}}
\def\t {\ensuremath{\mathbf{t}}}
\def\x {\ensuremath{\mathbf{x}}}
\def\u {\ensuremath{\mathbf{u}}}
\def\v {\ensuremath{\mathbf{v}}}
\def\w {\ensuremath{\mathbf{w}}}
\def\y {\ensuremath{\mathbf{y}}}
\def\e {\ensuremath{\mathbf{e}}}
\def\0 {\ensuremath{\mathbf{0}}}
\def\PP {\mathcal{P}}
\def\FM {Fourier-Motzkin elimination}
\def\history{history set}
\def\bftheta{{\bf \Theta}}
\def\bfalpha{{\bf \alpha}}
\def\bflambda{{\bf \lambda}}
\def\sc{{subsumption cone}}
\def\mpr{{minimal projected representation}}
\newcommand{\rank}[1]{\ensuremath{{\rm rank}(#1)}}
\def\cone{{\sf Cone}}
\def\Char{{\sf CharCone}}
\def\linspace{{\sf LinearSpace}}
\def\hom{{\sf HomCone}}
\def\xl{x_{\rm last}}
\def\combine{{\sf Combine}}
\def\testcone{{redundancy test cone}}
\def\initialtestcone{{initial redundancy test cone}}
\def\height{{\sf height}}
\def\projectcone{{projection cone}}
\def\PR{{\sf ProjRep}}
\def\BC{{\mathcal{B}}}
\def\MC{{\mathcal{M}}}
\newtheorem{corollary}{Corollary}
\newtheorem{theorem}{Theorem}
\newtheorem{lemma}{Lemma}
\newtheorem{definition}{Definition}
\newtheorem{proposition}{Proposition}
\newtheorem{example}{Example}
\newenvironment{proof}{ {\noindent} {\rm Proof} $\rhd$}{$\lhd$}
\newtheorem{remark}{Remark}
\newtheorem{notation}{Notation}
\newcommand{\proj}[1]{\mbox{{\sf proj}$(#1)$}}
\newcommand{\hidetext}[1]{\mbox{ \ }}
\title{Complexity Estimates for Fourier-Motzkin Elimination}
\author{Rui-Juan Jing, Marc
  Moreno Maza and Delaram Talaashrafi}
\begin{document}
\maketitle

\begin{abstract}
In this paper, we propose a new method for removing all the redundant
inequalities generated by {\FM}. This method is based on  an improved
version of Balas' work and can also be used to remove all the
redundant inequalities in the input system. Moreover, our method only uses
arithmetic operations on matrices and avoids resorting to linear
programming techniques.  Algebraic complexity estimates and
experimental results show that our method outperforms alternative
approaches, in particular those based on linear programming and
simplex algorithm.
\end{abstract}

\section{Introduction}
\label{introduction}

Polyhedral sets play an important role in computational sciences.  
For instance, they are used to model, analyze, transform and schedule
for-loops of computer programs; 
we refer to the articles~\cite{Feautrier91dataflowanalysis,Feautrier:1996:APP:647429.723579,grosser.11.impact,Benabderrahmane:2010:PMM:2175462.2175484,Bondhugula:2008:PAP:1379022.1375595,Bastoul:2004:CGP:1025127.1025992,DBLP:journals/taco/VerdoolaegeJCGTC13}.
Of prime importance are the following operations
on polyhedral sets:
\begin{inparaenum}[(i)]
\item conversion between H-representation and V-representation; and
\item projection, namely {\FM} and block elimination.
 \end{inparaenum}

{\FM} is an algorithmic tool for projecting a polyhedral set onto a
linear subspace.  It was proposed independently by Joseph Fourier and
Theodore Motzkin, in 1827 and in 1936.  The original version of this
algorithm produces large amounts of redundant inequalities and has a
double exponential algebraic complexity.  Removing all these
redundancies is equivalent to giving a minimal representation of the
projection of the polyhedron.  Leonid Khachiyan explained
in~\cite{DBLP:reference/opt/Khachiyan09} how linear programming (LP)
could be used to remove all redundant inequalities, thereby reducing the
cost of {\FM} to singly exponential time; Khachiyan did not, however,
give any running time estimate.
As we shall prove in this paper, rather than using linear programming one may 
use only matrix arithmetic, increasing the theoretical and practical efficiency of {\FM} while
still producing an irredundant representation of the
projected polyhedron.

As mentioned above, the so-called {\em block elimination method} 
is another algorithmic tool to project a polyhedral set.  This method requires
enumeration of the extreme rays of a cone.  Many authors have been working
on this topic, see Nata\'lja V. Chernikova~\cite{chernikova1965algorithm}, 
Herv\'e Le Verge~\cite{le1992note}
and Komei Fududa~\cite{fukuda1996double}.  
Other algorithms for projecting polyhedral sets
remove some (but not all) redundant inequalities with
the help of extreme rays: see the work of 
David A. Kohler~\cite{kohler1967projections}.  
As observed by Jean-Louis Imbert in~\cite{imbert1993fourier}, 
the method he proposed in that paper and that of 
Sergei N. Chernikov in~\cite{chernikov1960contraction} are equivalent.
These methods are very effective in practice, but none of them can remove
all redundant inequalities generated by Fourier-Motzkin Elimination.

Egon Balas proposed in~\cite{balas1998projection} a method to overcome this
latter limitation.  
We found flaws, however, in both his construction and
its proof. A detailed account is included in Section~\ref{sec:relatedwork}.

In this paper, we show how to remove all the redundant inequalities
generated by Fourier-Motzkin Elimination based on an improved version
of Balas' work.  To be more specific, a so-called {\em {\testcone}} is
generated by solving a projection problem for a cone which only one
more inequality and one more variable than the inequality defining the
input polyhedron.  This latter projection is carried out by means of
block elimination.  This
{\initialtestcone} is used to remove all the redundant inequalities in
the input polyhedron.  Moreover, our method has a better algebraic
complexity estimate than the approaches using linear programming;
see~\cite{jing2017integerpoints,jing2017computing} for estimates of
those approaches.

For an input pointed polyhedron $Q \subseteq {\Q}^n$, 
given by a system of $m$ linear inequalities of height $h$, we show 
(see Theorem~\ref{thm:comp}) that eliminating the variables from that
system, one after another (thus performing {\FM}) can be done within
$O(m^{\frac{5n}{2}} n^{\theta + 1 + \epsilon} h^{1+\epsilon})$,
for any $\epsilon > 0$, where $\theta$ is the exponent of linear algebra.
Our algorithm is stated in 
Section~\ref{sec:MRPP}
and follows a revisited version of Balas' algorithm
presented in Section~\ref{sec:revision}.
Since the maximum number of facets of 
any standard projection of $Q$ is  $O(m^{\lfloor n/2 \rfloor})$,
our running time for {\FM} is satisfactory;
the other factors in our estimate come
from the cost of linear algebra operations
for testing redundancy.
\iflong
We have implemented the algorithms
proposed in Section~\ref{sec:MRPP}
using the BPAS library~\cite{DBLP:conf/icms/ChenCMMXX14}
publicly available at \url{www.bpaslib.org}.
\fi
We have compared our code against
other implementations of {\FM}
including the CDD library~\cite{fukudacdd}.
Our experimental results, reported in Section \ref{sec:exp},
show that our proposed method
can solve more test-cases (actually all)
that we used while the counterpart software have
failed to solve some of them.

Section \ref{sec:background} provides background materials about
polyhedral sets and polyhedral cones together with 
the original version of {\FM}.
Section \ref{sec:revision} contains 
a revisited version of Balas'
method and detailed proofs of its correctness.
Based on this,
Section \ref{sec:MRPP} presents a new algorithm
producing a {\em minimal projected representation} for a given
full-dimensional pointed
polyhedron.
Complexity results are established 
in Section \ref{sec:comp}.
In Section \ref{sec:exp} we report on our experimentation
and in Section \ref{sec:relatedwork} we discuss related work.
\ifCoRRversion
Finally, 
\Cref{sec:application} shows an application of {\FM}: solving parametric
linear programming (PLP) problems, which is a core routine
in the analysis, transformation and scheduling of for-loops
of computer programs. 

\fi

\section{Background}
\label{sec:background}

In this section, we review the basics of polyhedral geometry.
Section~\ref{sec:PolyhedralCones} is dedicated to the notions of
polyhedral sets and polyhedral cones.
Sections~\ref{sec:DDAlgorithm} and 
\ref{FM} review the double description method and 
Fourier-Motzkin elimination, which are two of the most important 
algorithms for operating on polyhedral sets.
We conclude this section with the cost model that we shall use for
complexity analysis, see Section~\ref{costmodel}.  We omit most
proofs.  For more details please refer to
\cite{Schrijver:1986:TLI:17634,terzer2009large,fukuda1996double}.  In
a sake of simplicity in the complexity analysis of the presented
algorithms, we constraint our coefficient field to the rational number
field $\Q$. However, all of the results in this paper generalize to
polyhedral sets with coefficients in the field $\R$ of real numbers.

\ifCoRRversion
\subsection{Polyhedral cones and polyhedral sets}
\label{sec:PolyhedralCones}

 \begin{notation} 
We use bold letters, e.g. $\v$, to denote vectors and 
we use capital letters, e.g. $A$, to denote matrices. 
Also, we assume that vectors are column vectors. For row 
vectors, we use the transposition notation, that is,  
$A^t$ for the transposition of a  matrix $A$.
For a matrix $A$ and an integer $k$,
$A_k$ is the row of index $k$ in $A$. Also, if $K$ is a set of integers,
$A_K$ denotes the sub-matrix of $A$ with row indices in $K$. 
 \end{notation}
 
We begin this section with the fundamental theorem of linear inequalities.

\begin{theorem}[\cite{Schrijver:1986:TLI:17634}]
Let $\a_1,\cdots,\a_m$ be a set of linearly independent vectors in
$\Q^n$. Also, let $\b$ be a vector in $\Q^n$. Then, exactly one of the
following holds:
\begin{enumerate}
\item[$(i)$] the vector $\b$ is a non-negative linear combination of  
$\a_1,\ldots,\a_m$. In other words, there exist 
positive numbers $ y_1,\ldots,y_m$ such that
we have $\b = \sum_{i=1}^m y_i \a_i$, or,
\item[$(ii)$] there exists a vector $\d \in \Q^n$, such that 
both $\d^t \b < 0$ 
and $\d^t \a_i \ge 0$ hold for all $1 \leq i \leq m$.
\end{enumerate}
\end{theorem}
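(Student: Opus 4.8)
The plan is to dispatch the easy half first and then give a fully constructive argument for the other, leaning on the linear independence hypothesis, which is exactly what makes this statement elementary rather than the full Farkas lemma. (I read condition $(i)$ with nonnegative coefficients $y_i \ge 0$, since otherwise the two alternatives would not be complementary.) Write $C = \{\sum_{i=1}^m y_i \a_i \mid y_i \ge 0\}$ for the cone generated by the $\a_i$, so that $(i)$ reads $\b \in C$.

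For the claim that $(i)$ and $(ii)$ cannot hold simultaneously: if $\b = \sum_i y_i \a_i$ with all $y_i \ge 0$ and some $\d$ satisfies $\d^t \a_i \ge 0$ for every $i$, then $\d^t \b = \sum_i y_i (\d^t \a_i) \ge 0$, contradicting $\d^t \b < 0$. Hence at most one of the two holds.

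For the main direction, I would assume $(i)$ fails and construct a $\d$ realizing $(ii)$. Let $W$ be the linear span of $\a_1, \dots, \a_m$; since the $\a_i$ are rational, $W$ is a rational subspace and orthogonal projection onto $W$ has a rational matrix. If $\b \notin W$, decompose $\b = \b_W + \b^\perp$ with $\b_W \in W$ and $\0 \ne \b^\perp \perp W$, and set $\d = -\b^\perp$: then $\d^t \a_i = 0$ for all $i$ while $\d^t \b = -\norm{\b^\perp}^2 < 0$. If instead $\b \in W$, then by linear independence there are unique rationals $y_i$ with $\b = \sum_i y_i \a_i$; since $(i)$ fails, some $y_j < 0$. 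Let $\p$ be the orthogonal projection of $\a_j$ onto the orthogonal complement of $\mathrm{span}\{\a_i \mid i \ne j\}$. Linear independence forces $\a_j \notin \mathrm{span}\{\a_i \mid i \ne j\}$, so $\p \ne \0$ and $\p$ is rational; moreover $\d = \p$ satisfies $\d^t \a_i = 0$ for $i \ne j$ and $\d^t \a_j = \norm{\p}^2 > 0$. Consequently $\d^t \b = y_j(\d^t \a_j) < 0$, which gives $(ii)$.

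The step I expect to be the crux is the membership test hidden in the second case: it is the linear independence of the $\a_i$ that lets me (a) read off membership in $C$ from the signs of \emph{uniquely} determined coordinates, and (b) produce a rational normal vector orthogonal to all generators but the offending one and strictly positive against it. Dropping independence would turn the statement into the genuine theorem of the alternative, whose difficulty concentrates in showing that a finitely generated cone is closed; here that obstacle disappears entirely, and the only routine checks that remain are the rationality of the orthogonal projections and the sign bookkeeping.
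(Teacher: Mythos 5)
Your argument is correct, and both halves check out: the exclusivity computation is the standard one, and in the existence direction the two cases ($\b$ outside the span, and $\b$ in the span with some uniquely determined coordinate $y_j<0$) together with the rationality of orthogonal projections onto rational subspaces give a valid, fully constructive $\d$. Your reading of ``positive'' in $(i)$ as ``non-negative'' is the right one; as literally written the two alternatives would not be exhaustive (take $\b=\0$).

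The paper itself supplies no proof of this theorem --- it is quoted from Schrijver's book, and the surrounding text explicitly omits proofs --- so there is nothing internal to compare against. It is worth noting, though, that your route is genuinely different from (and much lighter than) the proof of the result as it appears in Schrijver: there the vectors $\a_1,\ldots,\a_m$ are \emph{not} assumed linearly independent, alternative $(ii)$ carries an extra rank condition on the separating hyperplane, and the proof runs a simplex-like pivoting procedure whose termination (or, in other treatments, the closedness of finitely generated cones) is the real content; that is what makes the theorem strong enough to yield Farkas' lemma and the Minkowski--Weyl theorem as stated later in the section. Your proof correctly observes that the independence hypothesis added in the paper's restatement collapses all of that: membership in the cone is decided by the signs of uniquely determined coordinates, and the separating vector is produced by one orthogonal projection. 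The trade-off is that your elementary argument proves only this weakened statement and could not be substituted for the general theorem where the paper actually uses its consequences.
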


 \begin{definition}[Convex cone] 
A subset of points $C \subseteq \Q^n$ is called a {\em cone} if for
each $\x \in C$ and each real number $\lambda \geq 0$
we have $\lambda \x  \in C$.
A cone $C \subseteq \Q^n$ is called {\em convex} if for 
all $\x, \y \in C$, we have $\x + \y \in C$.
If $C \subseteq \Q^n$ is a convex cone, then
its elements are called the {\em rays} of $C$.
For two rays $\r$ and $\r'$ of $C$, we write
$\r' \simeq \r$ whenever there exists $\lambda  \geq 0$
such that we have $\r' = \lambda  \r$.
\end{definition} 
 \begin{definition}[Hyperplane]
A subset $H \subseteq \Q^n$ is called a {\em hyperplane} 
if $H = \{\x \in \Q^n \ | \ \a^t\x=0 \}$ for some
non-zero vector $\a \in \Q^n$.
\end{definition} 
 \begin{definition}[Half-space]
A {\em half-space} is a set of the 
form $\{x \in \Q^n\ | \ \a^t x \leq 0\}$
for a some vector $\a \in \Q^n$.
\end{definition} 
\begin{definition}[Polyhedral cone] 
A cone $C \subseteq \Q^n$ is a \textit{polyhedral cone} if it is the intersection of finitely many
half-spaces, that is, $C = \{x\in \Q^n \ | \ Ax \leq \0 \}$ for
some matrix $A \in \Q^{m \times n}$.
 \end{definition} 
 \begin{definition}[Finitely generated cone] 
Let $\{\x_1,\ldots,\x_m\}$ be a set of vectors in $\Q^n$. 
The {\em cone generated} by $\{\x_1,\ldots,\x_m\}$, 
denoted by $\cone(\x_1,\cdots,\x_m)$, 
is the smallest convex cone containing those vectors. 
In other words, we have 
$\cone(\x_1,\ldots,\x_m) = \{\lambda_1 \x_1 + \cdots + \lambda_m\x_m \ 
 | \ \lambda_1 \geq 0, \ldots, \lambda_m \geq 0 \}$.
A cone obtained in this way is called 
a {\em finitely generated cone}.
\end{definition} 
With the following lemma, which is a consequence of the fundamental
Theorem of linear inequalities, we can say that the two concepts of
polyhedral cones and finitely generated cones are
equivalent, see~\cite{Schrijver:1986:TLI:17634}

\begin{theorem}[Minkowski-Weyl theorem]
\label{le:minkowskithm}
A convex cone is polyhedral if and only if it is finitely generated.
\end{theorem}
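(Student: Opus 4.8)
The plan is to prove the two implications separately: I would derive the Weyl direction (finitely generated $\Rightarrow$ polyhedral) directly from the fundamental theorem of linear inequalities, and then obtain the Minkowski direction (polyhedral $\Rightarrow$ finitely generated) by a polarity argument that feeds back into the Weyl direction. For the Weyl direction, starting from $C = \cone(\x_1,\ldots,\x_m)$ and reducing to the full-dimensional case (by working inside the linear span of the generators), I would build a \emph{finite} candidate set of half-spaces out of the generators themselves: for every $(n-1)$-element linearly independent subset of $\{\x_1,\ldots,\x_m\}$, take a normal vector to the hyperplane it spans, oriented so that every generator lies in the associated closed half-space, and discard the subset if no consistent orientation exists. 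Let $D$ be the intersection of these finitely many half-spaces. The inclusion $C \subseteq D$ holds by construction, and the real content is the reverse inclusion: given $\b \notin C$, I would apply the fundamental theorem to produce a vector $\d$ with $\d^t \b < 0$ and $\d^t \x_i \geq 0$ for all $i$, and then argue that $\d$ can be taken orthogonal to $n-1$ independent generators, so that (up to sign) it is one of the chosen normals and $\b$ already violates a defining inequality of $D$.

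For the Minkowski direction I would pass to the polar cone $C^* = \{\y : \y^t \x \leq 0 \text{ for all } \x \in C\}$. Writing the polyhedral cone as $C = \{\x : A\x \leq \0\}$, every row of $A$, viewed as a vector $A_j^t$, lies in $C^*$, and conversely the fundamental theorem shows that each $\y \in C^*$ is a nonnegative combination of $A_1^t,\ldots,A_m^t$; hence $C^* = \cone(A_1^t,\ldots,A_m^t)$ is finitely generated. By the Weyl direction just established, $C^*$ is then polyhedral, say $C^* = \{\y : B\y \leq \0\}$, so its polar $(C^*)^*$ is generated by the transposed rows of $B$ and is in turn finitely generated. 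Finally I would verify the biduality $(C^*)^* = C$: the inclusion $C \subseteq (C^*)^*$ is immediate, while for the reverse any $\z \notin C$ violates some defining inequality, i.e. $A_j \z > 0$ for a row whose transpose $A_j^t$ lies in $C^*$, so $\z \notin (C^*)^*$. Thus $C = (C^*)^*$ is finitely generated.

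The hard part will be controlling finiteness in the Weyl direction: I must guarantee that whenever $\b \notin C$ the separating inequality can be chosen from the finite family of hyperplane normals attached to $(n-1)$-element independent subsets of the generators, rather than from an a priori infinite family of valid inequalities. This is exactly where the sharper, geometric form of the fundamental theorem is needed, asserting that the separating vector $\d$ may be taken orthogonal to $n-1$ linearly independent generators; reconciling this with the version stated above (which carries a linear-independence hypothesis on the $\a_i$) by restricting to a maximal independent subset is the delicate bookkeeping step. The Farkas-type inclusion $C^* \subseteq \cone(A_1^t,\ldots,A_m^t)$ in the Minkowski direction is the other place where the fundamental theorem does the essential work, again after passing to a maximal linearly independent subset of rows so that its hypotheses apply.
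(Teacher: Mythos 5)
The paper does not prove this theorem; it states it as a consequence of the fundamental theorem of linear inequalities and defers to Schrijver's book. Your proposed argument is precisely the classical proof from that reference: the Weyl direction by cutting out the cone with the finitely many hyperplanes spanned by $(n-1)$-element independent subsets of the generators, and the Minkowski direction by applying Farkas-type duality to identify $C^*$ with $\cone(A_1^t,\ldots,A_m^t)$, feeding that back into the Weyl direction, and closing with the biduality $(C^*)^*=C$. The polarity half and the biduality check are correct as you describe them.

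The one point to be careful about is the reduction you call ``delicate bookkeeping.'' The fundamental theorem as stated in this paper assumes the $\a_i$ are \emph{linearly independent} and asserts only the existence of \emph{some} separating vector $\d$; it does not assert that $\d$ can be chosen orthogonal to $n-1$ independent generators, which is exactly what you need to conclude that $\d$ already appears (up to sign and scaling) among your finite family of hyperplane normals. Restricting to a maximal independent subset of the generators does not bridge this gap: a vector $\b$ can lie in $\cone(\x_1,\ldots,\x_m)$ without being a nonnegative combination of any one fixed maximal independent subset, so the dichotomy you would obtain from the paper's version applies subset-by-subset and does not by itself produce a single separating hyperplane spanned by generators. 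What you actually need is the full strength of Schrijver's Theorem~7.1 --- both the Carath\'eodory-type conclusion that membership in the cone is witnessed by a linearly independent subfamily, and the conclusion that the separating hyperplane contains $t-1$ independent generators where $t$ is the rank of $\{\x_1,\ldots,\x_m,\b\}$ --- and the proof of that sharper statement is itself the nontrivial finite pivoting argument. Since the paper treats the fundamental theorem as a black box from the same reference, invoking the sharper form is a reasonable resolution, but you should cite it explicitly rather than present it as derivable from the weaker statement by restriction.
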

\begin{definition}[Convex polyhedron] 
A set of vectors $P \subseteq \Q^n$ is called a {\em convex polyhedron} 
if $P = \{\x \ | \ A\x \leq \b \}$,
for a matrix $A \in \Q^{m \times n}$ and a vector $\b \in \Q^m$.
Moreover, the polyhedron $P$ is called a {\em polytope}
if $P$ is bounded.
 \end{definition}
From now on, we always use the notation $P = \{\x \ | \
A\x \leq \b  \}$ to represent a polyhedron in $\Q^n$.
We call the system of linear inequalities
$\{ A \x \le \b \}$  a  {\em representation} of $P$.
\begin{definition}[Minkowski sum]
For two subsets $P$ and $Q$ of $\Q^n$, their {\em Minkowski sum},
denoted by  $P + Q$, is the subset of $\Q^n$ defined as 
$\{p+q \ | \ (p,q) \in P \times Q\}$.
\end{definition}

The following lemma, which is another consequence of the fundamental theorem of linear inequalities,
helps us to determine the relation between polytopes and polyhedra. The proof can be found in 
\cite{Schrijver:1986:TLI:17634}

\begin{lemma}[Decomposition theorem for convex polyhedra]
A subset $P$ of $\Q^n$ is a convex polyhedron if
and only if it can be written as the Minkowski sum of a finitely
generated cone and a polytope.
\end{lemma}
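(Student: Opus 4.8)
The plan is to reduce both implications to the Minkowski--Weyl theorem (Theorem~\ref{le:minkowskithm}) through the standard device of \emph{homogenization}: lifting a subset of $\Q^n$ into $\Q^{n+1}$ by attaching a ``height'' coordinate $\lambda$, and recovering the original set as the slice at height $\lambda = 1$.

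First, for the forward implication, suppose $P = \{\x \mid A\x \le \b\}$. I would introduce the cone
\[
\hat{P} = \{(\x, \lambda) \in \Q^{n+1} \mid A\x - \lambda \b \le \0, \ \lambda \ge 0\},
\]
which is polyhedral by construction, and record the key identity $\x \in P \iff (\x, 1) \in \hat{P}$. Applying Theorem~\ref{le:minkowskithm} to $\hat P$ yields finitely many generators; since the last coordinate is nonnegative throughout $\hat P$, each generator can be rescaled to have last coordinate either $1$ or $0$, producing a list of vectors $(\v_i, 1)$ and $(\w_j, 0)$. Reading off the height-$1$ slice, every $(\x,1) \in \hat P$ is a conic combination $\sum_i \mu_i (\v_i,1) + \sum_j \nu_j (\w_j, 0)$; comparing the last coordinates forces $\sum_i \mu_i = 1$, so $\x = \sum_i \mu_i \v_i + \sum_j \nu_j \w_j$ is a convex combination of the $\v_i$ plus a conic combination of the $\w_j$. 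Hence $P$ is the Minkowski sum of the bounded set $\mathrm{conv}(\v_1, \dots, \v_p)$ and the finitely generated cone $\cone(\w_1, \dots, \w_q)$; a further application of Theorem~\ref{le:minkowskithm} to $\cone((\v_i,1))$ confirms that $\mathrm{conv}(\v_1,\dots,\v_p)$ is itself a bounded polyhedron, i.e.\ a polytope.

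For the converse, suppose $P = Q + C$ with $Q$ a polytope and $C = \cone(\w_1, \dots, \w_q)$. I would first obtain a finite point description $Q = \mathrm{conv}(\v_1, \dots, \v_p)$ by homogenizing the bounded polyhedron $Q$ exactly as above and invoking Theorem~\ref{le:minkowskithm}: the boundedness of $Q$ forces every generator of its homogenization to have strictly positive last coordinate, and rescaling produces the $\v_i$. Then I would assemble the finitely generated cone
\[
K = \cone\big((\v_1,1), \dots, (\v_p,1), (\w_1,0), \dots, (\w_q,0)\big) \subseteq \Q^{n+1},
\]
which by Theorem~\ref{le:minkowskithm} is polyhedral, say $K = \{\z \in \Q^{n+1} \mid M\z \le \0\}$. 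Writing $\z = (\x,\lambda)$, the height-$1$ slice, namely the set of $\x$ with $(\x,1) \in K$, is cut out by the finitely many inequalities obtained from $M$ upon fixing $\lambda = 1$, hence is a polyhedron; and the same last-coordinate bookkeeping as before shows that this slice equals $Q + C = P$.

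The substantive point --- and the step I would treat most carefully --- is the slicing and normalization argument that makes the two directions symmetric: verifying that the last coordinate is nonnegative (respectively strictly positive) on the relevant cone so that its generators may legitimately be rescaled to heights $0$ and $1$, and that intersecting a finitely generated cone with the hyperplane $\{\lambda = 1\}$ exactly realizes ``convex hull of the height-$1$ generators plus conic hull of the height-$0$ generators.'' Everything else is a direct appeal to the equivalence of polyhedral and finitely generated cones; no new separation theorem or linear-programming machinery is required.
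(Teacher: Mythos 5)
Your argument is correct: the homogenization device, the rescaling of generators to heights $0$ and $1$, and the slice-at-height-one bookkeeping are all sound, and both directions reduce cleanly to Theorem~\ref{le:minkowskithm}. The paper itself omits the proof and defers to Schrijver, whose argument is exactly this homogenization/Minkowski--Weyl reduction, so your proposal follows essentially the same route as the cited source.
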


Another consequence of the fundamental theorem of inequalities, is the
famous Farkas lemma. This lemma has different variants.  Here we only
mention a variant from~\cite{Schrijver:1986:TLI:17634}. 

\begin{lemma}[Farkas' lemma]
\label{le:farkaslemma}
  Let $A \in \Q^{m \times n}$ be a matrix and $\b \in \Q^m$ be a
   vector.  Then, there exists a vector $\t \in \Q^n, \ \t \ge \0 $
   satisfying $ A \t = \b$ if and if $\y^t \b \ge 0$ holds for each
   vector $\y \in \Q^m$ such that we have $\y^t A \ge 0$.
\end{lemma}
A consequence of Farkas' lemma is the 
following criterion for testing  
whether an inequality $\c^t \x \le c_0$ is {\em redundant}
w.r.t. a polyhedron representation $A \x \le \b$, 
that is, whether $\c^t \x \le c_0$
is implied by $A \x \le \b$.

\begin{lemma}[Redundancy test criterion]
\label{le:redundancyinpolyhedron}
Let $\c \in \Q^n$, $c_0 \in \Q$, $A \in \Q^{m \times n}$ and $\b \in \Q^m$.
Then, the inequality $\c^t \x \le c_0$ is redundant
w.r.t. the system of 
inequalities $A \x \le \b$  if and only if there exists a
vector $\t \ge \0 $ and a number $\lambda \ge 0$ satisfying $\c^t =
\t^t A $ and $c_0 = \t^t \b + \lambda $.
\end{lemma}

\begin{definition}[Implicit equation] 
An inequality $\ \a ^t \x \leq b$
(with $\a \in \Q^n$ and $b \in \Q$) 
 is an implicit equation 
of the  inequality system $A\x \leq \b$
if $ \a ^t \x = b$ holds for all $\x \in P$.
\end{definition}
  \begin{definition}[Minimal representation]
A representation of a polyhedron is {\em minimal} if 
no inequality of that representation is
implied by the other inequalities of that representation.
 \end{definition}
 \begin{definition}[Characteristic (recession) cone of a polyhedron]
The {\em characteristic cone} of $P$ is the polyhedral cone
denoted by $\Char(P)$ and defined by
$\Char(P) := \{\y \in \Q^n \ |\ \x +\y \in P, \ \forall \x \in P\}
= \{ \y \ | \ A\y \leq \0 \}.$
\end{definition}
\begin{definition}[Linearity space and pointed polyhedron] 
The {\em linearity space} of the polyhedron $P$ 
is the linear space denoted by $\linspace(P)$ and defined as 
$\Char(P) \cap -\Char(P) = \{\y \ | \ A\y = \0 \}$,
where $-\Char(P)$ is the set of the $-\y$ for $\y \in \Char(P)$.
The polyhedron $P$ is {\em pointed} if its linearity space is
$\{  \0 \}$.
\end{definition}
\begin{lemma}[Pointed polyhedron criterion]
The polyhedron $P$ is pointed if and only if 
the matrix $A$ is full column rank.
\end{lemma}

\begin{definition}[Dimension of a polyhedron] 
The {\em dimension} of the polyhedron $P$, denoted by $\dim(P)$, is $n-r$, 
where 
$n$ is dimension\footnote{Of course, this notion
of dimension coincides with the topological one,
that is, the maximum dimension of a ball contained in $P$.}
 of the ambient space 
(that is, ${\Q}^n$) and 
$r$ is the maximum number of implicit equations defined by
linearly independent vectors. 
We say that $P$ is {\em full-dimensional}
whenever $\dim(P) = n$ holds. In another words,
$P$ is full-dimensional if and only if it does not have any implicit equations.
 \end{definition} 

  \begin{definition}[Face of a polyhedron]
A subset $F$ of the polyhedron $P$ is called a 
{\em face} of $P$ if $F$
equals 
$\{\x \in P \ | \ A_{\text{sub}} \x = \b_{\text{sub}} \}$
for a sub-matrix $A_{\text{sub}}$ of $A$ and a sub-vector $\b_{\text{sub}}$
of $\b$.
\end{definition} 
\begin{remark}
It is obvious that every face of a polyhedron is also a polyhedron.
Moreover, the intersection of two faces $F_1$
and $F_2$ of $P$
is another face $F$, which is either $F_1$, or $F_2$, 
or a face with a dimension less than $\min (\dim(F_1),\dim(F_2))$.
Note that $P$ and the empty set are faces of $P$.
\end{remark}
\begin{definition}[Facet of a polyhedron]
A face of $P$, distinct from $P$ and of maximal dimension is called a {\em facet}
of $P$.
\end{definition}
\begin{remark} 
It follows from the previous remark that 
$P$ has at least one facet and that the dimension
of any facet of $P$ is equal to $\dim(P) -1$.
When $P$ is full-dimensional, there is a one-to-one correspondence
between the inequalities in a minimal representation of $P$ and the
facets of $P$. From this latter observation, we deduce that the
minimal representation of a full dimensional polyhedron is unique up
to multiplying each of the defining inequalities by a positive
constant.
\end{remark}
\begin{definition}[Minimal face]
A non-empty face that does not contain any other face of a polyhedron is called
a {\em minimal face} of that polyhedron. 
Specifically, if the polyhedron $P$ is pointed, each
minimal face of $P$ is just a point and is called an \textit{extreme point} 
or \textit{vertex} of $P$.
 \end{definition} 

\begin{definition}[Extreme rays] 
Let $C$ be a cone such that $\dim(\linspace(C)) = t$. 
Then, a face of $C$ of dimension $t+1$
is called a {minimal proper face} of $C$. 
In the special case of a pointed cone, 
that is, whenever $t = 0$ holds,
the dimension of a minimal proper face is $1$ 
and such a face is called an {\em extreme ray }. 
We call an {\em extreme ray} of the polyhedron $P$
any extreme ray of its characteristic 
cone $\Char(P)$.
We say that two extreme rays $\r$ and $\r'$ of the polyhedron $P$
are {\em equivalent}, and denote it by $\r \simeq \r'$,
if one is a positive multiple of the other.
When we consider the set of all extreme rays
      of the polyhedron $P$ (or the polyhedral cone $C$)
     we will only consider one
     ray from each equivalence class.
\end{definition}
\begin{lemma}[Generating a cone from its extreme rays] 
\label{le:genwithextr}
A pointed cone $C$ can be generated by its extreme rays, that is,
we have
$C \ = \ \{\x \in \Q^n \ | \   ({ \exists \c \geq \0  }) \    \x = R \c\}$,
where the columns of $R$ are the extreme rays of $C$.
\end{lemma}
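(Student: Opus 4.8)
The plan is to prove the two inclusions separately, with the reverse inclusion being the substantial one. For the inclusion $\{\x \mid \exists\, \c \ge \0,\ \x = R\c\} \subseteq C$, I would simply note that each column of $R$ is an extreme ray and hence a point of $C$; since $C$ is a convex cone it is closed under non-negative linear combinations, so every $R\c$ with $\c \ge \0$ lies in $C$. This direction needs no pointedness.

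For the forward inclusion $C \subseteq \{R\c \mid \c \ge \0\}$, I would first invoke Theorem~\ref{le:minkowskithm} to guarantee that $C$ is polyhedral, so that $C = \{\x \mid A\x \le \0\}$ and $C$ has finitely many faces; in particular it has finitely many extreme rays, which are exactly its one-dimensional faces (by the definition of extreme ray for a pointed cone, where the lineality space is $\{\0\}$). The argument then proceeds by induction on $d = \dim(F_\x)$, where $F_\x$ is the unique minimal face of $C$ containing a given $\x \in C$, so that $\x$ lies in the relative interior of $F_\x$. The cases $\x = \0$ (take $\c = \0$) and $d = 1$ (then $F_\x$ is an extreme ray, so $\x$ is a non-negative multiple of the corresponding column of $R$) form the base.

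The inductive step, for $d \ge 2$, is the geometric heart of the proof and the step I expect to be the main obstacle. Writing $L = \mathrm{span}(F_\x)$ (so $\dim L = d$ since $F_\x$ is a cone), I would use pointedness of $C$ --- equivalently, the fact that a pointed cone contains no line --- to produce a linear functional $\c$ that is strictly positive on $F_\x \setminus \{\0\}$, and then choose a nonzero direction $\d \in L$ lying in the hyperplane $\c^\perp$ (possible since $\dim L = d \ge 2$). Because $\c^t(\pm\d) = 0$, neither $\d$ nor $-\d$ lies in $F_\x$, so the two rays $\x + t\d$ and $\x - t\d$ each leave $F_\x$ at a finite parameter value $t_+ > 0$ and $t_- > 0$, landing at points $\x_+ = \x + t_+\d$ and $\x_- = \x - t_-\d$ on the relative boundary of $F_\x$. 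Each of $\x_\pm$ therefore lies on a proper face of $F_\x$, hence on a face of $C$ of dimension strictly less than $d$. Since
\[
\x = \frac{t_-}{t_+ + t_-}\,\x_+ + \frac{t_+}{t_+ + t_-}\,\x_-
\]
exhibits $\x$ as a convex --- in particular non-negative --- combination of $\x_+$ and $\x_-$, and since the induction hypothesis applies to both $\x_\pm$, I would conclude that $\x$ is itself a non-negative combination of extreme rays.

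The delicate points to get right will be: verifying that the minimal face $F_\x$ really does contain $\x$ in its relative interior (so that small moves in any direction of $L$ stay in $F_\x$, giving $t_\pm > 0$); justifying that leaving $F_\x$ means hitting a lower-dimensional face rather than exiting $C$ altogether (which follows because the constraints active on all of $F_\x$ remain satisfied along the segment, while at least one further constraint of $A\x \le \0$ becomes tight at $\x_\pm$); and confirming finiteness of $t_\pm$, which is exactly where pointedness is used --- if $t_+$ were infinite then $\d$ would be a recession direction of the cone $F_\x$ and hence lie in $F_\x$, contradicting $\c^t \d = 0$. Finiteness of the face lattice then guarantees that the induction on $d$ terminates.
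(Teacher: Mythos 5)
Your argument is correct. Note, however, that the paper does not actually prove this lemma: it appears in the background section, where the authors explicitly omit proofs and defer to Schrijver, Fukuda--Prodon and Terzer. So there is no in-paper proof to compare against; what you have written is essentially the standard textbook argument (Schrijver's Theorem~8.5 style): induction on the dimension of the minimal face $F_{\x}$ containing $\x$, using that $\x$ lies in the relative interior of $F_{\x}$, that a pointed polyhedral cone admits a functional strictly positive on $C \setminus \{\0\}$ (e.g.\ $-A^t\mathbf{1}$ when $A$ has full column rank), and that the segment $\x \pm t\d$ must exit $F_{\x}$ at finite parameters into strictly lower-dimensional faces. All the delicate points you flag are the right ones and your resolutions of them are sound. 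One small slip: you say you would ``invoke Theorem~\ref{le:minkowskithm} to guarantee that $C$ is polyhedral''; Minkowski--Weyl does not guarantee polyhedrality (it equates polyhedral with finitely generated) --- rather, polyhedrality is a standing assumption of the paper (``all cones considered here are polyhedral''), and your induction genuinely needs it, since the statement as literally written would fail for non-polyhedral pointed cones with infinitely many extreme rays. With that assumption made explicit, the proof goes through.
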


\begin{remark} 
\label{rem:extremerays}
From the previous definitions and lemmas, we derive
the following observations:
\begin{enumerate}
\item the number of extreme rays of each cone is finite,
\item the set of all extreme rays is unique up to multiplication by a scalar, and,
\item all members of a cone are positive linear combination of extreme rays.
\end{enumerate}
\end{remark}

We denote by ${\sf ExtremeRays}(C)$ the set of extreme rays 
of the cone $C$. Recall that all cones considered here
are polyhedral.

The following, see~\cite{mcmullen1970maximum,terzer2009large},
is helpful in the analysis of algorithms manipulating extreme rays of
cones and polyhedra.

\begin{lemma}[Maximum number of extreme rays]
\label{le:maxextr}

Let $E(C)$ be the number of extreme rays of a polyhedral cone $C \in {\Q}^n$ 
with $m$ facets. Then, we have:
\begin{equation}
\label{eq:EC}
E(C) \leq \binom{m-\lfloor{\frac{n+1}{2}}\rfloor}{m-1} + \binom{m- \lfloor{\frac{n+2}{2}}\rfloor}{m-n} 
\leq m^{\lfloor{\frac{n}{2}}\rfloor}.
\end{equation}
\end{lemma}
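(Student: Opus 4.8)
The plan is to translate this statement about a cone in $\Q^n$ into a statement about the facets of a polytope, and then to invoke the Upper Bound Theorem of McMullen, which is exactly the cited result \cite{mcmullen1970maximum}. First I would dispose of the degenerate cases: if $C$ is not pointed, then by the definition of an extreme ray as a minimal proper face of dimension $1$ it has none, so $E(C)=0$ and the bound is trivial; thus I may assume $C$ is pointed and, after restricting to the linear span of its rays, full-dimensional. By Lemma~\ref{le:genwithextr} such a $C$ is the conical hull of its finitely many extreme rays. I would then \emph{slice}: choose a hyperplane $H$ meeting the relative interior of every extreme ray (possible precisely because $C$ is pointed) and set $P := C \cap H$. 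This $P$ is an $(n-1)$-dimensional polytope whose vertices are in bijection with the extreme rays of $C$ and whose facets are in bijection with the facets of $C$. Hence $E(C)$ equals the number of vertices of an $(n-1)$-polytope with exactly $m$ facets.

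Next I would apply polar duality together with the Upper Bound Theorem. The number of vertices of a $d$-polytope with $m$ facets equals the number of facets of its polar dual, which is a $d$-polytope with $m$ vertices. By McMullen's theorem, over all $d$-polytopes with $m$ vertices the facet count is maximized by the cyclic polytope, and the maximizing value is obtained from its $h$-vector via the Dehn--Sommerville equations and Gale's evenness condition. Specializing to $d = n-1$ and $m$ vertices, and then simplifying the resulting closed form, is what produces the two binomial terms on the left of \eqref{eq:EC}; the split into two summands reflects the two parities of $n$.

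For the second inequality $E(C)\le m^{\lfloor n/2\rfloor}$ I would first note that the \emph{naive} count fails: since each extreme ray of a full-dimensional cone in $\Q^n$ is the common zero set of $n-1$ linearly independent defining hyperplanes, one only gets $E(C)\le \binom{m}{n-1}=O(m^{n-1})$. The whole point of invoking the Upper Bound Theorem is that it improves the exponent from $n-1$ down to $\lfloor n/2\rfloor$. To pass from the binomial sum to the clean monomial bound, I would rewrite each coefficient $\binom{a}{b}$ (whose lower index $b$ is of the form $m-n+O(1)$) as $\binom{a}{a-b}$, so that the lower index becomes $\approx \lfloor n/2\rfloor$, then use $\binom{a}{k}\le a^{k}/k! \le m^{k}$ and absorb both summands and the factorial factors into the bound.

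The main obstacle I anticipate is the floor/ceiling bookkeeping in the second step: one must match the facet count of the $(n-1)$-dimensional cyclic polytope to the precise form written in \eqref{eq:EC}, checking that both the even and odd cases of $n$ collapse exactly to the displayed pair of binomial coefficients, and that the slicing correspondence has correctly shifted the dimension from $n$ to $n-1$. Everything else --- the reduction to the pointed full-dimensional case, the vertex/facet bijection under slicing, and the final elementary estimate of the binomial sum --- is routine once the Upper Bound Theorem is in hand.
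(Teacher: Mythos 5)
Your outline is correct, and it is the standard derivation: reduce to a pointed, full-dimensional cone, take a bounded cross-section to turn extreme rays into vertices and facets into facets of an $(n-1)$-polytope, dualize, and invoke McMullen's Upper Bound Theorem; the final estimate $\binom{a}{b}=\binom{a}{a-b}\le a^{a-b}\le m^{\lfloor n/2\rfloor}$ is routine. There is, however, nothing in the paper to compare against: the authors state Lemma~\ref{le:maxextr} without proof, citing McMullen and Terzer, so your argument simply supplies the proof they delegate to the literature. One warning for your ``floor/ceiling bookkeeping'' step: the left-hand bound as printed in \eqref{eq:EC} appears to be garbled --- with lower indices $m-1$ and $m-n$ the first binomial coefficient vanishes for $n\ge 3$ under the usual convention, whereas the dual Upper Bound Theorem for an $(n-1)$-polytope with $m$ facets gives lower indices $m-n+1$ --- so you should not expect your careful computation to reproduce the displayed expression literally; the load-bearing content, and the only part the paper uses, is the inequality $E(C)\le m^{\lfloor n/2\rfloor}$, which your argument does establish.
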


From Remark~\ref{rem:extremerays}, it appears that extreme rays are
important characteristics of polyhedral cones.  Therefore, two
algorithms have been developed in~\cite{fukuda1996double} to check
whether a member of a cone is an extreme ray or not.  For explaining
these algorithms, we need the following definition.

 \begin{definition}[Zero set of a cone] 
For a cone $C = \{\x \in \Q^n \ | \ A \x \leq \0 \}$ and $\t \in C$,
we define the {\em zero set} $\zeta_A(\t)$ as 
the set of row indices $i$ such that $A_i \t = 0$, where $A_i$ is the
$i$-th row of $A$.
For simplicity, we use $\zeta(\t)$ instead of $\zeta_A(\t)$ when there
is no ambiguity.
 \end{definition}
Consider a cone $C = \{\x \in \Q^n \ | \ A^\prime\x = \0 , \
A^{\prime \prime}\x \leq \0 \}$ where $A^\prime$ and
$A^{\prime \prime}$ are two matrices such that the system
$A^{\prime \prime}\x \leq \0 \ $ has no implicit equations.  
The proofs of the following lemmas are straightforward and can be
found in \cite{fukuda1996double} and \cite{terzer2009large}.
\begin{lemma}[Algebraic test for extreme rays]
\label{le:algetest} 
Let $\r \in C$. Then, the ray
$\r$ is an extreme ray of $C$ if and only if
we have
$\mathrm{rank}\left(\left[\begin{aligned} & A^{\prime} \\ & A_{\zeta(r)}^{\prime \prime} \end{aligned} \right]\right) = n - 1$.
 \end{lemma}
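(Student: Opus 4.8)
The plan is to reduce the extreme-ray property of $\r$ to a statement about the dimension of the smallest face of $C$ containing it, and then to express that dimension as a rank. I would write $B := \left[\begin{aligned} & A^{\prime} \\ & A^{\prime\prime}_{\zeta(\r)}\end{aligned}\right]$ for the matrix appearing in the statement and set $F := \{\x \in C \mid A^{\prime\prime}_i\x = 0 \text{ for all } i \in \zeta(\r)\}$. By the definition of a face, $F$ is a face of $C$, obtained by turning the inequalities indexed by $\zeta(\r)$ into equalities; moreover it is the \emph{smallest} face containing $\r$, because any defining equality of a face through $\r$ can only come from an inequality active at $\r$, that is, from an index in $\zeta(\r)$.

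The heart of the argument is the identity $\dim(F) = n - \rank{B}$, which I would obtain by showing that the linear span of $F$ equals $\ker(B)$. The inclusion $\mathrm{span}(F) \subseteq \ker(B)$ is immediate, since every $\x \in F$ satisfies $B\x = \0$. For the reverse inclusion the decisive observation is that $\r$ lies in the relative interior of $F$: by the very definition of the zero set, $A^{\prime\prime}_j\r < 0$ holds \emph{strictly} for every $j \notin \zeta(\r)$. Hence, given any $\v \in \ker(B)$, the perturbed points $\r \pm \epsilon\v$ still satisfy $B(\r \pm \epsilon\v) = \0$ and, for $\epsilon > 0$ small enough, $A^{\prime\prime}_j(\r \pm \epsilon\v) < 0$ for all $j \notin \zeta(\r)$; thus $\r \pm \epsilon\v \in F$, and writing $\v$ as a multiple of their difference shows $\v \in \mathrm{span}(F)$. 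Therefore $\mathrm{span}(F) = \ker(B)$ and $\dim(F) = \dim(\ker(B)) = n - \rank{B}$.

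To finish, I would invoke the definition of an extreme ray. Extreme rays are considered only for pointed cones, so $\dim(\linspace(C)) = 0$ and a minimal proper face is exactly a one-dimensional face. Since $F$ is the smallest face containing the nonzero ray $\r$ and is itself a pointed cone (being a face of the pointed cone $C$), the equality $\dim(F) = 1$ forces $F = \{\lambda\r \mid \lambda \ge 0\}$, i.e. $\r$ spans an extreme ray; conversely, if $\r$ lies on an extreme ray then that ray is the minimal face through $\r$, so $\dim(F) = 1$. Combining this equivalence with the dimension formula shows that $\r$ is an extreme ray if and only if $\dim(F) = 1$, equivalently $\rank{B} = n - 1$, which is the claim.

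I expect the only genuinely delicate point to be the relative-interior step, namely that $\r$ satisfies every inactive inequality strictly and that this lets the perturbations $\r \pm \epsilon\v$ stay inside $F$; once $\mathrm{span}(F) = \ker(B)$ is established, the dimension count and the specialization to one-dimensional faces are routine consequences of the definitions recalled above.
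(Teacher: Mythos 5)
Your argument is correct: identifying the smallest face $F$ through $\r$, showing $\mathrm{span}(F)=\ker(B)$ via the relative-interior perturbation $\r\pm\epsilon\v$, and concluding $\dim(F)=n-\rank{B}$ is exactly the standard proof. The paper itself omits the proof of this lemma, deferring to Fukuda--Prodon and Terzer, and your reasoning coincides with the argument given in those references, so there is nothing to flag beyond the (harmless) implicit assumptions that $\r\neq\0$ and that $C$ is pointed, the latter being built into the paper's definition of an extreme ray.
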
 
\begin{lemma}[Combinatorial test for extreme rays]
\label{le:combtest}
Let  $\r \in C$. Then, the ray 
$\r$ is an extreme ray of $C$ if and only if for any ray $\r'$ of
$C$ such that $\zeta(\r) \subseteq \zeta(\r')$ holds
we have $\r' \simeq \r$.
\end{lemma}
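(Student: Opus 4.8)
The plan is to derive the combinatorial test directly from the algebraic test of \Cref{le:algetest}. Write $M := \begin{bmatrix} A^{\prime} \\ A^{\prime\prime}_{\zeta(\r)}\end{bmatrix}$ for the matrix formed by the equality rows together with the inequality rows active at $\r$. Since $A^{\prime}\r = \0$ and $A^{\prime\prime}_i\r = 0$ for every $i \in \zeta(\r)$, the ray $\r$ lies in the kernel of $M$; hence $\mathrm{rank}(M) \le n-1$ always, and by \Cref{le:algetest} the ray $\r$ is extreme if and only if $\mathrm{rank}(M) = n-1$, that is, if and only if $\ker M = \Q\,\r$ is one-dimensional. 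The first step I would record is a reformulation of the hypothesis of the statement: for a ray $\r'$ of $C$, the inclusion $\zeta(\r) \subseteq \zeta(\r')$ holds exactly when $M\r' = \0$. Indeed, $\r' \in C$ already forces $A^{\prime}\r' = \0$, so $\zeta(\r) \subseteq \zeta(\r')$ is equivalent to $A^{\prime\prime}_i\r' = 0$ for all $i \in \zeta(\r)$, and together this is precisely $M\r' = \0$. Thus the rays $\r'$ with $\zeta(\r) \subseteq \zeta(\r')$ are exactly the nonzero elements of $C \cap \ker M$.

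For the forward direction, assume $\r$ is an extreme ray, so $\ker M = \Q\,\r$. Any ray $\r'$ with $\zeta(\r) \subseteq \zeta(\r')$ lies in $C \cap \ker M \subseteq \Q\,\r$, whence $\r' = \lambda\r$ for some scalar $\lambda$. It remains to check $\lambda \ge 0$, so that $\r' \simeq \r$. If some inequality row $i \notin \zeta(\r)$ exists, then $A^{\prime\prime}_i\r < 0$ strictly, while $\r' = \lambda\r \in C$ requires $\lambda A^{\prime\prime}_i\r \le 0$, forcing $\lambda \ge 0$; the remaining case, in which $\zeta(\r)$ contains every inequality row, cannot occur since it would place the nonzero vector $\r$ in $\linspace(C)$, contradicting the pointedness implicit in the notion of extreme ray. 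Hence $\r' \simeq \r$.

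For the converse I argue by contraposition: if $\r$ is not an extreme ray, then $\dim\ker M \ge 2$, and I produce a ray violating the combinatorial condition. I extend $\{\r\}$ to a basis of $\ker M$ and pick a basis vector $\v$ not proportional to $\r$. Setting $\r' := \r + \epsilon\,\v$ for a small rational $\epsilon > 0$, I get $M\r' = \0$, so $A^{\prime}\r' = \0$ and $A^{\prime\prime}_i\r' = 0$ for $i \in \zeta(\r)$; for the finitely many rows $i \notin \zeta(\r)$ we have $A^{\prime\prime}_i\r < 0$, so for $\epsilon$ small enough $A^{\prime\prime}_i\r' < 0$ still holds. Thus $\r'$ is a nonzero ray of $C$ with $\zeta(\r) \subseteq \zeta(\r')$, yet $\r'$ is not a scalar multiple of $\r$, so $\r' \not\simeq \r$, contradicting the combinatorial condition.

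The routine parts are the equivalence $\zeta(\r) \subseteq \zeta(\r') \Leftrightarrow M\r' = \0$ and the appeal to \Cref{le:algetest}. The step demanding the most care is the perturbation $\r + \epsilon\,\v$ in the converse: one must choose $\epsilon$ uniformly small over all inactive inequality rows to keep $\r'$ feasible while guaranteeing that it is not proportional to $\r$. The sign bookkeeping in the forward direction (concluding $\lambda \ge 0$ rather than merely $\r' \in \Q\,\r$) is the other point where the pointedness of $C$ is genuinely used.
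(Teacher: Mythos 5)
Your proof is correct. Note that the paper itself does not prove this lemma: it states that the proofs of Lemmas~\ref{le:algetest} and \ref{le:combtest} are ``straightforward'' and defers to Fukuda--Prodon and Terzer, so there is no in-paper argument to compare against. Your derivation of the combinatorial test from the algebraic test via the kernel of $M = \bigl[\begin{smallmatrix} A' \\ A''_{\zeta(\r)}\end{smallmatrix}\bigr]$ --- reformulating $\zeta(\r) \subseteq \zeta(\r')$ as $M\r' = \0$, using $\dim\ker M = 1$ for the forward direction, and perturbing along a second kernel vector for the converse --- is the standard route and all the steps check out, including the two places that need care: the sign argument showing $\lambda \ge 0$ (which genuinely uses that some inequality is inactive at $\r$, with the all-active case excluded by pointedness) and the uniform choice of $\epsilon$ over the finitely many inactive rows. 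The only thing you leave implicit is that $\r \neq \0$ (needed both for $\ker M \supseteq \Q\,\r$ to force $\dim \ker M = 1$ and for extending $\{\r\}$ to a basis of $\ker M$); since the paper's definition of ``ray'' formally includes the zero vector, a one-line remark excluding $\r = \0$ would make the argument airtight, but this is a degeneracy of the statement rather than a flaw in your proof.
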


\begin{definition}[Polar cone] 
\label{def:polarcone} 
For the given polyhedral cone $C \subseteq \Q^n$, the {\em polar cone} induced 
by $C$ is denoted $C^*$ and given by:
 \begin{center} 
$C^* = \{\y \in \Q^n \ | \ \y^t \x \leq \0 , \forall \x \in C \}.$
 \end{center} 
\end{definition}
The following lemma shows an important property of the polar cone
of a polyhedral cone. The proof can be found in \cite{Schrijver:1986:TLI:17634}.
 \begin{lemma}[Polarity property]
\label{le:polarconeproperty} 
For a given cone $C \in \Q^n$, there is a one-to-one correspondence
between the faces of $C$ of dimension $k$ and the faces of $C^*$ of dimension
$n-k$. In particular, there is a one-to-one correspondence between
the facets of $C$ and the extreme rays of $C^*$.
 \end{lemma}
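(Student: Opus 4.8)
The plan is to exhibit an explicit inclusion-reversing bijection between the faces of $C$ and those of $C^*$, the so-called conjugate-face map, and to read off the dimension count and the facet/extreme-ray statement from a concrete description of this map. First I would fix a representation $C = \{\x \in \Q^n \mid A\x \le \0\}$ whose rows are $\a_1^t, \ldots, \a_m^t$. Using Farkas' lemma (\Cref{le:farkaslemma}, equivalently the redundancy criterion \Cref{le:redundancyinpolyhedron} with $c_0 = 0$ and $\b = \0$), one checks that an inequality $\y^t\x \le 0$ is valid on $C$ exactly when $\y$ is a nonnegative combination of the $\a_i$; hence $C^* = \cone(\a_1, \ldots, \a_m)$. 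In particular $C^*$ is finitely generated, so polyhedral by \Cref{le:minkowskithm}, and polarizing once more gives $C^{**} = \{\x \mid \a_i^t \x \le 0 \text{ for all } i\} = C$. For a face $F$ of $C$ I would then define its conjugate $F^\diamond := \{\y \in C^* \mid \y^t\x = 0 \text{ for all } \x \in F\}$, and symmetrically for faces of $C^*$.

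The heart of the argument is an explicit description of $F^\diamond$. I would pick a point $\x_0$ in the relative interior of $F$ and set $I := \zeta(\x_0)$; since $\x_0$ lies in the relative interior, $I$ is exactly the set of indices $i$ with $\a_i^t\x = 0$ for all $\x \in F$, while $\a_j^t\x_0 < 0$ strictly for every $j \notin I$. Two facts then follow. First, $\lin(F) = \ker(A_I)$: moving from $\x_0$ in any direction of $\ker(A_I)$ keeps the slack constraints strictly negative, hence stays in $F$ for a small step, so $F$ is full-dimensional inside the subspace $\ker(A_I)$ and $\dim(F) = n - \rank{A_I}$. Second, $F^\diamond = \cone(\a_i : i \in I)$: the inclusion $\supseteq$ is immediate because $\a_i^t\x = 0$ on $F$ for $i \in I$; for $\subseteq$, write $\y \in F^\diamond \subseteq C^*$ as $\y = \sum_i \lambda_i \a_i$ with $\lambda_i \ge 0$, and observe that $0 = \y^t\x_0 = \sum_{j \notin I}\lambda_j(\a_j^t\x_0)$ forces $\lambda_j = 0$ for each $j \notin I$, since every term is nonpositive with $\a_j^t\x_0 < 0$. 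This description shows at once that $F^\diamond$ is the face of $C^*$ exposed by $\x_0 \in (C^*)^*$, that the map is inclusion-reversing (a smaller face has a larger tight set $I$), and that it is an involution: $F^{\diamond\diamond} = \{\x \in C \mid \a_i^t\x = 0 \text{ for } i \in I\} = F$, the last equality holding because $I$ is the full tight set of $F$. Hence $F \mapsto F^\diamond$ is the claimed bijection.

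For the dimension count I would use that a finitely generated cone has the same dimension as the linear hull of its generators, so $\dim(F^\diamond) = \dim(\lin(\a_i : i \in I)) = \rank{A_I}$, whence $\dim(F) + \dim(F^\diamond) = (n - \rank{A_I}) + \rank{A_I} = n$; this is the asserted $k \leftrightarrow n-k$ correspondence. The particular statement is the specialization to $k = \dim(C)-1$: when $C$ is full-dimensional a facet $F$ has $\dim(F) = n-1$, so $\dim(F^\diamond) = 1$ and $F^\diamond$ is an extreme ray of the (pointed) cone $C^*$, and conversely, yielding the bijection between facets of $C$ and extreme rays of $C^*$. I expect the main obstacle to be the relative-interior step: everything hinges on producing a point of $F$ at which precisely the constraints in $I$ are tight and all others strictly slack, and then exploiting this strict slackness both to pin down $\lin(F) = \ker(A_I)$ and to annihilate the coefficients $\lambda_j$, $j \notin I$, in the expansion of $\y \in F^\diamond$. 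The remaining ingredients — that a generated cone is full-dimensional in the span of its generators, and the standard equivalence, for polyhedra, between exposed faces and faces in the sense of the definition used here — are routine.
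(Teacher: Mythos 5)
The paper offers no proof of \Cref{le:polarconeproperty}; it simply cites Schrijver. Your argument is correct and is essentially the standard conjugate-face proof found there: identify $C^{*}=\cone(\a_1,\dots,\a_m)$ via the homogeneous Farkas lemma, describe the conjugate $F^{\diamond}$ of a face $F$ with implicit-equality set $I$ as $\cone(\a_i : i\in I)$ by exploiting a relative-interior point, and read off $\dim F+\dim F^{\diamond}=(n-\mathrm{rank}\,A_I)+\mathrm{rank}\,A_I=n$ together with the involution $F^{\diamond\diamond}=F$. The one point worth tightening is the final specialization: you establish the facet/extreme-ray correspondence only for full-dimensional $C$, whereas the lemma is stated for an arbitrary cone; in general a facet has dimension $\dim C-1$ and an extreme ray of $C^{*}$ has dimension $\dim\linspace(C^{*})+1$, and these still match under $k\mapsto n-k$ because your own bijection applied to the maximal face $C$ gives $C^{\diamond}=\linspace(C^{*})$, hence $\dim C+\dim\linspace(C^{*})=n$.
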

Each polyhedron $P$ can be embedded in a higher-dimensional cone,
called the homogenized cone associated with $P$.

\begin{definition}[Homogenized cone of a polyhedron]
The {\em homogenized cone} 
of the polyhedron $P = \{\x \in \Q ^ n \ | \ A\x \leq \b \}$
is denoted  by $\hom(P)$ and defined by:
\begin{center} 
$\hom(P) = \{(\x, \xl) \in \Q^{n + 1} \ | \ C [\x^t,\xl]^t \leq 0\}$,
 \end{center}
where
\[
C = 
\begin{bmatrix}
A & -\b \\
\0 ^t & -1
\end{bmatrix}
\] 
is an $(m+1) \times (n+1)$-matrix, if $A$ is an $(m \times n)$-matrix.
 \end{definition}

\begin{lemma}[H-representation correspondence]
  \label{le:homfacet}
An inequality $A_i \x \leq b_i $ is redundant in $P$
if and only if the corresponding inequality $A_i \x - b_i \xl \leq 0$
is redundant in $\hom(P)$.
\end{lemma}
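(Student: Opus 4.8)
The plan is to reduce both redundancy statements to the algebraic criterion of Lemma~\ref{le:redundancyinpolyhedron} and to observe that the two resulting certificate systems are literally the same. Throughout, write $K = \{1,\ldots,m\}\setminus\{i\}$, so that $A_K\x \le \b_K$ is the system obtained from $A\x \le \b$ by deleting the $i$-th inequality.

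First I would treat the polyhedral side. By definition $A_i\x \le b_i$ is redundant in $P$ precisely when it is implied by $A_K\x \le \b_K$. By Lemma~\ref{le:redundancyinpolyhedron}, this holds if and only if there exist $\t \ge \0$ and $\lambda \ge 0$ with
\begin{equation}
A_i = \t^t A_K \qquad\text{and}\qquad b_i = \t^t\b_K + \lambda . \label{eq:polyred}
\end{equation}

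Next I would treat the cone side. Recall that $\hom(P)$ is defined by the homogeneous system $C\z \le \0$, with $\z = (\x,\xl)$, whose rows are the vectors $[A_j,\,-b_j]$ for $1 \le j \le m$ together with the extra row $[\0^t,\,-1]$ (which encodes $\xl \ge 0$). The inequality under consideration is the $i$-th row, namely $[A_i,\,-b_i]\,\z \le 0$. Applying Lemma~\ref{le:redundancyinpolyhedron} to this homogeneous system, and noting that both the right-hand side of the system and the constant $c_0$ of the tested inequality are zero, the additive slack produced by the criterion is forced to vanish; hence redundancy is equivalent to $[A_i,\,-b_i]$ being a nonnegative combination of the remaining rows of $C$. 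Writing $t_j \ge 0$ for the coefficient of $[A_j,\,-b_j]$, $j \in K$, and $\mu \ge 0$ for the coefficient of $[\0^t,\,-1]$, this reads
\begin{equation}
[A_i,\,-b_i] \;=\; \sum_{j\in K} t_j\,[A_j,\,-b_j] \;+\; \mu\,[\0^t,\,-1], \label{eq:coneid}
\end{equation}
which, split into its first $n$ coordinates and its last coordinate, is exactly $A_i = \t^t A_K$ and $b_i = \t^t\b_K + \mu$.

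Finally I would compare \eqref{eq:polyred} with \eqref{eq:coneid}: the two systems coincide once the affine slack $\lambda$ is identified with the coefficient $\mu$ of the $\xl \ge 0$ row. Thus a Farkas certificate on either side produces one on the other, and the equivalence follows. I expect the only delicate point to be this bookkeeping of the extra row: one must verify that the homogeneous form of the criterion indeed forces its own slack to vanish, so that the nonnegative degree of freedom carried by the constraint $\xl \ge 0$ plays exactly the role of the affine slack $\lambda$ available in $P$. Everything else is a routine matching of the two nonnegative combinations.
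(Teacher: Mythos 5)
Your argument is correct. The paper itself states Lemma~\ref{le:homfacet} without proof, so there is nothing to compare against; your route --- applying the redundancy test criterion (Lemma~\ref{le:redundancyinpolyhedron}) to both systems and observing that the homogeneous version forces its slack to vanish, so that the coefficient $\mu$ of the row $[\0^t,-1]$ plays exactly the role of the affine slack $\lambda$ --- is the natural one and the certificate matching is carried out correctly. The only point worth noting is that the criterion of Lemma~\ref{le:redundancyinpolyhedron} presupposes feasibility of the implying system; this is harmless here since the paper works with non-empty polyhedra (so $A_K\x\le\b_K$ is feasible) and the homogeneous subsystem always contains the origin.
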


\begin{theorem}[Extreme rays of the homogenized cone]
\label{le:homextr}
Every extreme ray of the homogenized cone $\hom(P)$
associated with the polyhedron $P$ 
is either of the form
$(\x,0)$ where $\x$ is an extreme ray of $P$, or $(\x,1)$ where $\x$ is an
extreme point of $P$.  
\end{theorem}

\else

\subsection{Polyhedral cones and polyhedra}
\label{sec:PolyhedralCones}

We use bold letters, e.g. $\v$, to denote vectors and 
we use capital letters, e.g. $A$, to denote matrices. 
Also, we assume that vectors are column vectors. For row 
vectors, we use the transposition notation, that is,  
$A^t$ for the transposition of  matrix $A$.
For a matrix $A$ and an integer $k$,
$A_k$ is the row of index $k$ in $A$. Also, if $K$ is a set of integers,
$A_K$ denotes the sub-matrix of $A$ with row indices in $K$. 

\noindent {\bf Polyhedral cone.} A subset of points $C \subseteq \Q^n$ is called a {\em cone} if for
each $\x \in C$ and each real number $\lambda \geq 0$
we have $\lambda \x  \in C$.
A cone $C \subseteq \Q^n$ is called {\em convex} if for 
all $\x, \y \in C$, we have $\x + \y \in C$.
If $C \subseteq \Q^n$ is a convex cone, then
its elements are called the {\em rays} of $C$.
For two rays $\r$ and $\r'$ of $C$, we write
$\r' \simeq \r$ whenever there exists $\lambda  \geq 0$
such that we have $\r' = \lambda  \r$.
\iflong
Let $H$ be a subset of $\Q^n$. It is called a hyperplane if $H = \{\x \in \Q^n \ | \ \a^t\x=0 \}$ for some
non-zero vector $\a \in \Q^n$.
A {\em half-space} is a set of the 
form $\{x \in \Q^n\ | \ \a^t x \leq 0\}$
for a some vector $\a \in \Q^n$.
\fi
A cone $C \subseteq \Q^n$ is a \textit{polyhedral cone} if it is the intersection of finitely many
half-spaces, that is, $C = \{x\in \Q^n \ | \ Ax \leq \0 \}$ for
some matrix $A \in \Q^{m \times n}$.
Let $\{\x_1,\ldots,\x_m\}$ be a set of vectors in $\Q^n$. 
The {\em cone generated} by $\{\x_1,\ldots,\x_m\}$, denoted by $\cone(\x_1,\cdots,\x_m)$, 
is the smallest convex cone containing those vectors. 
In other words, we have 
$\cone(\x_1,\ldots,\x_m) = \{\lambda_1 \x_1 + \cdots + \lambda_m\x_m \ 
 | \ \lambda_1 \geq 0, \ldots, \lambda_m \geq 0 \}$.
A cone obtained in this way is called 
a {\em finitely generated cone}.

\noindent {\bf Polyhedron.}
A set of vectors $P \subseteq \Q^n$ is called a {\em convex polyhedron} 
if $P = \{\x \ | \ A\x \leq \b \}$,
for a matrix $A \in \Q^{m \times n}$ and a vector $\b \in \Q^m$.
Moreover, the polyhedron $P$ is called a {\em polytope}
if $P$ is bounded.
From now on, we always use the notation $P = \{\x \ | \
A\x \leq \b  \}$ to represent a polyhedron in $\Q^n$.
The system of linear inequalities
$\{ A \x \le \b \}$ is called a  {\em representation} of $P$.
We say an inequality $\c^t \x \le c_0$ is {\em redundant}
w.r.t. a polyhedron representation $A \x \le \b$ if it
is implied by $A \x \le \b$.
A representation of a polyhedron is {\em minimal} if 
no inequality of that representation is
implied by the other inequalities of that representation.
To obtain a minimal representation for the polyhedron $P$, we need to
remove all the redundant inequalities in its representation.
This requires the famous Farkas' lemma.
Since it has many different variants, here we only mention
a variant from~\cite{Schrijver:1986:TLI:17634}, which is applicable in
the next algorithms.
\begin{lemma}[Farkas' lemma]
\label{le:farkaslemma}
  Let $A \in \Q^{m \times n}$ be a matrix and $\b \in \Q^m$ be a
   vector.  Then, there exists a vector $\t \in \Q^n, \ \t \ge \0 $
   satisfying $ A \t = \b$ if and if $\y^t \b \ge 0$ holds for each
   vector $\y \in \Q^m$ such that we have $\y^t A \ge 0$.
\end{lemma}
A consequence of Farkas' lemma is the 
following criterion for testing  
whether an inequality $\c^t \x \le c_0$ is redundant
w.r.t. a polyhedron representation $A \x \le \b$.
\begin{lemma}[Redundancy test criterion]
\label{le:redundancyinpolyhedron}
Let $\c \in \Q^n$, $c_0 \in \Q$, $A \in \Q^{m \times n}$ and $\b \in \Q^m$.
Then, the inequality $\c^t \x \le c_0$ is redundant
w.r.t. the system of 
inequalities $A \x \le \b$  if and only if there exists a
vector $\t \ge \0 $ and a number $\lambda \ge 0$ satisfying $\c^t =
\t^t A $ and $c_0 = \t^t \b + \lambda $.
\end{lemma}

\noindent {\bf Characteristic Cone and Pointed Polyhedron.}The {\em characteristic cone} of $P$ is the polyhedral cone
denoted by $\Char(P)$ and defined by
$\Char(P) = \{\y \in \Q^n \ |\ \x +\y \in P, \ \forall \x \in P\} = \{ \y \ | \ A\y \leq \0 \}.$
The {\em linearity space} of the polyhedron $P$ 
is the linear space denoted by $\linspace(P)$ and defined as 
$\Char(P) \cap -\Char(P) = \{\y \ | \ A\y = \0 \}$,
where $-\Char(P)$ is the set of the $-\y$ for $\y \in \Char(P)$.
The polyhedron $P$ is {\em pointed} if its linearity space is
$\{  \0 \}$.

\begin{lemma}[Pointed polyhedron criterion]
The polyhedron $P$ is pointed if and only if 
the matrix $A$ is full column rank.
\end{lemma}


\noindent {\bf Extreme point and extreme ray.}
The {\em dimension} of the polyhedron $P$, denoted by $\dim(P)$, is
the maximum number of linearly independent vectors in $P$.
We say that $P$ is {\em full-dimensional}
whenever $\dim(P) = n$ holds.
An inequality $\ \a ^t \x \leq b$
(with $\a \in \Q^n$ and $b \in \Q$) 
 is an {\em implicit equation} 
of the  inequality system $A\x \leq \b$
if $ \a ^t \x = b$ holds for all $\x \in P$.
Then,
$P$ is full-dimensional if and only if it does not have any implicit
equations.
A subset $F$ of the polyhedron $P$ is called a 
{\em face} of $P$ if $F$
equals 
$\{\x \in P \ | \ A_{\text{sub}} \x = \b_{\text{sub}} \}$
for a sub-matrix $A_{\text{sub}}$ of $A$ and a sub-vector $\b_{\text{sub}}$
of $\b$.
A face of $P$, distinct from $P$ and of maximum dimension is called a {\em facet}
of $P$.
A non-empty face that does not contain any other face of a polyhedron is called
a {\em minimal face} of that polyhedron. 
Specifically, if the polyhedron $P$ is pointed, each
minimal face of $P$ is just a point and is called an \textit{extreme point} 
or \textit{vertex} of $P$.
Let $C$ be a cone such that $\dim(\linspace(C)) = t$. 
Then, a face of $C$ of dimension $t+1$
is called a {minimal proper face} of $C$. 
In the special case of a pointed cone, 
that is, whenever $t = 0$ holds,
the dimension of a minimal proper face is $1$ 
and such a face is called an {\em extreme ray}. 
We call an {\em extreme ray} of the polyhedron $P$
any extreme ray of its characteristic 
cone $\Char(P)$.
We say that two extreme rays $\r$ and $\r'$ of the polyhedron $P$
are {\em equivalent}, and denote it by $\r \simeq \r'$,
if one is a positive multiple of the other.
When we consider the set of all extreme rays
      of the polyhedron $P$ (or the polyhedral cone $C$)
     we will only consider one
     ray from each equivalence class.
\label{le:genwithextr}
A pointed cone $C$ can be generated by its extreme rays, that is,
we have
$C \ = \ \{\x \in \Q^n \ | \   ({ \exists \c \geq \0  }) \    \x = R \c\}$,
where the columns of $R$ are the extreme rays of $C$.
We denote by ${\sf ExtremeRays}(C)$ the set of extreme rays 
of the cone $C$. Recall that all cones considered here
are polyhedral.
The following, see~\cite{mcmullen1970maximum,terzer2009large},
is helpful in the analysis of algorithms manipulating extreme rays of
cones and polyhedra.
Let $E(C)$ be the number of extreme rays of a polyhedral cone $C \in {\Q}^n$ 
with $m$ facets. Then, we have:
\begin{equation}
\label{eq:EC}
E(C) \leq \binom{m-\lfloor{\frac{n+1}{2}}\rfloor}{m-1} + \binom{m- \lfloor{\frac{n+2}{2}}\rfloor}{m-n} 
\leq m^{\lfloor{\frac{n}{2}}\rfloor}.
\end{equation}
\noindent{{\bf Algebraic test of (adjacent) extreme rays.}}
Given a cone $C = \{\x \in \Q^n \ | \ A \x \leq \0 \}$ and $\t \in C$,
we define the {\em zero set} $\zeta_A(\t)$ as 
the set of row indices $i$ such that $A_i \t = 0$, where $A_i$ is the
$i$-th row of $A$.
For simplicity, we use $\zeta(\t)$ instead of $\zeta_A(\t)$ when there
is no ambiguity.
\iflong
Consider a cone $C = \{\x \in \Q^n \ | \ A^\prime\x = \0 , \
A^{\prime \prime}\x \leq \0 \}$ where $A^\prime$ and
$A^{\prime \prime}$ are two matrices such that the system
$A^{\prime \prime}\x \leq \0 \ $ has no implicit equations.
\fi
The proof of the following, which  we have so-called the {\em algebraic test}, can be found in \cite{fukuda1996double}:
Let $\r \in C$. Then, the ray
$\r$ is an extreme ray of $C$ if and only if
we have
\iflong
$\mathrm{rank}\left(\left[\begin{aligned} & A^{\prime} \\ &
A_{\zeta(r)}^{\prime \prime} \end{aligned} \right]\right) = n - 1$.
\else
$\rank{A_{\zeta(r)}} = n-1$.
\fi
Two distinct extreme rays $\r$ and $\r'$ of the polyhedral cone
$C$ are called \textit{adjacent} if they
span a $2$-dimensional face of $C$. 
From~\cite{fukuda1996double}, we have:
Two distinct extreme rays, $\r$ and $\r'$, of $C$
are adjacent if and only  if
$\rank{A_{\zeta(\r) \cap \zeta(\r')}} = n-2$ holds.

\noindent {\bf Polar cone.} Given a polyhedral cone $C \subseteq \Q^n$, the {\em polar cone} induced 
by $C$, denoted by $C^*$, is defines as:
$C^* = \{\y \in \Q^n \ | \ \y^t \x \leq \0 , \forall \x \in C \}.$
The proof of the following property  can be found 
in \cite{Schrijver:1986:TLI:17634}:
For a given cone $C \in \Q^n$, there is a one-to-one correspondence
between the faces of $C$ of dimension $k$ and the faces of $C^*$ of dimension
$n-k$. In particular, there is a one-to-one correspondence between
the facets of $C$ and the extreme rays of $C^*$.

\noindent {\bf Homogenized cone.}
The {\em homogenized cone} 
of the polyhedron $P = \{\x \in \Q ^ n \ | \ A\x \leq \b \}$
is denoted  by $\hom(P)$ and defined by:
$\hom(P) = \{(\x, \xl) \in \Q^{n + 1} \ | \ A \x - \b \xl \le \0 , \xl
\ge 0\}$.

\begin{lemma}[H-representation correspondence]
\label{le:homfacet}
An inequality $A_i \x \leq b_i $ is redundant in $P$
if and only if the corresponding inequality $A_i \x - b_i \xl \leq 0$
is redundant in $\hom(P)$.
\end{lemma}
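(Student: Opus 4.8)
The plan is to reduce both redundancy statements to purely algebraic certificate conditions via the redundancy test criterion (Lemma~\ref{le:redundancyinpolyhedron}), and then to observe that the two resulting conditions are literally the same system of equations. This turns the lemma into a bookkeeping comparison rather than a geometric argument.

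First I would write $P = \{\x \mid A\x \le \b\}$ and let $A^{(i)} \x \le \b^{(i)}$ denote the subsystem obtained by deleting the $i$-th row. By definition, $A_i \x \le b_i$ is redundant in $P$ exactly when it is implied by $A^{(i)} \x \le \b^{(i)}$. Applying Lemma~\ref{le:redundancyinpolyhedron} to the inequality $A_i \x \le b_i$ against this subsystem, redundancy in $P$ is equivalent to the existence of multipliers $t_j \ge 0$ (for $j \ne i$) and a slack $\lambda \ge 0$ satisfying
\begin{equation*}
A_i = \sum_{j \ne i} t_j A_j \quad\text{and}\quad b_i = \sum_{j \ne i} t_j b_j + \lambda .
\end{equation*}

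Next I would treat the cone side. The cone $\hom(P)$ is cut out by the rows $(A_j, -b_j)$ for all $j$ together with the row $(\0^t, -1)$ coming from $\xl \ge 0$; testing redundancy of $A_i \x - b_i \xl \le 0$ means testing it against all of these except $(A_i,-b_i)$ itself. Since this system is homogeneous (its right-hand side is $\0$) and the tested inequality also has right-hand side $0$, the additive slack produced by Lemma~\ref{le:redundancyinpolyhedron} is forced to vanish, so redundancy in $\hom(P)$ reduces to the existence of multipliers $\mu_j \ge 0$ (for $j \ne i$) and $\nu \ge 0$ with
\begin{equation*}
(A_i, -b_i) = \sum_{j \ne i} \mu_j (A_j, -b_j) + \nu\,(\0^t, -1).
\end{equation*}
Reading off the first $n$ coordinates gives $A_i = \sum_{j \ne i} \mu_j A_j$, and the last coordinate gives $b_i = \sum_{j \ne i} \mu_j b_j + \nu$.

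Comparing the two certificate systems, they coincide under the identification $\mu_j = t_j$ and $\nu = \lambda$: the multiplier $\nu$ of the inequality $\xl \ge 0$ plays exactly the role of the nonnegative slack $\lambda$ in the polyhedral criterion. Hence a certificate of redundancy for $A_i \x \le b_i$ in $P$ yields one for $A_i \x - b_i \xl \le 0$ in $\hom(P)$, and conversely, proving the equivalence. I expect the only delicate point to be the bookkeeping around the extra generator $(\0^t,-1)$: one must carefully distinguish the \emph{internal} slack of the redundancy criterion (which is zero here because the cone is homogeneous) from the \emph{multiplier} $\nu$ attached to the inequality $\xl \ge 0$, since it is precisely this multiplier that recovers the affine slack $\lambda$ on the polyhedral side.
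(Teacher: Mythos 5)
Your proof is correct: the paper states Lemma~\ref{le:homfacet} without proof, and your reduction of both redundancy statements to the certificate form of Lemma~\ref{le:redundancyinpolyhedron}, followed by the observation that the multiplier $\nu$ on the row $(\0^t,-1)$ plays exactly the role of the affine slack $\lambda$, is the standard argument the authors evidently have in mind. The only caveat, which you inherit from the paper's own (slightly over-stated) redundancy criterion, is that the ``redundant $\Rightarrow$ certificate'' direction of Farkas requires the reference system to be feasible --- automatic for $\hom(P)$ since it contains the origin, and harmless here since the paper works with non-empty polyhedra throughout.
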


\begin{theorem}[Extreme rays of the homogenized cone]
\label{le:homextr}
Every extreme ray of the homogenized cone $\hom(P)$
associated with the polyhedron $P$ 
is either of the form
$(\x,0)$ where $\x$ is an extreme ray of $P$, or $(\x,1)$ where $\x$ is an
extreme point of $P$.  
\end{theorem}

\fi

\ifCoRRversion
\subsection{Polyhedral computations}
\label{sec:PolyhedralComputations}

In this section, we review two of the most important algorithms
for polyhedral computations: the double description algorithm (DD for short) and
the Fourier-Motzkin elimination algorithm (FME for short).

A polyhedral cone $C$ can be represented either as an intersection of
finitely many half-spaces (thus using the so-called {\em H-representation}
of $C$) or as by its extreme rays (thus using the so-called
{\em V-representation} of $C$); the DD algorithm produces one representation
from the other.  We shall explain the version of the DD algorithm
which takes as input the H-representation of $C$ and returns as output
the V-representation of $C$.

The FME algorithm performs a standard projection
of a polyhedral set to lower dimension subspace. 
In algebraic terms, this algorithm takes
as input a polyhedron $P$ given by a 
system of linear inequalities (thus an H-representation of $P$) 
in $n$ variables $x_1 < x_2 < \cdots < x_n$ and computes
the H-representation of the projection of $P$ on 
$x_1 < \cdots < x_k$ for some $1 \leq k < n$.

\subsubsection{The double description method}
\label{sec:DDAlgorithm}
We know from \Cref{le:minkowskithm} that any polyhedral cone 
$C = \{ \x \in \Q^n \ | A \x \le \0 \}$ 
can be generated by finitely many vectors, say
$\{\x_1, \ldots, \x_q \} \in \Q^n$. 
Moreover, from \Cref{le:genwithextr} we know that 
if $C$ is pointed, then it can be generated by its extreme rays, 
that is, $C = \cone(R)$ where $R = [\x_1, \ldots, \x_q]$.  
Therefore, we have two possible representations for 
the pointed polyhedral cone $C$: 
\begin{description}
\item[{H-representation}:]
as the intersection of finitely many half-spaces, or equivalently,
with a system of linear inequalities $A \x \le \0 $;
\item[{V-representation}:] 
as a linear combination of finitely many vectors, 
namely $\cone(R)$, where $R$ is a
matrix, the columns of which are the extreme rays of $C$.
\end{description}
We say that the pair $(A,R)$ is a {\em Double Description Pair} or
simply a {\em DD pair} of $C$.  We call $A$ a {\em representation
matrix} of $C$ and $R$ a {\em generating matrix} of $C$.  We call $R$
(resp. $A$) a {\em minimal generating (resp. representing) matrix}
when no proper sub-matrix of $R$ (resp. $A$) is generating
(resp. representing) $C$.

It is important to notice that, for some queries in polyhedral
computations, the output can be calculated in polynomial time using
one representation (either a representation matrix or a generating
matrix) while it would require exponential time using the other
representation.

For example, we can compute in polynomial time the intersection of two
cones when they are in H-representation but the same problem would be
harder to solve when the same cones are in V-representation.
Therefore, it is important to have a procedure to convert between
these two representations, which is the focus of the
articles~\cite{chernikova1965algorithm} and~\cite{terzer2009large}.

We will explain this procedure, which is known as the 
\textit{double description method} as well as \textit{Chernikova's algorithm}. 
This algorithm takes a cone in H-representation as input and returns a
V-representation of the same cone as output.  In other words, this procedure
finds the extreme rays of a polyhedral cone, given by its representation
matrix.
It has been proven that this procedure runs in single exponential
time.  To the best of our knowledge, the most practically efficient
variant of this procedure has been proposed by Fukuda
in~\cite{fukuda1996double} and is implemented in the \textsc{CDD} library. 
We shall explain his approach here and analyze its algebraic 
complexity. 
Before presenting Fukuda's algorithm, we need a few more definitions
and results.  In this section, we assume that the input cone $C$ is
pointed.

The \textit{double description method} works in an incremental manner. 
Denoting by $H_1, \ldots, H_m$ the half-spaces corresponding
to the inequalities of the H-representation of $C$, we have $C =
H_1 \, \cap \, \cdots \, \cap \, H_m$.  Let $1 < i \leq m$ and assume
that we have computed the extreme rays of the cone $C^{i-1} :=
H_1 \, \cap \, \cdots \, \cap \, H_{i-1}$.  Then the $i$-th iteration
of the DD method deduces the extreme rays of $C^{i}$ from those of
$C^{i-1}$ and $H_{i}$.

Assume that the half-spaces $H_1, \ldots, H_m$ are numbered
such that $H_i$ is given by $A_i\x \leq 0$, where $A_i$
is the $i$-th row of the representing matrix $A$.
We consider the following partition of $\Q^n$:

$H^+_i = \{\x \in \Q^n \ | \ A_i \x > 0 \}$,
$H^0_i = \{\x \in \Q^n \ | \ A_i \x = 0 \}$
and 
$H^-_i = \{ \x \in \Q^n \ | \ A_i \x < 0 \}$.

Assume that we have found the DD-pair $(A^{i-1},R^{i-1})$ 
of $C^{i-1}$.
Let $J$ be the set of the column indices of $R^{i-1}$.
We use the above partition $\{H^+_i, H^0_i, H^-_i \}$
to partition $J$ as follows:

$J^+_i = \{j \in J \ | \ \r_j \in H^+ \}$,
$J^0_i = \{j \in J \ | \ \r_j \in H^0\}$
and
$J^-_i = \{j \in J \ | \ \r_j \in H^- \}$,
\newline
where $\{ \r_j \ | \ j \in J \}$ is the set of the columns
of $R^{i-1}$, hence the set of the extreme rays of $C^{i-1}$.

For future reference, let us denote by $\sf{partition}(J,A_i)$ 
the function which returns $J^+, J^0, J^-$ as defined above.
The proof can be found in \cite{fukuda1996double}.

\begin{lemma}[Double description method]
\label{lem:DDmth1}
Let $J' := J^+ \cup J^0 \cup (J^+ \times J^-)$.
Let $R^i$ be the $(n \times |J'|)$-matrix consisting of
\begin{itemize}
\item the columns of $R^{i-1}$ with index
      in $J^+ \cup J^0$, followed by
\item the vectors ${{\r}'}_{(j,j')}$
      for $(j,j') \in (J^+ \times J^-)$, 
where
\begin{center}
${{\r}'}_{(j,j')} = (A_i \r_j)\r_{j'} - (A_i \r_{j'})\r_{j},$
\end{center}
\end{itemize}
Then, the pair $(A^i,R^i)$ is a DD pair of $C^{i}$.
\end{lemma}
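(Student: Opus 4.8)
The plan is to verify, for $(A^i,R^i)$ with $A^i$ obtained by appending the row $A_i$ to $A^{i-1}$, the two conditions defining a DD pair of $C^i$: that $A^i$ represents $C^i$ and that $\cone(R^i)=C^i$. The representation part is immediate, since $C^i=C^{i-1}\cap H_i=\{\x\mid A^{i-1}\x\le\0\}\cap\{\x\mid A_i\x\le 0\}=\{\x\mid A^i\x\le\0\}$. Throughout I write $\alpha_j:=A_i\r_j$, so that membership of $j$ in $J^+$, $J^0$, $J^-$ corresponds to $\alpha_j$ being positive, zero, or negative; recall also that by hypothesis $(A^{i-1},R^{i-1})$ is a DD pair, so $\cone(R^{i-1})=C^{i-1}$.

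For the inclusion $\cone(R^i)\subseteq C^i$ it suffices, since $C^i$ is a convex cone, to check that each column of $R^i$ lies in $C^i$. Each retained original ray lies in $C^{i-1}$ and satisfies $A_i\r_j\le 0$, hence lies in $C^i$. For a combination ray ${\r}'_{(j,j')}=\alpha_j\r_{j'}-\alpha_{j'}\r_j$ with $j\in J^+$ and $j'\in J^-$, both coefficients $\alpha_j$ and $-\alpha_{j'}$ are positive, so ${\r}'_{(j,j')}$ is a nonnegative combination of $\r_j,\r_{j'}\in C^{i-1}$ and therefore lies in $C^{i-1}$; moreover $A_i{\r}'_{(j,j')}=\alpha_j\alpha_{j'}-\alpha_{j'}\alpha_j=0\le 0$, so it lies in $H_i$ as well. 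Thus every generator of $R^i$ lies in $C^i$.

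The substance of the lemma, and what I expect to be the main obstacle, is the reverse inclusion $C^i\subseteq\cone(R^i)$: one must show that the contributions of the rays violating the new inequality can be re-expressed without them. Given $\x\in C^i$, first use $\x\in C^{i-1}=\cone(R^{i-1})$ to write $\x=\sum_{j\in J}\lambda_j\r_j$ with all $\lambda_j\ge 0$. Applying $A_i$ and using $A_i\x\le 0$ yields $P\le N$, where $P:=\sum_{j\in J^+}\lambda_j\alpha_j\ge 0$ and $N:=\sum_{j'\in J^-}\lambda_{j'}(-\alpha_{j'})\ge 0$ (the $J^0$ terms drop out). My plan is then a transportation argument: regard $\{\lambda_j\alpha_j\}_{j\in J^+}$ as supplies summing to $P$ and $\{\lambda_{j'}(-\alpha_{j'})\}_{j'\in J^-}$ as capacities summing to $N$; since $P\le N$, there is a routing $f_{j,j'}\ge 0$ with $\sum_{j'}f_{j,j'}=\lambda_j\alpha_j$ for each $j\in J^+$ and $\sum_{j}f_{j,j'}\le\lambda_{j'}(-\alpha_{j'})$ for each $j'\in J^-$ (for instance by a greedy northwest-corner fill).

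Setting $c_{j,j'}:=f_{j,j'}/(\alpha_j(-\alpha_{j'}))\ge 0$ and using ${\r}'_{(j,j')}=\alpha_j\r_{j'}+(-\alpha_{j'})\r_j$, a direct computation gives $\sum_{(j,j')}c_{j,j'}{\r}'_{(j,j')}=\sum_{j\in J^+}\lambda_j\r_j+\sum_{j'\in J^-}\beta_{j'}\r_{j'}$, where $\beta_{j'}=\sum_{j}f_{j,j'}/(-\alpha_{j'})\le\lambda_{j'}$. Substituting back,
\[
\x=\sum_{j\in J^0}\lambda_j\r_j+\sum_{j'\in J^-}(\lambda_{j'}-\beta_{j'})\r_{j'}+\sum_{(j,j')\in J^+\times J^-}c_{j,j'}\,{\r}'_{(j,j')},
\]
and all coefficients are nonnegative because $\beta_{j'}\le\lambda_{j'}$; this exhibits $\x$ as a member of $\cone(R^i)$. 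Alternatively one can remove the violating rays one pair at a time by pivoting on a straddling pair $(j,j')$, each pivot strictly decreasing the $A_i$-weighted mass $P$ carried by the $J^+$ rays; I favour the transportation formulation as it avoids the termination bookkeeping. Combining the two inclusions gives $\cone(R^i)=C^i$, so $(A^i,R^i)$ is a DD pair of $C^i$. It is worth stressing that the claim concerns only generation, not minimality, which is precisely why combining every straddling pair in $J^+\times J^-$ is legitimate here: the resulting generators need not all be extreme, and pruning the non-extreme ones is the separate role of the adjacency test.
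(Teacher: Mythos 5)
First, a point about the target statement itself: the paper offers no proof of Lemma~\ref{lem:DDmth1} (it defers to \cite{fukuda1996double}), so your argument can only be judged on its own terms. Judged that way, it proves a statement that differs from the one printed --- and the printed one is in fact false. With the paper's conventions ($C^i \subseteq H_i = \{\x \mid A_i\x \le 0\}$ and $J^+ = \{j \mid A_i\r_j > 0\}$), any $\r_j$ with $j \in J^+$ violates the new inequality, so retaining the columns indexed by $J^+ \cup J^0$ forces $\cone(R^i) \not\subseteq C^i$ whenever $J^+ \ne \emptyset$. Concretely, if $C^{i-1}$ is the upper half-plane generated by $(1,0)$, $(-1,0)$, $(0,1)$ and $A_i = (1,0)$, the printed recipe keeps $(1,0)$ and $(0,1)$ and adds only the zero vector as a combination, producing the first quadrant instead of $C^i$, the second quadrant. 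The retained index set must be $J^- \cup J^0$, as in the paper's own Lemma~\ref{lem:ddstrong}. Your proof silently works with this corrected version: you assert that each retained ray satisfies $A_i\r_j \le 0$ (which fails for $j \in J^+$ under the literal statement), and your final decomposition of $\x$ uses only the rays indexed by $J^0$ and $J^-$. You should say explicitly that you are proving the amended lemma.

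For the amended statement your argument is correct and complete. The inclusion $\cone(R^i)\subseteq C^i$ is routine, and your transportation argument for the converse checks out: the identity $\sum_{(j,j')} c_{j,j'}\r'_{(j,j')} = \sum_{j\in J^+}\lambda_j\r_j + \sum_{j'\in J^-}\beta_{j'}\r_{j'}$ with $\beta_{j'}\le\lambda_{j'}$ is a direct computation, the divisions by $\alpha_j$ and $-\alpha_{j'}$ are legitimate since these quantities are strictly positive on $J^+$ and $J^-$ respectively, and the degenerate case $J^-=\emptyset$ (which forces $P=0$ and hence $\lambda_j=0$ for all $j\in J^+$) is handled automatically by the feasibility condition $P\le N$. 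The standard proof in \cite{fukuda1996double} instead eliminates the mass on the $J^+$ rays one straddling pair at a time by induction; your flow formulation achieves the same cancellation in one step and avoids the termination bookkeeping, which is a fair trade.
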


The most efficient way to start the incremental process is to choose
the largest sub-matrix of $A$ with linearly independent rows; we call
this matrix $A^0$. Indeed, denoting by $C^0$
the cone with $A^0$ as representation matrix,
the matrix $A^0$ is invertible and its inverse
gives the extreme rays of $C^0$, that is:
\begin{center}
${\sf ExtremeRays}(C^0) = {(A^{0})}^{-1}$.
\end{center}
Therefore, the first DD-pair that the above
incremental step should take as input is $(A^0,{(A^{0})}^{-1})$.

The next key point towards a practically efficient DD method is to
observe that most of the vectors ${{\r}'}_{(j,j')} $ in \Cref{lem:DDmth1}
are redundant. Indeed, Lemma~\ref{lem:DDmth1} leads to
a construction of a generating matrix of $C$
(in fact, this would be Algorithm~\ref{alg:DDmethod}
where Lines 13 and 16 are suppressed) producing a 
double exponential number  of rays (w.r.t. the ambient dimension $n$) 
whereas
\ifCoRRversion
Lemma~\ref{le:maxextr}
\else
\Cref{eq:EC}
\fi
guarantees that 
the number of extreme rays of a polyhedral cone is
singly exponential in its ambient dimension.
To deal with this issue of redundancy, we need the notion
of \textit{adjacent} extreme rays.

 \begin{definition}[Adjacent extreme rays] \
Two distinct extreme rays $\r$ and $\r'$ of the polyhedral cone
$C$ are called \textit{adjacent} if they
span a $2$-dimensional face of $C$. 
\footnote{We do not use the minimal face, as it used in the main
reference because it makes confusion.} 
\end{definition}
The following lemma shows how we can test whether two extreme rays are adjacent or not. 
The proof can be found in \cite{fukuda1996double}.
\begin{proposition}[Adjacency test]
\label{prop:adjtest} 
Let $\r$ and $\r'$ be two distinct rays of $C$. 
Then, the following statements are equivalent:
\begin{enumerate}
\item $\r$ and $\r'$ are adjacent extreme rays,
\item $\r$ and $\r'$ are extreme rays and $\rank{A_{\zeta(\r) \cap \zeta(\r')}} = n-2$,
\item if $\r''$ is a ray of $C$ 
      with $\zeta(\r) \cap \zeta(\r') \subseteq \zeta(\r'')$, 
      then $\r''$ is a positive multiple of either $\r$ or $\r'$.
\end{enumerate}
It should be noted that the second statement is related  
to algebraic test for extreme rays while the
third one is related to the combinatorial test.
\end{proposition}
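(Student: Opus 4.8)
The plan is to attach to the pair $\r,\r'$ the single set of row indices $S := \zeta(\r)\cap\zeta(\r')$ and to analyze the face $F_S := \{\x\in C \mid A_S\x = \0\}$, which is the smallest face of $C$ containing both $\r$ and $\r'$. The engine of the argument is the dimension formula $\dim(F_S) = n - \rank{A_S}$, which I would establish as follows. First I would check that $\r+\r'$ has zero set \emph{exactly} $S$: for an index $i\notin S$ at least one of $A_i\r$, $A_i\r'$ is strictly negative while both are $\le 0$, so $A_i(\r+\r')<0$. Hence $\r+\r'$ lies in the relative interior of $F_S$, and a short perturbation argument (moving from $\r+\r'$ along any direction of $\ker A_S$ keeps every strict inequality strict for small steps) shows that the linear hull of $F_S$ is all of $\ker A_S$, giving $\dim(F_S)=n-\rank{A_S}$. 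I would also record two facts for reuse: distinct non-equivalent extreme rays of a pointed cone are linearly independent, and, by \Cref{le:genwithextr} together with the algebraic test \Cref{le:algetest}, the extreme rays of a face of $C$ are exactly the extreme rays of $C$ lying in that face.

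With this in hand I would prove the cycle $(1)\Rightarrow(2)\Rightarrow(3)\Rightarrow(1)$. For $(1)\Rightarrow(2)$: if $\r,\r'$ are adjacent they span a $2$-dimensional face, which must coincide with $F_S$ (any face containing both contains the minimal face $F_S$, and $F_S$ already has dimension $\ge 2$ as it contains the independent rays $\r,\r'$); thus $\dim(F_S)=2$, and the dimension formula yields $\rank{A_S}=n-2$. For $(2)\Rightarrow(3)$: from $\rank{A_S}=n-2$ the set $F_S$ is a $2$-dimensional pointed cone, hence has exactly two extreme rays, namely $\r$ and $\r'$ (these are extreme rays of $C$ contained in $F_S$, so extreme rays of $F_S$); any extreme ray $\r''$ of $C$ with $S\subseteq\zeta(\r'')$ lies in $F_S$, is therefore an extreme ray of $F_S$, and so $\r''\simeq\r$ or $\r''\simeq\r'$.

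For $(3)\Rightarrow(1)$ I would proceed in two stages. First I would derive that $\r$ and $\r'$ are extreme: if, say, $\r$ were not extreme, its minimal face $F_{\zeta(\r)}$ would have dimension $\ge 2$ and $\r$ would sit in its relative interior, hence be a strictly positive combination of the (at least two) extreme rays of that face; each such extreme ray has zero set containing $\zeta(\r)\supseteq S$, so by $(3)$ it is equivalent to $\r$ or to $\r'$, and since an interior ray cannot itself be extreme, all of them must be $\simeq\r'$, collapsing the face to dimension $1$ and giving a contradiction. Once $\r,\r'$ are known to be extreme, every extreme ray of $F_S$ has zero set containing $S$ and so, by $(3)$, is $\simeq\r$ or $\simeq\r'$; thus $F_S$ has exactly these two extreme rays, is therefore $2$-dimensional, and $\r,\r'$ span it, which is adjacency.

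The main obstacle I anticipate is the careful bookkeeping around statement $(3)$. Its quantifier must range over \emph{extreme} rays $\r''$: for an arbitrary ray the conclusion fails, since interior rays of $F_S$ such as $\r+\r'$ satisfy $S\subseteq\zeta(\r'')$ yet are not positive multiples of $\r$ or $\r'$. Moreover the direction $(3)\Rightarrow(1)$ does not get to assume extremeness and must extract it, which is the delicate part above. The supporting geometric identifications — that $F_S$ is the minimal common face, that extreme rays of $F_S$ are extreme rays of $C$, and that a pointed cone of dimension $d$ has at least $d$ extreme rays — are individually routine consequences of \Cref{le:genwithextr} and the extreme-ray tests, but arranging them so that each implication invokes only the hypotheses genuinely in force is where the attention is required.
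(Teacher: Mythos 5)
Your argument is correct, and it is essentially the canonical one: the paper offers no proof of \Cref{prop:adjtest} at all, deferring entirely to Fukuda and Prodon, and your route through the minimal common face $F_S=\{\x\in C \ | \ A_S\x=\0\}$ with $S=\zeta(\r)\cap\zeta(\r')$, together with the dimension formula $\dim(F_S)=n-\rank{A_S}$ obtained from the relative-interior point $\r+\r'$, is precisely the argument of that reference. The one substantive point you raise deserves emphasis rather than apology: under the paper's own conventions a ``ray'' of $C$ is an arbitrary element of $C$, and with that reading statement~(3) is genuinely false for adjacent extreme rays, since $\r''=\r+\r'$ has $\zeta(\r'')=\zeta(\r)\cap\zeta(\r')$ yet is a positive multiple of neither; so restricting $\r''$ to extreme rays is a necessary correction to the statement, not merely a convenience of your proof. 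Your handling of the resulting asymmetry --- that $(3)\Rightarrow(1)$ no longer gets extremeness of $\r,\r'$ for free and must extract it, via the observation that if $\r$ were not extreme the face $F_{\zeta(\r)}$ would contain at least two pairwise non-equivalent extreme rays of $C$, all forced by $(3)$ to be equivalent to $\r'$ --- is sound, as are the supporting facts you invoke (extreme rays of a face of $C$ are exactly the extreme rays of $C$ lying in it, and a pointed $2$-dimensional cone has exactly two extreme rays).
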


Based on \Cref{prop:adjtest}, 
we have \Cref{alg:adjtest} for testing whether 
two extreme rays are adjacent or not.

\begin{algorithm}
\caption{AdjacencyTest}
\label{alg:adjtest}
\begin{algorithmic}[1]
\STATE \textbf{Input:} $(A,\r,\r')$, where $A \in \Q ^{m \times n}$ is the representation 
matrix  of cone $C$, $\r$ and $\r'$ are two extreme rays of $C$ 
\STATE \textbf{Output:} true if $\r$ and $\r'$ are adjacent, false otherwise
\STATE $\s := A \r, \ \s' := A \r'$
\STATE let $\zeta(\r)$ and $\zeta(\r')$ be set of indices of zeros in $\s$ and $\s'$ respectively
\STATE $\zeta := \zeta(\r) \cap \zeta(\r')$
\IF{$\rank{A_{\zeta}} = n-2$}
\STATE {\bf return} true 
\ELSE
\STATE {\bf return} false 
\ENDIF
\end{algorithmic}
\end{algorithm}

The following lemma explains how to obtain $(A^i,R^i)$ from
$(A^{i-1},R^{i-1})$, where $A^{i-1}$ (resp. $A^{i}$) is the sub-matrix
of $A$ consisting of its first $i-1$ (resp. $i$) rows.
The double description method is a direct application
of this lemma, see~\cite{fukuda1996double} for details.
\begin{lemma}
\label{lem:ddstrong}
 As above, let $(A^{i-1},R^{i-1})$ be a DD-pair and denote 
by $J$ be the set of indices of the columns of $R^{i-1}$. 
Assume that  $\rank{A^{i-1}} = n$ holds. 
Let $J' := J^- \cup J^0 \cup {\rm Adj}$,
where ${\rm Adj}$ is the set of the 
pairs $(j,j') \in J^+\times J^- $ such that
$\r_j,$ and $\r_{j'}$ are adjacent as extreme rays
of $C^{i-1}$, the cone with $A^{i-1}$
as representing matrix.
Let $R^i$ be the $(n \times |J'|)$-matrix consisting of
\begin{itemize}
\item the columns of $R^{i-1}$ with index
      in $J^- \cup J^0$, followed by
\item  the vectors ${{\r}'}_{(j,j')}$
      for $(j,j') \in (J^+ \times J^-)$, 
where
\begin{center}
${{\r}'}_{(j,j')} = (A_i \r_j)\r_{j'} - (A_i \r_j')\r_{j},$
\end{center}
\end{itemize}
Then, the pair $(A^i,R^i)$ is a DD pair of $C^{i}$.
Furthermore, if $R^{i-1}$ is a minimal generating matrix for the
representation matrix $A^{i-1}$, then $R^{i}$ is 
also
a minimal generating matrix for the representation matrix $A^{i}$.
\end{lemma}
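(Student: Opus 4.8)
The plan is to prove the sharper statement that, under the hypothesis $\rank{A^{i-1}}=n$, the columns of $R^i$ are \emph{exactly} the extreme rays of $C^i$, one representative per equivalence class; both the DD-pair property and the minimality claim then follow. The hypothesis makes $A^{i-1}$ of full column rank, so $C^{i-1}$ is pointed, and since $C^i=C^{i-1}\cap\{\x\mid A_i\x\le\0\}$ is a subcone it is pointed as well. Hence I may use the algebraic test (\Cref{le:algetest}) in its pointed form throughout: a ray $\r$ of a pointed cone with representation matrix $B$ is extreme iff $\rank{B_{\zeta(\r)}}=n-1$. For generation I will invoke \Cref{le:genwithextr}: once I know the columns of $R^i$ lie in $C^i$ and include all its extreme rays, I get $C^i=\cone(R^i)$, i.e.\ $(A^i,R^i)$ is a DD pair.

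First I would check that every listed column is an extreme ray of $C^i$. A retained column $\r_j$ with $j\in J^-\cup J^0$ satisfies $A_i\r_j\le 0$, so $\r_j\in C^i$; its zero set with respect to $A^i$ is $\zeta_{A^{i-1}}(\r_j)$ (with the index $i$ adjoined when $j\in J^0$), so the test rank stays $n-1$ and $\r_j$ remains extreme. For a new column $\r'_{(j,j')}$ with $(j,j')\in{\rm Adj}$, a direct computation gives $A_i\r'_{(j,j')}=0$, and writing $\r'_{(j,j')}=(A_i\r_j)\r_{j'}+(-A_i\r_{j'})\r_j$ with both coefficients positive shows $\r'_{(j,j')}\in C^{i-1}\cap\{A_i\x=0\}\subseteq C^i$. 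Adjacency and \Cref{prop:adjtest} give $\rank{A^{i-1}_{\zeta(\r_j)\cap\zeta(\r_{j'})}}=n-2$; since this common zero set sits inside $\zeta(\r'_{(j,j')})$ and the extra row $A_i$ does not vanish on $\mathrm{span}(\r_j,\r_{j'})$ (because $A_i\r_j>0$), adjoining $A_i$ raises the rank to $n-1$, so the algebraic test certifies $\r'_{(j,j')}$ as extreme in $C^i$.

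The converse inclusion — that every extreme ray of $C^i$ is one of the listed columns up to scaling — is the heart of the argument and the step I expect to be the main obstacle, since it requires translating the algebraic test back into the geometry of two-dimensional faces. Let $\r$ be an extreme ray of $C^i$. If $A_i\r<0$ then $i\notin\zeta_{A^i}(\r)$, so the submatrix controlling the test is already a submatrix of $A^{i-1}$ and $\r$ is extreme in $C^{i-1}$, hence a column with index in $J^-$. If $A_i\r=0$ and $\r$ is extreme in $C^{i-1}$, it is a column with index in $J^0$. Otherwise $A_i\r=0$ but $\r$ is not extreme in $C^{i-1}$; writing $Z=\zeta_{A^{i-1}}(\r)$ we have $\zeta_{A^i}(\r)=Z\cup\{i\}$ and $\rank{A^i_{Z\cup\{i\}}}=n-1$, while $\rank{A^{i-1}_Z}\le n-2$ (non-extremality) and $\rank{A^{i-1}_Z}\ge n-2$ (one row changes rank by at most one), forcing $\rank{A^{i-1}_Z}=n-2$. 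Thus the smallest face of $C^{i-1}$ containing $\r$ is the two-dimensional face $F=\{\x\in C^{i-1}\mid A^{i-1}_Z\x=\0\}$, which is spanned by a unique adjacent pair of extreme rays $\r_j,\r_{j'}$ of $C^{i-1}$. Writing $\r=\alpha\r_j+\beta\r_{j'}$ with $\alpha,\beta>0$ (strict, since $\r$ is extreme in $C^i$ and not a multiple of either $\r_j$ or $\r_{j'}$), the relation $0=\alpha A_i\r_j+\beta A_i\r_{j'}$ forces $A_i\r_j$ and $A_i\r_{j'}$ to have strictly opposite signs: they cannot both vanish, for then $F\subseteq\{A_i\x=0\}$ and $\r$ would be interior to a two-dimensional subcone of $C^i$, contradicting extremality. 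Hence $(j,j')\in J^+\times J^-$ is adjacent and $\r\simeq\r'_{(j,j')}$, so $\r$ is a listed column.

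Finally, assuming $R^{i-1}$ is minimal, its columns are precisely the extreme rays of $C^{i-1}$, and the previous two paragraphs show the columns of $R^i$ are precisely the extreme rays of $C^i$; minimality then reduces to the absence of repeated equivalence classes. A retained column is extreme already in $C^{i-1}$, whereas a new column $\r'_{(j,j')}$ is not, so the two families are disjoint; and two distinct adjacent pairs cannot yield equivalent new rays, because a new ray $\r$ determines its minimal face $F$ of $C^{i-1}$ and the two-dimensional cone $F$ has exactly two extreme rays, so the ordered pair $(j,j')\in J^+\times J^-$ is uniquely recovered from $\r$. Thus each extreme ray of $C^i$ occurs exactly once among the columns of $R^i$; since no extreme ray of a pointed cone is a nonnegative combination of the others, no column can be dropped, and $R^i$ is a minimal generating matrix for $A^i$, completing the proof.
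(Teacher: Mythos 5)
The paper gives no proof of this lemma: it states it and defers to the reference \cite{fukuda1996double}, so there is nothing in the text to compare your argument against line by line. Your proof is correct and is essentially the standard Fukuda--Prodon argument: pointedness from $\rank{A^{i-1}}=n$ lets you run everything through the algebraic rank test; the forward direction (each listed column is an extreme ray of $C^i$) and the converse (each extreme ray of $C^i$ is, up to scaling, a listed column, via the rank-$(n-2)$ minimal face and the sign analysis of $A_i$ on its two generators) are both sound, and the uniqueness argument for minimality is right. One small presentational caveat: your opening reduction ``the columns of $R^i$ are exactly the extreme rays of $C^i$'' literally holds only when $R^{i-1}$ is minimal, whereas the first claim of the lemma is stated for an arbitrary DD pair; for that case your argument still works, but the correct phrasing is that the columns of $R^i$ lie in $C^i$ and \emph{contain} all extreme rays of $C^i$ (every generating matrix of a pointed cone must contain each extreme ray up to a positive factor), which already gives $C^i=\cone(R^i)$ by \Cref{le:genwithextr}, with exactness reserved for the minimality statement.
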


Using \Cref{prop:adjtest} and \Cref{lem:ddstrong} we can obtain \Cref{alg:DDmethod}
\footnote{In this algorithm, $A^i$ shows the representation matrix in step $i$}
 for computing the extreme rays of a cone.

\begin{algorithm}
\caption{DDmethod}
\label{alg:DDmethod}
\begin{algorithmic}[1]
\STATE \textbf{Input: }{ a matrix $A \in \Q^{m \times n}$, a 
representation matrix of a pointed cone $C$ }
\STATE \textbf{Output: }{$R$, the minimal generating matrix of $C$}
\STATE let $K$ be the set of indices of $A$'s independent rows
\STATE $A^0 := A_K$ 
\STATE $R^0 := {(A^{0})}^{-1}$
\STATE let $J$ be set of column indices of $R^0$
\WHILE{$K \neq \{1, \cdots, m \}$}
\STATE select a $A$-row index $i \not \in K$
\STATE $J^+,\ J^0, \ J^- := {\sf partition}(J,A_i)$
\STATE add vectors with indices in $J^+$ and $J^0$ as columns to $R^i$
\FOR{$p \in J^+$}
\FOR{$n \in J^-$}
\IF{${\rm AdjacencyTest}(A^{i-1},\r_p,\r_n) = {\rm true}$}
\STATE $\r_{\rm new} := (A_i \r_p)\r_{n} - (A_i \r_n)\r_p$
\STATE add $\r_{\rm new}$ as columns to $R^i$
\ENDIF
\ENDFOR
\ENDFOR
\STATE let $J$ be set of indices in $R^i$
\ENDWHILE 
\end{algorithmic}
\end{algorithm}

\else

\subsection{The double description method}
\label{sec:DDAlgorithm}

It follows from Section~\ref{sec:PolyhedralCones} 
that any pointed polyhedral cone $C$ can be represented
either as the intersection of finitely many half-spaces
(given as a system of linear inequalities $A \x \le \0 $
and called {\em H-representation} of $C$)
or as $\cone(R)$, where $R$ is a
matrix, the columns of which are the extreme rays of $C$
(called {\em V-representation} of $C$).

The \textit{double description method}, proposed by Komei Fukuda
in~\cite{fukuda1996double} and implemented in the \textsc{CDD} library,
produces the V-representation of a pointed polyhedral cone
given in the H-representation.

Some of the results presented in our paper depend
on an algebraic complexity estimate for the double description method.
In~\cite{fukuda1996double}, one can find an estimate
in terms of arithmetic operations on the coefficients
of the input H-representation.
Since we need a bit complexity estimate,
we provide one as Lemma~\ref{le:ddcomp}.
A proof for it can be found in Appendix~\ref{ap:proofs}
and a brief review of the DD method is given
in Appendix~\ref{ap:ddmethod}.

\fi

\ifCoRRversion
\subsubsection{Fourier-Motzkin elimination}
\label{FM}
\begin{definition}[Projection of a polyhedron]
\label{defi:polyProoj}
Let $A \in \Q^{m \times p}$ and $B \in \Q^{m \times q}$
be matrices.
Let $\c \in \Q^m$ be a vector.
Consider the polyhedron
$P \subseteq \Q^{p+q}$ defined by
$P = \{(\u,\x) \in \Q^{p+q} \ | \ A\u + B\x \leq \c \}$.
We denote by $\proj{P; \x}$ the 
\textit{projection of $P$ on $\x$}, that is, the subset of $\Q^q $
defined by
\begin{center} 
$\proj{P;\x} = \{\x \in \Q^q \ | \ \exists \ \u \in \Q^p , \ (\u,\x) \in P \}.$ 
\end{center} 
\end{definition}
Fourier-Motzkin elimination (FME for short) is an algorithm computing
the projection $\proj{P;\x}$ of the polyhedron of $P$ by successively
eliminating the $\u$-variables from the inequality system $A\u +
B\x \leq \c $. This process shows that $\proj{P;\x} $ is also a
polyhedron.
\begin{definition}[Inequality combination]
\label{defi:InequalityCombination}
Let $\ell_1, \ell_2$ be two inequalities: $a_1x_1 + \cdots +
a_nx_n \leq d_1$ and $b_1x_1 + \cdots + b_n x_n \leq d_2$.  Let
$1 \leq i \leq n$ such that the coefficients $a_i$ and $b_i$ of $x_i$
in $\ell_1$ and $\ell_2$ are respectively positive and negative.
The {\em combination} of $\ell_1$
and $\ell_2$ w.r.t. $x_i$, denoted by $\combine(\ell_1,\ell_2,x_i)$, is:
\begin{center}
$-b_i(a_1x_1 + \cdots + a_nx_n) + a_i(b_1x_1 + \cdots + b_n x_n) 
\leq -b_i d_1 + a_i d_2. $
\end{center}
\end{definition}

Theorem~\ref{lem:fmeorig} shows how to compute $\proj{P;\x}$
when $\u$ consists of a single variable $x_i$.
When $\u$ consists of several  variables, FME
obtains the projection $\proj{P;\x}$ by repeated
applications of Theorem~\ref{lem:fmeorig}.

\begin{theorem}[Fourier-Motzkin theorem  \cite{kohler1967projections}]
\label{lem:fmeorig}
Let $A \in \Q^{m \times n}$ be a matrix and let
$\b \in \Q^m$ be a vector.
Consider the polyhedron 
$P = \{ \x \in \Q^n \ | \ A \x  \leq \b \}$.
Let $S$ be the set of inequalities defined by
$A\x \leq \b$. Also, let $1 \leq i \leq n$. 
We partition $S$ according to the sign 
of the coefficient of $x_i$:
$S^+ = \{\ell \in S \ | \ \mathrm{coeff}(\ell,x_i) > 0 \}$,
$S^- = \{\ell \in S \ | \ \mathrm{coeff}(\ell,x_i) < 0 \}$ and
$S^0 = \{\ell \in S \ | \ \mathrm{coeff}(\ell,x_i) = 0 \}$.
We construct the following system of linear inequalities:
\begin{center}
$ S^\prime = \{ 
\combine(s_p,s_n,x_i) \ | \ 
         (s_p, s_n) \in S^+ \times S^- \}
\ \cup \ S^0.$
\end{center}
Then, $S^\prime$ is a representation of $\proj{P;\x \setminus \{ x_i \}}.$
\end{theorem}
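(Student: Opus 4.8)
The plan is to prove that $S'$ represents $\proj{P;\x \setminus \{x_i\}}$ by establishing the two set inclusions between the solution set of $S'$ (viewed in the coordinates other than $x_i$) and the projection. Throughout I write $\x'$ for a point in the variables $\x \setminus \{x_i\}$, and $a_i > 0$, $b_i < 0$ for the coefficients of $x_i$ in a generic $s_p \in S^+$ and $s_n \in S^-$ respectively.

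For the forward inclusion (projection contained in the solutions of $S'$), I would take a point $\x$ satisfying all of $S$ and show that the corresponding $\x'$ satisfies $S'$. The inequalities in $S^0$ do not involve $x_i$, hence belong to $S$ and hold at $\x'$ unchanged. For each combined inequality $\combine(s_p,s_n,x_i)$ with $(s_p,s_n) \in S^+ \times S^-$, I would note that, by the definition of $\combine$, it is $s_p$ scaled by $-b_i > 0$ added to $s_n$ scaled by $a_i > 0$; being a non-negative linear combination of two inequalities satisfied by $\x$, it is satisfied as well. A direct check shows the coefficient of $x_i$ in the combination is $-b_i a_i + a_i b_i = 0$, so it in fact constrains only $\x'$. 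Thus $\x'$ satisfies every inequality of $S'$.

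The substance lies in the reverse inclusion. Here I would start from $\x'$ satisfying all of $S'$ and exhibit a value $\xi$ for $x_i$ placing the full point in $P$. The key step is to solve each inequality involving $x_i$ for that variable: every $s_p \in S^+$ rewrites as an upper bound $x_i \leq U_p(\x')$ (dividing by $a_i > 0$), and every $s_n \in S^-$ rewrites as a lower bound $x_i \geq L_n(\x')$ (dividing by $b_i < 0$, which reverses the sense). A value $\xi$ compatible with all of $S^+ \cup S^-$ exists exactly when $\max_n L_n(\x') \leq \min_p U_p(\x')$, that is, when $L_n(\x') \leq U_p(\x')$ for every pair $(p,n)$. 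The crucial observation is that this pairwise inequality, after clearing denominators, is precisely $\combine(s_p,s_n,x_i)$ evaluated at $\x'$, which holds by hypothesis. Choosing any such $\xi$, and recalling that the $S^0$ inequalities hold because they lie in $S'$, yields a point $(\ldots,\xi,\ldots)$ satisfying all of $S$, so $\x'$ lies in the projection.

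The main obstacle I anticipate is the bookkeeping in the reverse direction rather than any deep idea: one must verify carefully that clearing denominators in $L_n(\x') \leq U_p(\x')$ reproduces $\combine(s_p,s_n,x_i)$ with the correct signs (using $a_i > 0$ and $b_i < 0$), and one must separately dispatch the degenerate cases where $S^+$ or $S^-$ is empty. When $S^+ = \emptyset$ there is no upper constraint, so any $\xi \geq \max_n L_n(\x')$ works, and symmetrically when $S^- = \emptyset$; when both are empty, $x_i$ is free and $S'$ collapses to $S^0$. I would confirm the claim directly in each of these boundary cases.
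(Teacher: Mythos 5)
Your proposal is correct: the forward inclusion via non-negative combinations (with the $x_i$-coefficient cancelling as $-b_ia_i + a_ib_i = 0$) and the reverse inclusion via the observation that $\max_n L_n(\x') \le \min_p U_p(\x')$ is equivalent, pair by pair after clearing the denominators $a_i > 0$ and $b_i < 0$, to the inequalities $\combine(s_p,s_n,x_i)$, is exactly the classical argument, and your handling of the cases $S^+ = \emptyset$ or $S^- = \emptyset$ is the right way to close the boundary cases. The paper itself gives no proof of this theorem --- it is stated as background and attributed to Kohler \cite{kohler1967projections} --- so there is no in-paper argument to compare against; your write-up supplies the standard proof that the citation points to.
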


With the notations of Theorem~\ref{lem:fmeorig}, assume
that each of $S^+$ and $S^-$ counts $\frac{m}{2}$ inequalities.
Then, the set $S^\prime$ counts $(\frac{m}{2})^2$ inequalities.
After eliminating $p$
variables, the projection  would be given by $O( {(\frac{m}{2})^{2^{p}}}  )$
inequalities. Thus, FME is \textit{double exponential} in $p$.

On the other hand, from \cite{monniaux2010quantifier}
and \cite{jing2017computing}, we know that the maximum number of facets of
the projection on ${\Q}^{n-p}$ of a polyhedron in ${\Q}^{n}$
with $m$ facets is  $O(m^{\lfloor n/2 \rfloor})$.
Hence, it can be concluded that most of the generated
inequalities by 
FME are \textit{redundant}. Eliminating
these redundancies is the main subject of the subsequent 
sections.

\else

\subsection{Fourier-Motzkin elimination}
\label{FM}

Let $A \in \Q^{m \times p}$ and $B \in \Q^{m \times q}$
be matrices.
Let $\c \in \Q^m$ be a vector.
Consider the polyhedron 
$P = \{(\u,\x) \in \Q^{p+q} \ | \ A\u + B\x \leq \c \}$.
We denote by $\proj{P;\x}$ the 
\textit{projection of $P$ on $\x$}, that is, the subset of $\Q^q $
defined by $\proj{P;\x} = \{\x \in \Q^q \ | \ \exists \ \u \in \Q^p , \ (\u,\x) \in P \}.$ 

Fourier-Motzkin elimination (FME for short) is an algorithm computing
the projection $\proj{P;\x}$ of the polyhedron of $P$ by successively
eliminating the $\u$-variables from the inequality system $A\u +
B\x \leq \c $. This process shows that $\proj{P;\x} $ is also a
polyhedron.

Let $\ell_1, \ell_2$ be two inequalities: $a_1x_1 + \cdots +
a_nx_n \leq c_1$ and $b_1x_1 + \cdots + b_n x_n \leq c_2$.  Let
$1 \leq i \leq n$ such that the coefficients $a_i$ and $b_i$ of $x_i$
in $\ell_1$ and $\ell_2$ are 
resp. positive and negative.
The {\em combination} of $\ell_1$
and $\ell_2$ w.r.t. $x_i$, denoted by $\combine(\ell_1,\ell_2,x_i)$, is:
\begin{center}
$-b_i(a_1x_1 + \cdots + a_nx_n) + a_i(b_1x_1 + \cdots + b_n x_n) 
\leq -b_i c_1 + a_i c_2. $
\end{center}

Theorem~\ref{lem:fmeorig} shows how to compute $\proj{P;\x}$
when $\u$ consists of a single variable $x_i$.
When $\u$ consists of several  variables, FME
obtains the projection $\proj{P;\x}$ by repeated
applications of Theorem~\ref{lem:fmeorig}.

\begin{theorem}[Fourier-Motzkin theorem  \cite{kohler1967projections}]
\label{lem:fmeorig}
Let $A \in \Q^{m \times n}$ be a matrix and let
$\c \in \Q^m$ be a vector.
Consider the polyhedron 
$P = \{ \x \in \Q^n \ | \ A \x  \leq \c \}$.
Let $S$ be the set of inequalities defined by
$A\x \leq \c$. Also, let $1 \leq i \leq n$. 
We partition $S$ according to the sign 
of the coefficient of $x_i$:
$S^+ = \{\ell \in S \ | \ \mathrm{coeff}(\ell,x_i) > 0 \}$,
$S^- = \{\ell \in S \ | \ \mathrm{coeff}(\ell,x_i) < 0 \}$ and
$S^0 = \{\ell \in S \ | \ \mathrm{coeff}(\ell,x_i) = 0 \}$.
We construct the following system of linear inequalities:
\begin{center}
$ S^\prime = \{ 
\combine(s_p,s_n,x_i) \ | \ 
         (s_p, s_n) \in S^+ \times S^- \}
\ \cup \ S^0.$
\end{center}
Then, $S^\prime$ is a representation of $\proj{P; \{ \x \setminus \{
  x_i \} \}}.$
\end{theorem}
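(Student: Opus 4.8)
The plan is to prove both inclusions: that $S'$ as a system of inequalities defines exactly the set $\proj{P; \x \setminus \{x_i\}}$. Write $\x' = \x \setminus \{x_i\}$ for the remaining variables, so that any point $\x$ of $\Q^n$ decomposes as a pair $(x_i, \x')$. The forward direction asks me to show that every inequality in $S'$ is satisfied by every point of the projection; the reverse direction asks that any $\x'$ satisfying all inequalities of $S'$ can be lifted to a full point $(x_i, \x') \in P$. The reverse direction is where the real content lies.

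First I would handle the forward inclusion. Take any $\x' \in \proj{P; \x'}$; by definition there is some value $x_i$ with $(x_i, \x') \in P$, so this point satisfies every inequality of $S$, in particular all of $S^+$, $S^-$, and $S^0$. The inequalities of $S^0$ carry over verbatim since they do not involve $x_i$. For a pair $(s_p, s_n) \in S^+ \times S^-$, the combination $\combine(s_p, s_n, x_i)$ is by definition the nonnegative linear combination $-b_i s_p + a_i s_n$ where $a_i > 0$ is the coefficient of $x_i$ in $s_p$ and $b_i < 0$ is its coefficient in $s_n$; since $a_i > 0$ and $-b_i > 0$, this is a nonnegative combination of two valid inequalities and is therefore satisfied, and one checks directly that the $x_i$-terms cancel. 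Hence $\x'$ satisfies all of $S'$.

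The reverse inclusion is the main obstacle. Suppose $\x'$ satisfies every inequality in $S'$; I must produce a value $x_i = \xi$ making $(\xi, \x')$ satisfy all of $S$. The idea is to rewrite each inequality of $S^+$ in the form $x_i \le U_p(\x')$ (dividing by the positive coefficient $a_i$) and each inequality of $S^-$ in the form $L_n(\x') \le x_i$ (dividing by $|b_i|$). Such a $\xi$ exists precisely when $\max_{n} L_n(\x') \le \min_{p} U_p(\x')$, i.e. when $L_n(\x') \le U_p(\x')$ for every pair $(s_p, s_n)$. But rearranging $L_n(\x') \le U_p(\x')$ and clearing the positive denominators recovers exactly the combined inequality $\combine(s_p, s_n, x_i)$, which holds by hypothesis since it lies in $S'$. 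Thus the interval of admissible $x_i$ is nonempty and any $\xi$ in it yields a lift. Two degenerate cases must be dispatched: if $S^+$ (or $S^-$) is empty there is no upper (lower) constraint on $x_i$, so any sufficiently small (large) choice works; and the $S^0$ inequalities involve only $\x'$ and are already satisfied. This completes the argument that $\x' \in \proj{P; \x'}$, establishing that $S'$ is a representation of the projection.
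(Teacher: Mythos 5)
Your proof is correct. The paper itself offers no proof of this theorem---it is quoted as background (citing Kohler) in a section where the authors explicitly omit most proofs---so there is nothing to compare against; your argument is the standard one, and it is complete: the forward inclusion correctly observes that each element of $S'$ is a nonnegative combination ($-b_i>0$ and $a_i>0$) of valid inequalities in which the $x_i$-terms cancel, and the reverse inclusion correctly identifies the key point, namely that $\combine(s_p,s_n,x_i)$ is, after clearing the positive factor $-a_ib_i$, exactly the condition $L_n(\x')\le U_p(\x')$, so that the admissible interval for $x_i$ is nonempty. The degenerate cases where $S^+$ or $S^-$ is empty are also handled, so no gaps remain.
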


With the notations of Theorem~\ref{lem:fmeorig}, assume
that each of $S^+$ and $S^-$ counts $\frac{m}{2}$ inequalities.
Then, the set $S^\prime$ counts $(\frac{m}{2})^2$ inequalities.
After eliminating $p$
variables, the projection  would be given by $O( {(\frac{m}{2})^{2^{p}}}  )$
inequalities. Thus, FME is \textit{double exponential} in $p$.

On the other hand, from \cite{monniaux2010quantifier}
and \cite{jing2017computing}, we know that the maximum number of facets of
the projection on ${\Q}^{n-p}$ of a polyhedron in ${\Q}^{n}$
with $m$ facets is  $O(m^{\lfloor n/2 \rfloor})$.
Hence, it can be concluded that most of the generated
inequalities by 
FME
are redundant. Eliminating
these redundancies is the main subject of the subsequent 
sections.
\fi

\subsection{Cost model}
\label{costmodel}
We use the notion of {\em height} of an algebraic number
as defined by Michel Waldschmidt in 
Chapter 3 of~\cite{waldschmidt2000diophantine}.
In particular, for any rational number $\frac{a}{b}$, thus with $b \neq 0$,  
 we define the
{\em height} of $\frac{a}{b}$, 
denoted as $\height(\frac{a}{b})$, as 
$\log \max (|a|,|b|)$.
For a given matrix $A \in \Q^{m \times n}$,
let $\| A \|$ denote the infinite norm of $A$, that is, the maximum absolute
value of a coefficient in $A$.
We define the height of $A$, denoted by
$\height(A) := \height(\| A \|)$, as the maximal height of a coefficient in $A$.
For the rest of this section, our main reference
is the PhD thesis of Arne Storjohann~\cite{storjohann2013algorithms}.
Let $k$ be a non-negative integer.
We denote by  $\MC(k)$ an upper
bound for the number of bit operations
required for performing any of the 
basic operations 
(addition, multiplication, division with reminder)
on input $a, b \in \Z$ with $|a|, |b| < 2^k$.
Using the multiplication algorithm
of Arnold Sch{\"{o}}nhage and
               Volker Strassen~\cite{DBLP:journals/computing/SchonhageS71}
one can choose $\MC(k) \in O(k \log k \log \log k)$.

We also need complexity estimates for some matrix operations.  For
positive integers $a,b,c$, let us denote by $\MC \MC(a,b,c)$ 
an upper bound for 
the number of arithmetic operations (on the coefficients) required 
for multiplying an $(a \times b)$-matrix by an $(b \times c)$-matrix.
In the case of square matrices of order $n$,
we simply write $\MC \MC(n)$ instead of $\MC \MC(n,n,n)$.
We denote by $\theta$ the exponent of linear algebra,
that is, the smallest real positive number such that
$\MC \MC(n) \in O (n^{\theta} )$.

In the following, we give complexity estimates 
in terms of $\MC(k) \in O(k \log
k \log \log k)$ and $\BC(k) = \MC(k) \log k \in O(k (\log
k)^2 \log \log k)$.
We replace every term of the form $(\log k)^p (\log \log k)^q
(\log\log \log k)^r$, (where $p, q, r$ are positive real numbers) with
$O(k^{\epsilon})$ where ${\epsilon}$ is a (positive) infinitesimal.
Furthermore, in the complexity estimates of algorithms operating on
matrices and vectors over $\mathbb{Z}$, we use a parameter $ \beta $,
which is a bound on the magnitude of the integers occurring during the
algorithm.
Our complexity estimates are measures in terms of 
\textit{machine word operations.}
Let $A \in \Z^{m \times n}$ and $B \in \Z^{n \times p}$. Then, the
product of $A$ by $B$ can be computed within
$O(\MC \MC(m,n,p)(\log \beta) + (mn+np+mp)\BC (\log \beta))$ word
operations, where $\beta = n \, \|A\| \, \|B\|$ and $\|A\|$
(resp. $\|B\|$) denotes the maximum absolute value of a coefficient in
$A$ (resp. $B$).  Neglecting log factors, this estimate
becomes $O(\max(m,n,p)^\theta \max(h_A,h_b))$ where $h_A = \height(A)$
and $h_B = \height(B)$.
For a matrix $A \in \Z^{m \times n}$, a cost estimate of
Gauss-Jordan transform is $O (nmr^{\theta-2}(\log \beta) + nm(\log
r)\BC (\log \beta))$ word operations, where $r$ is the rank of the
input matrix $A$ and $\beta = (\sqrt{r} \|A\|)^r$.
Letting $h$ be the height of $A$,
for a matrix $A \in \Z^{m \times n}$, with height $h$, computing the rank of $A$ is done within 
       $O(mn^{\theta + \epsilon}h^{1 + \epsilon})$ word operations,
       and computing the inverse of $A$ (when this matrix
      is invertible over $\Q$ and $m = n$) is done within 
$O(m^{\theta + 1 + \epsilon}h^{1 + \epsilon})$ word operations.
Let $A \in \Z^{ n \times n}$  be an integer matrix, which is
invertible over $\Q$. Then, the absolute value of any coefficient in
$A^{-1}$ (inverse of $A$) can be bounded above by $(\sqrt{n-1}\|A\|^{(n-1)})$.

\section{Revisiting Balas' method}
\label{sec:revision}
As recalled in
Section \ref{sec:background}, FME produces a representation of
the projection of a polyhedron 
by eliminating one variable atfer another.  However, this
procedure generates lots of redundant inequalities limiting 
its use in practice to polyhedral sets with a handful of variables only.
In this section, we propose an efficient algorithm which generates a minimal 
representation of a full-dimensional pointed polyhedron, as well 
as its projections.
Through this section, we use 
$Q$ to denote a full-dimensional
pointed polyhedron in $\Q^n$,
where
\begin{equation}\label{eq:Q}
  Q = \{(\u,\x) \in \Q^p \times \Q^q \ | \ A\u + B\x \leq \c \},
\end{equation}
with $A \in \Q^{m \times p}$, $B \in \Q^{m \times q}$ and $\c \in
\Q^{m}$.
Thus, $Q$ has no implicit equations in its representation and the coefficient
matrix $[A,
  B]$ has full column rank.
Our goal in this section is to compute the minimal representation of the projection
$\proj{Q;\x}$ given by
\ifCoRRversion
\begin{equation}\label{eq:projectedpolyhedron}
  \proj{Q;\x} := \{ \x \ | \ \exists \u, s.t. (\u, \x) \in Q \}.
\end{equation}
\else
$\proj{Q;\x} := \{ \x \ | \ \exists \u, s.t. (\u, \x) \in Q \}.$
\fi
We call the cone
\ifCoRRversion
\begin{equation}\label{eq:projectioncone}
  C := \{\y \in \Q^m \ | \ \y^tA = 0  \ \ {\rm and} \  \ \y \geq \0
  \}
\end{equation}
\else
$C := \{\y \in \Q^m \ | \ \y^tA = 0  \ \ {\rm and} \  \ \y \geq \0
\}$
\fi
the {\em projection cone} of $Q$ w.r.t.$\u$.
When there is no ambiguity, we simply call $C$ as the projection cone of
$Q$.
Using the following so-called {\em projection lemma}, we can
compute a representation for the projection $\proj{Q; \x}$:
  \begin{lemma}[\cite{chernikov1960contraction}]
\label{le:projectwithextremeray}
The projection $\proj{Q;\x}$ of the polyhedron $Q$ can be represented by
$$ S := \{ \y^tB\x \leq \y^t \c, \forall \y \in \sf{ExtremeRays}(C) \}, $$
where $C$ is the projection cone of $Q$ defined
\ifCoRRversion
by Equation~(\ref{eq:projectioncone}).
\else
above.
\fi
  \end{lemma}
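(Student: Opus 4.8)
The plan is to prove the lemma by establishing two inclusions between the polyhedron $\proj{Q;\x}$ and the polyhedron defined by the system $S$, relying on Farkas' lemma (\Cref{le:farkaslemma}) and on the fact (\Cref{le:genwithextr}) that the pointed projection cone $C$ is generated by its extreme rays. Throughout, the key observation is that a point $\x$ belongs to $\proj{Q;\x}$ exactly when the system $A\u \le \c - B\x$ (in the unknown $\u$) is feasible, and feasibility of such a linear system is governed by Farkas' lemma.

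First I would prove the inclusion $\proj{Q;\x} \subseteq \{\x \mid \y^tB\x \le \y^t\c \text{ for all } \y \in {\sf ExtremeRays}(C)\}$. Take $\x \in \proj{Q;\x}$, so there is some $\u$ with $A\u + B\x \le \c$. Then for any $\y \in C$ we have $\y \ge \0$ and $\y^tA = \0$, so multiplying the inequality $A\u \le \c - B\x$ on the left by $\y^t$ gives $\0 = \y^tA\u \le \y^t\c - \y^tB\x$, i.e. $\y^tB\x \le \y^t\c$. In particular this holds for every extreme ray $\y$ of $C$, which is the desired containment. This direction is the easy one and uses only the defining properties of $C$.

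The harder direction is the reverse inclusion: if $\x$ satisfies $\y^tB\x \le \y^t\c$ for every extreme ray $\y$ of $C$, then $\x \in \proj{Q;\x}$. Here I would argue the contrapositive. Suppose $\x \notin \proj{Q;\x}$; then the system $A\u \le \c - B\x$ has no solution $\u$. By a variant of Farkas' lemma (the feasibility/transposition form, which follows from \Cref{le:farkaslemma}), infeasibility of $A\u \le \c - B\x$ yields a certificate $\y \ge \0$ with $\y^tA = \0$ and $\y^t(\c - B\x) < 0$, that is, $\y \in C$ with $\y^tB\x > \y^t\c$. Thus $\x$ violates the inequality associated to this particular $\y \in C$. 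The main obstacle is to convert this violation by a \emph{general} element of $C$ into a violation by an \emph{extreme ray} of $C$, since $S$ only ranges over extreme rays. To do this I would invoke \Cref{le:genwithextr}: since $A$ has full column rank, $C$ is pointed, so I can write $\y = R\c'$ with $\c' \ge \0$, where the columns of $R$ are the extreme rays $\y_1,\dots,\y_s$ of $C$. Then $\y^tB\x - \y^t\c = \sum_j c'_j(\y_j^tB\x - \y_j^t\c)$, and since the left-hand side is strictly positive while each $c'_j \ge 0$, at least one summand must have $\y_j^tB\x - \y_j^t\c > 0$, i.e. some extreme ray $\y_j$ violates its inequality in $S$. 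This contradicts the assumption on $\x$, completing the proof.

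One subtlety I would be careful about is confirming that $C$ is indeed pointed so that \Cref{le:genwithextr} applies and $C$ is generated by finitely many extreme rays; this is where the hypothesis that $[A,B]$ has full column rank (equivalently, $Q$ is pointed) enters, guaranteeing in particular that $A$ has full column rank so the defining equations $\y^tA = \0$ together with $\y \ge \0$ cut out a pointed cone. I would also note that extreme rays are taken up to positive scaling, which is harmless since scaling $\y$ by a positive constant scales both sides of $\y^tB\x \le \y^t\c$ equally and does not change the inequality. With these two inclusions established, $\proj{Q;\x}$ and the solution set of $S$ coincide, so $S$ is a representation of $\proj{Q;\x}$, as claimed.
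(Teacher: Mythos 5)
Your proof is correct, and it is the standard argument for this projection lemma: the forward inclusion by left-multiplying the defining system by any $\y \in C$, and the reverse inclusion by combining the Farkas infeasibility certificate with the decomposition of an arbitrary element of $C$ into a nonnegative combination of extreme rays. The paper itself offers no proof of this statement, citing it to Chernikov, so there is nothing to compare against; your write-up supplies what the paper omits. One small correction: the pointedness of the projection cone $C$ does not come from $[A,B]$ (or $A$) having full column rank. Since $C$ is contained in the nonnegative orthant $\{\y \ge \0\}$, it contains no line and is therefore pointed unconditionally, so Lemma~\ref{le:genwithextr} applies regardless of any rank hypothesis on $A$; the full-column-rank assumption on $[A,B]$ plays no role in this particular lemma. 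Your remark about extreme rays being taken up to positive scaling is apt, and the edge case $C=\{\0\}$ (empty system $S$, projection equal to all of $\Q^q$) is also consistent with your argument since the Farkas certificate must then be nonzero.
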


  \Cref{le:projectwithextremeray} provides the main idea of the block
  elimination method.
However, the represention produced in this way may have redundant inequalities.
\ifCoRRversion
The following example from \cite{huynh1992practical} shows this point.
\begin{example}
  Let $P$ be the polyhedron represented by
  \begin{equation}
  P:=  \left\{
    \begin{aligned}
      12 x_1 + x_2 - 3 x_3 + x_4& \le 1 \\
      -36 x_1 - 2 x_2 + 18 x_3 - 11 x_4 & \le -2 \\
      -18 x_1 - x_2 + 9 x_3 - 7 x_4 & \le -1 \\
      45 x_1 + 4 x_2 - 18 x_3 + 13 x_4 & \le 4 \\
      x_1 & \ge 0 \\
      x_2 & \ge 0.
    \end{aligned}
    \right.
  \end{equation}
  The projection cone of $P$ w.r.t. $\u := \{ x_1, x_2 \}$ is
  \begin{equation}
    C:=\left \{
    \begin{aligned}
      12 y_1 - 36 y_2 -18 y_3 + 45 y_4 &= 0,\\
      y_1 -2 y_2 - y_3 + 4 y_4 & = 0, \\
      y_1 \ge 0, y_2 \ge 0, y_3 \ge 0, y_4 & \ge 0.
    \end{aligned}
    \right.
  \end{equation}
  The extreme rays of the cone $C$ are:
  $$(0, 0, 5, 2, 0, 3), (3, 0, 2, 0, 0,
  1), (0, 0, 0, 1, 45, 4), (1, 0, 0, 0, 12, 1), (0, 5, 0, 4, 0, 6),
  (3, 1, 0, 0, 0, 1).$$
  These extreme rays generate a representation of $\proj{P; \{x_3,
    x_4\}}$:
  \begin{equation}
    \left\{
    \begin{aligned}
      3 x_3 - 3 x_4 \le 1, & \ \ \ 9 x_3 - 11 x_4 \le 1, &  \ 6 x_3 -x_4 \le 2, \\
      -3 x_3 + x_4 \le 1, & \ -18 x_3 + 13 x_4  \le 4, & \ 9 x_3 - 8 x_4 \le 1. \\
    \end{aligned}
    \right.
  \end{equation}
  One can check that, in the above system of 
linear inequalities, the inequality $3 x_3 - 3 x_4 \le 1$ is redundant.
\end{example}

\fi
In~\cite{balas1998projection}, 
Balas observed that if the matrix $B$ is invertible,
then we can find a cone such that its extreme rays are in
one-to-one correspondence with the facets of the projection of the
polyhedron (the proof of this fact is similar to the proof of our Theorem 
\ref{thm:polarcone}). Using this fact, Balas developed an algorithm
to find all redundant inequalities for all cases, including the cases
where $B$ is singular.

It should be noted that, although we are using his idea, we have found some
flaws in Balas' paper.
In this section, we will explain the corrected form of Balas' algorithm.
To achieve this, we lift the polyhedron $Q$ to a space in
higher dimension by constructing the following objects.

\par \noindent {\bf \small \underline{ Construction of  $B_0$.}}
  Assume that the first $q$ rows of $B$, denoted as $B_1$,
  are independent.
  Denote the last $m-q$ rows of $B $ as $B_2$.
  Add $m-q$ columns, $\e_{q+1}, \ldots, \e_{m}$, to $B$, where $\e_i$
  is the $i$-th vector in the canonical basis of $\Q^m$, 
thus with $1$ in the $i$-th
  position and $0$'s anywhere else.
  The matrix $B_0$ has the following form:
  $$B_0 = \left[ \begin{aligned} B_1 & \quad \0 \\ B_2 &
      \quad I_{m-q} \end{aligned} \right].$$

To maintain consistency in the notation, let $A_0 = A$ and
  $\c_0 = \c$.

\par \noindent {\bf \small \underline{ Construction of  $Q^0$.}}
 We define:
  \begin{equation*}
    \label{eq:Q0}
    Q^0 := \{(\u,{\x^{\prime}}) \in \Q^p \times \Q^m
    \ | \ A_0\u + B_0\x^\prime \leq \c_0 \ , \ x_{q+1}=\cdots=x_{m} =
    0\}.
  \end{equation*}
  Here and after, we use $\x^{\prime}$ to represent the vector $\x \in
  \Q^q$, augmented with $m-q$ variables ($x_{q+1},\ldots,x_m$).
  Since the extra variables $(x_{q+1},\ldots,x_m)$ are assigned 
  to zero, we note that $\proj{Q;\x}$ and
  $\proj{Q^0;{\x^{\prime}}}$ are ``isomorphic'' by means of the bijection
  $\Phi$:
  \begin{center}
    $ \Phi: \begin{aligned}
       \proj{Q;\x} & \rightarrow \proj{Q^0;\x'} \\
      (x_1, \ldots, x_q) & \mapsto (x_1, \ldots, x_q, 0, \ldots, 0)
    \end{aligned}$
  \end{center}
  In the following, we will treat $\proj{Q;\x}$ and $\proj{Q^0;{\x'}}$
  as the same polyhedron when there is no ambiguity.
 
\par \noindent {\bf \small \underline{ Construction of  $W^0$.}}
Define $W^0$ to be the set of 
all $(\v , \w , v_0)\in \Q^q \times \Q^{m-q} \times
\Q$ satisfying
\begin{equation} \label{eq:cone}
  \{ (\v , \w , v_0)\ | \ [\v^t, \w^t]B_0^{-1}A_0 = 0,
     [\v^t, \w^t]B_0^{-1} \geq 0,
     -[\v^t, \w^t]B_0^{-1}\c_0+v_0 \ge 0  \}.
\end{equation}

This construction of $W^0$ is slightly different from the one in
Balas' work \cite{balas1998projection}. Indeed, we changed
$-[\v^t, \w^t]B_0^{-1}\c_0+v_0 = 0 \ {\rm to} \
-[\v^t, \w^t]B_0^{-1}\c_0+v_0 \ge 0$.
Similar to the discussion in Balas' work, the extreme rays
of the cone $\proj{W^0; \{ \v, v_0 \}}$ are used to construct the
minimal representation of the projection $\proj{Q; \x}$.
To prove this relation, we need a preliminary observation.

\begin{lemma}
 \label{le:projcharcone}
The operations ``computing the characteristic cone'' and
``computing projections'' commute.
To be precise, we have:
$\Char(\proj{Q; \x}) = \proj{\Char(Q); \x}$.
\end{lemma}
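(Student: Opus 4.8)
The plan is to prove the set equality $\Char(\proj{Q;\x}) = \proj{\Char(Q);\x}$ by a double inclusion, working throughout from the definitions of the characteristic cone and the projection. Recall that $Q = \{(\u,\x) \mid A\u + B\x \le \c\}$, so by the definition of characteristic cone we have $\Char(Q) = \{(\u,\x) \mid A\u + B\x \le \0\}$. The projection $\proj{Q;\x}$ is itself a polyhedron (this is guaranteed by repeated application of \Cref{lem:fmeorig}), so its characteristic cone is well-defined; the key point is that eliminating the right-hand side vector $\c$ (to pass to the characteristic cone) and eliminating the $\u$-variables (to pass to the projection) are operations that do not interfere with one another.

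First I would establish the inclusion $\proj{\Char(Q);\x} \subseteq \Char(\proj{Q;\x})$. Take $\y$ in the left-hand side, so there exists $\w$ with $(\w,\y) \in \Char(Q)$, i.e. $A\w + B\y \le \0$. I must show $\y \in \Char(\proj{Q;\x})$, which by definition means $\z + \y \in \proj{Q;\x}$ for every $\z \in \proj{Q;\x}$. Given such a $\z$, pick $\u$ with $(\u,\z) \in Q$, so $A\u + B\z \le \c$. Adding the two inequalities gives $A(\u+\w) + B(\z+\y) \le \c$, which exhibits $(\u+\w, \z+\y) \in Q$ and hence $\z+\y \in \proj{Q;\x}$. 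This direction is the routine one.

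The harder direction is $\Char(\proj{Q;\x}) \subseteq \proj{\Char(Q);\x}$, and I expect this to be the main obstacle, because it requires producing a witness $\u$-component for the characteristic-cone membership from purely projected data. My plan is to use the fact, guaranteed by \Cref{lem:fmeorig}, that $\proj{Q;\x}$ has a representation of the form $D\x \le \e$ obtained by Fourier-Motzkin elimination of $\u$. By the definition of characteristic cone applied to this representation, $\Char(\proj{Q;\x}) = \{\y \mid D\y \le \0\}$. The essential observation is that each inequality of $D\x \le \e$ arises as a nonnegative combination $\y^t(A\u + B\x) \le \y^t\c$ with $\y \ge \0$ and $\y^t A = 0$ (the extreme-ray combinations of \Cref{le:projectwithextremeray}); the left-hand coefficient row of $D$ is $\y^t B$ and the right-hand entry of $\e$ is $\y^t\c$. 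Thus $D\y_0 \le \0$ says exactly that $\y_0$ satisfies all the projected homogeneous inequalities $\y^t B \y_0 \le 0$ over the generators of the projection cone $C$, which is precisely the condition for $\y_0 \in \proj{\Char(Q);\x}$ by the same projection lemma applied to the homogeneous system $A\u + B\x \le \0$.

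An alternative, and perhaps cleaner, route for the hard inclusion avoids explicit reference to the Fourier-Motzkin representation and instead argues directly: if $\y_0 \in \Char(\proj{Q;\x})$, fix any $\x_0 \in \proj{Q;\x}$ with witness $\u_0$, so $A\u_0 + B\x_0 \le \c$. For every $\lambda \ge 0$ we have $\x_0 + \lambda \y_0 \in \proj{Q;\x}$, so there is some witness $\u_\lambda$ with $A\u_\lambda + B(\x_0 + \lambda\y_0) \le \c$. I would then examine the direction $(\u_\lambda - \u_0)/\lambda$ as $\lambda \to \infty$ and show, using a standard recession/compactness argument valid for polyhedra, that a limiting direction $\w_0$ exists with $A\w_0 + B\y_0 \le \0$, giving $(\w_0,\y_0) \in \Char(Q)$ and hence $\y_0 \in \proj{\Char(Q);\x}$. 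I anticipate the technical care here lies in justifying the existence of the limiting direction; since $Q$ is a polyhedron one can select witnesses lying on faces of bounded complexity, so the set of admissible $\u_\lambda/\lambda$ stays in a fixed polyhedral cone and the limit is attained. Either route closes both inclusions and yields the stated equality.
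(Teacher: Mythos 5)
Your proof is correct and takes essentially the same approach as the paper: the paper's (very terse) argument is exactly your first route for the hard inclusion, namely that the left-hand sides of the representation of $\proj{Q;\x}$ produced by Fourier--Motzkin/block elimination depend only on $[A,B]$ and not on $\c$, so projecting the homogeneous system $A\u+B\x\le\0$ yields precisely the homogenization $D\x\le\0$ of the representation $D\x\le\e$ of $\proj{Q;\x}$, i.e.\ $\Char(\proj{Q;\x})$. Your explicit easy inclusion and the alternative recession-direction argument are extra detail the paper omits, but the underlying idea is the same.
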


\iflong
\begin{proof}
  By the definition of the characteristic cone, we have $\Char(Q) = \{ (\u, \x) \ |
  \ A \u + B \x \le \0 \}$, whose representation has the same
  left-hand side as the one of  $Q$.
  The lemma is valid if we can show that the representation of
  $\proj{\Char(Q); \x}$ has the same left-hand side as
  $\proj{Q; \x}$.
 This is obvious with the {\FM} procedure.
\end{proof}

\fi

Theorem \ref{thm:polarcone} shows that extreme rays of the cone
$\overline{\proj{W^0; \{ \v, v_0 \} }}$, which is defined as
$$\overline{\proj{W^0; \{ \v, v_0 \} }}:=\{ (\v, -v_0) \ | \ (\v, v_0) \in \proj{W^0; \{\v, v_0\}} \},$$
are in
one-to-one correspondence with the facets of $\hom( \proj{Q; \x})$ and
as a result its extreme rays can be used to find the minimal
representation of $\hom( \proj{Q; \x})$.

\begin{theorem} 
\label{thm:polarcone}
  The polar cone of $\hom( \proj{Q; \x})$ is equal to $\overline{\proj{W^0; \{
      \v, v_0 \} }}$.
  \end{theorem}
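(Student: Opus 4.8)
The goal is to identify the polar cone of the homogenized projection $\hom(\proj{Q;\x})$ with the reflected cone $\overline{\proj{W^0;\{\v,v_0\}}}$. By \Cref{le:polarconeproperty}, once this identity is established, the facets of $\hom(\proj{Q;\x})$ correspond bijectively to the extreme rays of its polar, which is exactly what the surrounding discussion wants. So the work is entirely in proving the set equality of the two cones.

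First I would compute both sides explicitly as cones in $\Q^{q+1}$ (the coordinates being $\x$ together with the homogenizing variable $\xl$). For the left-hand side: since $\hom(\proj{Q;\x})$ is a polyhedral cone given in H-representation, its polar is the cone generated by the (outer) normals of those half-spaces; equivalently, by \Cref{le:polarconeproperty} and the definition of polar, $(\hom(\proj{Q;\x}))^*$ consists of all vectors $(\v,-v_0)$ such that $\v^t\x - v_0\xl \le 0$ holds for every $(\x,\xl)\in\hom(\proj{Q;\x})$. By \Cref{le:homextr}, the generators of $\hom(\proj{Q;\x})$ are the pairs $(\x,1)$ for $\x$ an extreme point and $(\x,0)$ for $\x$ an extreme ray of $\proj{Q;\x}$, so the polarity condition says $\v^t\x \le v_0$ on $\proj{Q;\x}$ and $\v^t\x \le 0$ on its recession directions. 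In other words, $(\v,-v_0)$ lies in the polar exactly when $\v^t\x \le v_0$ is a valid (redundant-or-facet) inequality for $\proj{Q;\x}$.

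Next I would unwind the right-hand side. By construction, $W^0$ is the cone \eqref{eq:cone}, and using \Cref{le:projectwithextremeray} together with the $B_0^{-1}$ change of variables, the defining conditions $[\v^t,\w^t]B_0^{-1}A_0=0$ and $[\v^t,\w^t]B_0^{-1}\ge 0$ are precisely Balas' encoding that $\v$ is obtainable as $\y^tB$ for a $\y$ in the projection cone $C$ of $Q^0$, while $-[\v^t,\w^t]B_0^{-1}\c_0+v_0\ge 0$ encodes $v_0 \ge \y^t\c$. After projecting onto $\{\v,v_0\}$ (eliminating $\w$) these are exactly the pairs for which $\v^t\x\le v_0$ is implied by $A\u+B\x\le\c$, i.e. is valid for $\proj{Q;\x}$. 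This is where \Cref{le:redundancyinpolyhedron} enters: the Farkas-type redundancy criterion is what certifies that the inequalities coming from $W^0$ are exactly the valid inequalities of the projection. Reflecting $v_0\mapsto -v_0$ to pass from $\proj{W^0;\{\v,v_0\}}$ to $\overline{\proj{W^0;\{\v,v_0\}}}$ matches the sign convention $(\v,-v_0)$ appearing on the polar side, so the two descriptions coincide.

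The main obstacle, and the step I would spend the most care on, is the bookkeeping around $B_0$ and its inverse: one must verify that introducing the auxiliary variables $x_{q+1},\dots,x_m$ (all set to zero via $Q^0$) and the padding columns $\e_{q+1},\dots,\e_m$ does not alter the projection, so that $\proj{Q^0;\x'}$ genuinely represents $\proj{Q;\x}$ — this is exactly the role of the bijection $\Phi$ and of \Cref{le:projcharcone}, which guarantees that homogenization (characteristic cone) and projection commute, letting me move freely between $\hom(\proj{Q;\x})$ and $\proj{\hom(Q);\x}$. The relaxation of Balas' equality constraint to the inequality $-[\v^t,\w^t]B_0^{-1}\c_0+v_0\ge 0$ is the crucial correction: with equality one would only capture the facet-defining inequalities up to a normalization that fails when $B$ is singular, whereas the inequality correctly carves out the full polar cone including its recession structure. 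I would therefore check carefully that each generator of one cone satisfies the defining inequalities of the other, proving the two containments separately and concluding equality.
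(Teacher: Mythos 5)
Your proposal is correct and follows essentially the same route as the paper's proof: one containment comes from reading elements of $\proj{W^0;\{\v,v_0\}}$ as Farkas multipliers certifying valid inequalities of $Q$ (checked on the extreme rays $(\s,1)$ and $(\s,0)$ of $\hom(\proj{Q;\x})$ via Theorem~\ref{le:homextr} and Lemma~\ref{le:projcharcone}), and the other from the redundancy criterion of Lemma~\ref{le:redundancyinpolyhedron}. The only cosmetic difference is that you invoke that criterion as an equivalence for both inclusions, whereas the paper handles the easy direction by directly multiplying the system by the nonnegative vector $[\v^t,\w^t]B_0^{-1}$.
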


\iflong
\begin{proof}
  By definition, the polar cone $(\hom (\proj{Q; \x})^*$ is equal to
 { $$\{ (\y, y_0) \ | \ [\y^t, y_0]
  [\x^t, \xl]^t \le 0 , \forall \ (\x, \xl) \in \hom( \proj{Q; \x})
   \}.$$}
 This claim follows immediately from: 
 $(\hom (\proj{Q; \x})^*  = \overline{\proj{W^0; \{ \v, v_0 \} }}$.
 We shall prove this latter equality in two steps.

 $(\supseteq)$ 
 For any $(\overline{\v}, -\overline{v}_0) \in \overline{\proj{W^0; \{
     \v, v_0 \} }}$,
we need to show that $[\overline{\v}^t, -\overline{v}_0][\x^t, \xl]^t \le
0$ holds whenever we have $(\x, \xl) \in \hom(\proj{Q; \x})$.
Remember that we assume that $Q$ is pointed.
Observe that $\hom(\proj{Q; \x})$ is also
pointed. Therefore, we only need to verify the desired property 
for the
extreme rays of $\hom(\proj{Q; \x})$, which either 
have the form $(\s, 1)$ or are equal to $(\s, 0)$ (Theorem~\ref{le:homextr}).
Before continuing, we should notice that
since  $(\overline{\v}, \overline{v}_0) 
 \in \proj{W^0; \{ \v, v_0 \}}$, there exists $\overline{\w}$ such
 that
 $\{[\overline{\v}^t, \overline{\w}^t] B_0^{-1}A_0 = 0, 
   -[\overline{\v}^t, \overline{\w}^t] B_0^{-1}\c_0+\overline{v}_0 \ge
   0,
   [\overline{\v}^t, \overline{\w}^t] B_0^{-1} \geq 0\}.$
   Cases 1 and 2 below conclude that $(\overline{\v}, -\overline{v}_0) \in
\hom(\proj{Q; \x})^*$ holds.
 
Case 1:
For the form $(\s,1)$, we have $\s \in \proj{Q; \x}$.
Indeed, $\s$ is an extreme
point of $\proj{Q; \x } $.
Hence, there exists $\overline{\u} \in \Q^{p}$, such that we have 
$A \overline{\u} + B \s \le \c$. By construction of $Q^0$, we have 
$A_0 \overline{\u} + B_0 \s' \le \c_0$, where $\s' = [\s^t , s_{q+1},
  \ldots,s_m]^t$ with $s_{q+1} = \cdots = s_m = 0$.
Therefore, we have: $[\overline{\v}^t , \overline{\w}^t]B_0^{-1} A_0
\overline{\u} + [\overline{\v}^t , \overline{\w}^t]B_0^{-1} B_0 \s'
\le [\overline{\v}^t , \overline{\w}^t]B_0^{-1} \c_0$.
This leads us to $\overline{\v}^t
\s = [\overline{\v}^t , \overline{\w}^t]\s' \le [\overline{\v}^t , \overline{\w}^t]B_0^{-1} \c_0 \le \overline{v}_0$. 
Therefore, we have $[\overline{\v}^t, -\overline{v}_0][\s^t, \xl]^t \le
0$, as desired.

Case 2:
For the form $(\s, 0)$,  we have $\s \in
\Char(\proj{Q; \x}) = \proj{\Char(Q); \x}$.
Thus, there exists
$\overline{\u} \in \Q^{p}$ such that $A \overline{\u} + B \s \le
\0 $.
Similarly to Case 1, we have $[\overline{\v}^t , \overline{\w}^t]B_0^{-1} A_0 \overline{\u} + [\overline{\v}^t , \overline{\w}^t]B_0^{-1} B_0 \s' \le [\overline{\v}^t , \overline{\w}^t]B_0^{-1} \0 $.  Therefore, we have $\overline{\v}^t \s
= [\overline{\v}^t , \overline{\w}^t]\s' \le [\overline{\v}^t ,
  \overline{\w}^t]B_0^{-1} \0 = 0 $, and thus, 
we have $[\overline{\v}^t, -\overline{v}_0][\s^t, \xl]^t \le
0$, as desired.

$(\subseteq)$ 
  For any 
$(\overline{\y}, \overline{y}_0) \in  \hom(\proj{Q; \x})^*$, we have 
$[\overline{\y}^t,\overline{y}_0][\x^t, \xl]^t \leq 0$ whenever we have 
  $(\x,\xl) \in  \hom(\proj{Q; \x})$.
  For any $\overline{\x} \in \proj{Q; \x}$,  we have
$\overline{\y}^t \overline{\x} \leq -\overline{y}_0$ since 
  $(\overline{\x},1) \in
  \hom(\proj{Q; \x})$.
 Therefore, we have $\overline{\y}^t \x \leq -\overline{y}_0$,
for all $\x \in \proj{Q; \x}$, which makes the inequality $\overline{\y}^t \x \leq -\overline{y}_0$ redundant in the system
$\{A\u + B\x \leq \c \}$.
By  Farkas' Lemma (see Lemma~\ref{le:redundancyinpolyhedron}), there exists $\p \ge \0 , \p \in \Q^m$ and
$\lambda \ge 0$ such that $\p^t A = \0 $, $\overline{\y} = \p ^t B$,
$\overline{y}_0
  = \p^t \c + \lambda$.
Remember that $A_0 = A$, $B_0 = [B, B']$, $\c_0 = \c$.
Here $B'$ is the last $m-q$ columns of $B_0$ consisting of $\e_{q+1},
\ldots, \e_m$. Let $\overline{\w} = \p^t B'$. We then have
$$\{ \p^t A_0 = \0 , \ [\overline{\y}^t, \overline{\w}^t] = \p^t B_0, -\overline{y}_0 \ge \p ^t \c_0,
\p \ge \0 \},$$ which is equivalent to
$$\{\p^t = [\overline{\y}^t, \overline{\w}^t] B_0^{-1}, [\overline{\y}^t, \overline{\w}^t] B_0^{-1} A_0 = \0 , -\overline{y}_0 \ge [\overline{\y}^t,
  \overline{\w}^t] B_0^{-1} \c_0, [\overline{\y}^t, \overline{\w}^t] B_0^{-1} \ge \0 \}.$$
Therefore, $(\overline{\y}, \overline{\w}, -\overline{y}_0) \in W^0$,
which leads us to $(\overline{\y}, -\overline{y}_0) \in \proj{W^0; \{ \v, v_0 \}}$. 
From this, we deduce that $(\overline{\y}, \overline{y}_0) \in \overline{\proj{W^0; \{ \v, v_0 \} }}$ holds.
\end{proof}

\fi

\begin{theorem}
\label{thm:correspondence}
  The minimal representation of $\proj{Q; \x}$ is given exactly by
  \begin{center}
  $ \{ \v^t \x \le v_0 \ | \ (\v, v_0) \in
  \sf{ExtremeRays}(\proj{W^0; (\v, v_0)}) \setminus \{({\bf 0}, 1) \}\}.$
  \end{center}
 \end{theorem}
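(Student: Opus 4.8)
The plan is to read the minimal representation of $P := \proj{Q; \x}$ off the facets of its homogenized cone $\hom(P)$, and then to transport those facets to extreme rays through the polarity already established in Theorem~\ref{thm:polarcone}. First I would record two structural facts. Since $Q$ is full-dimensional and a coordinate projection is a surjective (hence open) linear map, $P$ is full-dimensional (and pointed), so $\hom(P)$ is a full-dimensional pointed cone in $\Q^{q+1}$; consequently its polar cone is pointed. By Theorem~\ref{thm:polarcone} that polar cone is exactly $\overline{\proj{W^0; \{ \v, v_0 \}}}$, and the map $\sigma : (\v, v_0) \mapsto (\v, -v_0)$ is a linear isomorphism of $\Q^{q+1}$ carrying $\proj{W^0; \{ \v, v_0 \}}$ onto $\overline{\proj{W^0; \{ \v, v_0 \}}}$; hence $(\v, v_0)$ is an extreme ray of $\proj{W^0; \{ \v, v_0 \}}$ if and only if $(\v, -v_0)$ is an extreme ray of its image.

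Next I would invoke the polarity property (Lemma~\ref{le:polarconeproperty}): because $\hom(P)$ is full-dimensional, its facets are in one-to-one correspondence with the extreme rays of $\overline{\proj{W^0; \{ \v, v_0 \}}}$, each facet corresponding to the extreme ray generated by its outer normal. Composing this with $\sigma$, the extreme rays $(\v, v_0)$ of $\proj{W^0; \{ \v, v_0 \}}$ correspond bijectively to the facets of $\hom(P)$, the facet attached to $(\v, v_0)$ being the one with normal $(\v, -v_0)$, i.e. the inequality $\v^t \x - v_0 \xl \le 0$.

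It then remains to separate the facets of $\hom(P)$ that genuinely come from $P$ from the single spurious one. Writing $\hom(P) = \{ (\x, \xl) \mid A'\x - \c' \xl \le \0, \ -\xl \le 0 \}$ for a representation $A'\x \le \c'$ of $P$, Lemma~\ref{le:homfacet} shows that the non-redundant inequalities of $P$ correspond exactly to the facets of $\hom(P)$ other than possibly the facet $\xl \ge 0$, whose normal is $(\0, -1)$. Since $P$ is full-dimensional, its facets are in bijection with the inequalities of its minimal representation (unique up to positive scaling), and every facet-defining inequality $\v^t \x \le v_0$ of $P$ has $\v \ne \0$. Tracing this through $\sigma$ and the polarity, the facet $\xl \ge 0$ corresponds precisely to the extreme ray $(\0, 1)$ of $\proj{W^0; \{ \v, v_0 \}}$, while each genuine facet $\v^t \x \le v_0$ corresponds to an extreme ray $(\v, v_0)$ with $\v \ne \0$, hence $(\v, v_0) \ne (\0, 1)$. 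Assembling the chain of bijections, the inequalities of the minimal representation of $P$ are exactly $\{ \v^t \x \le v_0 \mid (\v, v_0) \in {\sf ExtremeRays}(\proj{W^0; \{ \v, v_0 \}}) \setminus \{ (\0, 1) \} \}$, which is the claim.

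The main obstacle is this last step: correctly isolating the extra inequality $\xl \ge 0$ and proving it is the unique facet of $\hom(P)$ not arising from a facet of $P$. This requires the case distinction captured by the (omitted) corollary on $(\0,1)$: if $\xl \ge 0$ is redundant in $\hom(P)$ then $(\0, 1)$ is not an extreme ray at all and its removal is vacuous, whereas if it is non-redundant then $(\0, 1)$ is a genuine extreme ray that must be discarded because it corresponds to no facet of $P$. Combined with the observation that no facet normal of the full-dimensional $P$ can have vanishing $\x$-part, this guarantees that the two families of facets never collide and that removing the single ray $(\0, 1)$ is both necessary and sufficient.
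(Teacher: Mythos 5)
Your proposal is correct and takes essentially the same route as the paper: Theorem~\ref{thm:polarcone} together with polarity yields a minimal representation of $\hom(\proj{Q;\x})$ from the extreme rays of $\proj{W^0;\{\v,v_0\}}$, and Lemma~\ref{le:homfacet} is then used to strip the one possible extra facet $\xl \ge 0$, i.e.\ the ray $(\0,1)$. You merely make explicit some details the paper leaves implicit (the full-dimensionality of the projection and the fact that no genuine facet normal of $\proj{Q;\x}$ can have vanishing $\x$-part, so discarding $(\0,1)$ never removes a real facet).
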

 
\begin{proof}
  By Theorem \ref{thm:polarcone}, a minimal representation
  of the homogenized cone
  $\hom(\proj{Q;\x})$ is given exactly by
  $\{  \v \x - v_0 \xl \le 0 \ | \ (\v, v_0) \in \sf{ExtremeRays} 
  (\proj{W^0; (\v, v_0)}) \}$.
  By \Cref{le:homfacet}, any minimal representation of $\hom(\proj{Q;\x})$ has at 
  most one more inequality than any minimal representation of $\proj{Q;\x}$.
  This extra inequality would be $\xl \geq 0$ and, in this case, 
  $\proj{W^0; (\v, v_0)}$ would have the extreme ray $(\0 , 1)$, which can
  be detected easily. Therefore, a minimal representation of 
$\proj{Q; \x}$ is given by
 $ \{ \v^t \x \le v_0 \ | \ (\v, v_0) \in
  \sf{ExtremeRays}(\proj{W^0; (\v, v_0)})\setminus \{({\bf 0}, 1)\} \}.$
\end{proof}


For simplicity, we call the cone $\proj{W^0; \{ \v, v_0 \}}$ the {\em \testcone} of $Q$ w.r.t. $\u$ and denote it
by $\PP_{\u}(Q)$. 
When $\u$ is empty, we define $\PP(Q) := \PP_{\u}(Q)$ and we call it the {\em \initialtestcone}.
\iflong
If there is no ambiguity, we use only $\PP_{\u}$ and $\PP$ to denote the {\testcone} and the {\initialtestcone}, respectively.
\fi
It should be noted that $\PP(Q)$ can be used to detect redundant
inequalities in the input system, as it is shown in 
Steps \ref{ln:inputredbegin} to \ref{ln:inputredend} of Algorithm
\ref{alg:mpr}.

\section{Minimal representation of the projected polyhedron}
\label{sec:MRPP}

In this section, we present our algorithm for removing all the
redundant inequalities generated during Fourier-Motzkin elimination.
Our algorithm detects and eliminates redundant inequalities, right
after their generation, using the {\testcone} introduced in
Section~\ref{sec:revision}.
Intuitively, we need to construct the cone $W^0$ and obtain a
representation of the {\testcone} $\proj{W^0; \{ \v, v_0 \}}$, each
time we eliminate a variable during FME.  This method is time
consuming because it requires to compute the projection of $W^0$ onto
$\{ \v, v_0 \}$ space at each step.  However, as we prove in
\Cref{le:initialprojection}, we only need to compute the
     {\initialtestcone}, using \Cref{alg:itc}, and the {\testcone}s,
     used in the subsequent variable eliminations, can be found
     incrementally without any extra cost.

Note that a byproduct of this algorithm is the {\em minimal projected
  representation} of the input system, according to the specified
variable ordering. This representation is useful for finding solutions
of linear inequality systems.  The projected representation was
introduced in~\cite{jing2017integerpoints,jing2017computing} and will
be reviewed in \Cref{le:projection}.

For convenience, we rewrite the input polyhedron $Q$ defined in
\Cref{eq:Q} as: $ Q = \{ \y \in \Q^{n} \ | \ \A \y \le \c\}$, where
$\A = [A, B]\in \Q^{m \times n}$, $n=p+q$ and $\y = [\u^t, \x^t]^t \in
\Q^n$.  We assume the first $n$ rows of $\A$ are linearly independent.

\begin{algorithm}
  \caption{Generate \initialtestcone}
  \label{alg:itc}
  \begin{algorithmic}[1]
    \REQUIRE{$S = \{ \A \y \le \c \}$: a representation of the input
      polyhedron $Q$;}
    \ENSURE{$\PP$: a representation of the \initialtestcone of $Q$}
    \STATE Construct $\A_0$ in the same way we constructed $B_0$, that is, $\A_0
:= [\A, \A']$, where $\A' := [\e_{n+1}, \ldots, \e_{m}]$ with
    $\e_i$ being the $i$-th vector of the canonical basis of $\Q^{m}$;
    \label{ln:a0}
\STATE
Let $W := \{ (\v, \w, v_0 ) \in \Q^{n} \times \Q^{m-n} \times \Q \ | \ - [\v^t, \w^t]\A_0^{-1} \c + v_0 \ge
0,  [\v^t, \w^t]\A_0^{-1} \ge \0 \}$;
\label{ln:W}
\STATE $\PP = \proj{W; \{ \v, v_0 \}}$;
  \label{ln:projwithblockelimination}
 \RETURN $\PP$
  \end{algorithmic}
\end{algorithm}

\begin{remark}\label{re:proj}
There are two important points about \Cref{alg:itc}.
First, we only need a representation of the 
{\initialtestcone} this representation needs not to be minimal. 
Therefore,  calling \Cref{alg:itc} in Algorithm~\ref{alg:mpr}
(which computes a minimal projected representation
of a polyhedron) does not lead to a recursive
call to Algorithm~\ref{alg:mpr}.
Second, to compute the projection $\proj{W; \{ \v, v_0 \}}$, we need
to eliminate $m-n$ variables from $m+1$ inequalities.  The block
elimination method is applied to achieve this.  As it is shown in
\Cref{le:projectwithextremeray}, the block elimination method will
require to compute the extreme rays of the projection cone (denoted by
$C$), which contains $m+1$ inequalities and $m+1$ variables.
However, considering the structural properties of the coeffient matrix
of the representation of $C$, we found that computing the extreme rays
of $C$ is equivalent to computing the extreme rays of another simplier
cone, which still has $m+1$ inequalities but only $n+1$ variables.
For more details, please refer to Step 3 of
\ifCoRRversion
\Cref{le:compcrtc}.
\else
\Cref{ap:compcrtc}.
\fi

\end{remark}
\ifCoRRversion

Lemma~\ref{le:initialprojection} shows how to obtain the {\testcone} $\PP_{\u}$
of the polyhedron $Q$ w.r.t. $\u$ from its {\initialtestcone} $\PP$.
This gives a very cheap way to generate all the {\testcone}s of $Q$
once its {\initialtestcone} is generated; this will be used in Algorithm~\ref{alg:mpr}.
To distinguish from the construction of $\PP$, 
we rename the variables $\v, \w, v_0$ as
$\v_{\u}, \w_{\u}, v_{\u}$, when
constructing $W^0$ and computing the test cone $\PP_{\u}$.
That is, we have $\PP_{\u} = \proj{W^0; \{ \v_{\u}, v_{\u} \}}$, where $W^0$
is the set of
all $(\v_{\u} , \w_{\u} , v_{\u}) \in \Q^q \times \Q^{m-q} \times
\Q$ satisfying
$$\{  (\v_{\u}, \w_{\u}, v_{\u} ) \ | \ 
       [\v_{\u}^t, \w_{\u}^t] B_0^{-1}A = \0 , -[\v_{\u}^t,
  \w_{\u}^t] B_0^{-1} \c + v_{\u} \ge 0, [\v_{\u}^t, \w_{\u}^t]
       B_0^{-1} \ge \0 \},$$
       while we have $\PP = \proj{W; \{ \v, v_0\}}$ where $W$ is the set of all $(\v , \w , v_0)\in \Q^n \times
       \Q^{m-n} \times \Q$ satisfying
       $\{ (\v, \w, v_0 )  \ | \ - [\v^t, \w^t]\A_0^{-1} \c + v_0 \ge
       0,  [\v^t, \w^t]\A_0^{-1} \ge \0 \}.$
       \fi
       \begin{lemma}\label{le:initialprojection}
  Representation of the {\testcone} $\PP_{\u}$  can be obtained from $\PP$ by setting coefficients of the corresponding
  $p$ variables of $\v$ to $0$ in the representation of $\PP$.
\end{lemma}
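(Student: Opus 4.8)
The plan is to rewrite both $\PP$ and $\PP_{\u}$ as Farkas-type cones described through a single nonnegative multiplier vector in $\Q^m$, and then to check that the description of $\PP_{\u}$ is precisely the one obtained from $\PP$ by forcing the $\u$-coordinates of $\v$ to vanish. First I would unwind \Cref{alg:itc}: in the defining system of $W$ (Line~\ref{ln:W}) I set $\p^t := [\v^t, \w^t]\,\A_0^{-1}$. Since $\A_0$ is invertible, $(\v, \w) \mapsto \p$ is a bijection of $\Q^n \times \Q^{m-n}$ onto $\Q^m$; and as the columns of $\A_0$ are the $n$ columns of $\A = [A, B]$ followed by $\e_{n+1}, \dots, \e_m$, the relation $\p^t\A_0 = [\v^t, \w^t]$ forces $\v = \A^t\p$ while identifying $\w$ with the last $m-n$ entries of $\p$. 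The constraints defining $W$ then collapse to $\p \ge \0$ and $\p^t\c \le v_0$. Keeping $\v$ fixed and letting $\w$ vary makes $\p$ sweep the whole fiber $\{\p \mid \A^t\p = \v\}$ (of dimension $m-n$, since $\A$ has full column rank), so eliminating $\w$ gives
\[
\PP \;=\; \{(\v, v_0) \mid \exists\, \p \ge \0,\ \A^t\p = \v,\ \p^t\c \le v_0 \}.
\]

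Applying the identical manipulation to $W^0$ with $\q^t := [\v_{\u}^t, \w_{\u}^t]\,B_0^{-1}$ (using that $B_0$ is invertible and that its columns are those of $B$ followed by $\e_{q+1}, \dots, \e_m$) turns the equality block $[\v_{\u}^t, \w_{\u}^t]B_0^{-1}A = \0$ into $A^t\q = \0$, and yields $\v_{\u} = B^t\q$ together with $\q \ge \0$ and $\q^t\c \le v_{\u}$. Eliminating $\w_{\u}$ then gives
\[
\PP_{\u} \;=\; \{(\v_{\u}, v_{\u}) \mid \exists\, \q \ge \0,\ A^t\q = \0,\ B^t\q = \v_{\u},\ \q^t\c \le v_{\u} \}.
\]

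To conclude, I would match the two forms. Because $\A = [A, B]$, we have $\A^t\p = [(A^t\p)^t, (B^t\p)^t]^t$, so the first $p$ entries of $\v$ — exactly the coefficients of the $\u$-variables in the inequality $\v^t\y \le v_0$ — equal $A^t\p$, and the last $q$ entries equal $B^t\p$. Setting those first $p$ coefficients to $0$ in a representation of $\PP$ amounts to substituting them by $0$, i.e. to intersecting $\PP$ with $\{\,\text{first } p \text{ coordinates of } \v = \0\,\}$; in the multiplier description this is exactly the added constraint $A^t\p = \0$. The surviving conditions are then $\exists\, \p \ge \0$ with $A^t\p = \0$, $B^t\p$ equal to the remaining $q$ coordinates of $\v$, and $\p^t\c \le v_0$, which is verbatim the description of $\PP_{\u}$ once the last $q$ coordinates of $\v$ are renamed $\v_{\u}$ and $v_0$ is renamed $v_{\u}$. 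This proves the claim.

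The main obstacle is the bookkeeping of the two substitutions: one has to verify that invertibility of $\A_0$ (resp.\ $B_0$) makes $\w$ (resp.\ $\w_{\u}$) a free parameter covering the entire solution fiber of $\A^t\p = \v$ (resp.\ $B^t\q = \v_{\u}$), so that projecting it out is genuinely the same as existentially quantifying the multiplier. Once this is in place, comparing the two cones is a routine matching of the block structure of $\A = [A, B]$.
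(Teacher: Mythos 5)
Your proposal is correct. It rests on the same underlying computation as the paper's proof --- the invertible changes of coordinates induced by $\A_0$ and $B_0$, together with the block identities $[\v^t,\w^t]\A_0^{-1}A=\v_1^t$, $[\v^t,\w^t]\A_0^{-1}B=\v_2^t$ and $[\v_{\u}^t,\w_{\u}^t]B_0^{-1}B=\v_{\u}^t$ --- but it packages that computation differently. The paper proves $\PP_{\u}=\PP|_{\v_1=\0}$ by two explicit inclusions, transporting a witness $\overline{\w}_{\u}$ to a witness $\overline{\w}$ via $[\overline{\v}^t,\overline{\w}^t]:=[\overline{\v}_{\u}^t,\overline{\w}_{\u}^t]B_0^{-1}\A_0$ and back via $\A_0^{-1}B_0$. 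You instead pass through the intermediate multiplier $\p=(\A_0^t)^{-1}[\v^t,\w^t]^t$ and eliminate $\w$ once and for all, obtaining the closed forms $\PP=\{(\v,v_0)\mid \exists\,\p\ge\0,\ \A^t\p=\v,\ \p^t\c\le v_0\}$ and $\PP_{\u}=\{(\v_{\u},v_{\u})\mid \exists\,\q\ge\0,\ A^t\q=\0,\ B^t\q=\v_{\u},\ \q^t\c\le v_{\u}\}$, after which the lemma is a one-line comparison of the block structure of $\A=[A,B]$. Your fiber-sweeping argument (that varying $\w$ for fixed $\v$ makes $\p$ range over all of $\{\p\mid\A^t\p=\v\}$) is exactly the point that needs care, and you justify it correctly from the invertibility of $\A_0$. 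A side benefit of your route is that the first closed form exhibits $\PP$ as the subsumption cone $\proj{T;\{\bfalpha,\beta\}}$ of Section~\ref{sec:relatedwork}, a fact the paper establishes separately; the paper's two-inclusion proof, by contrast, avoids introducing the auxiliary $m$-dimensional multiplier space and stays entirely within the $(\v,\w,v_0)$ coordinates used by Algorithm~\ref{alg:itc}.
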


\iflong
\ifCoRRversion
\else
To distinguish from the construction of $\PP$, 
we rename the variables $\v, \w, v_0$ as
$\v_{\u}, \w_{\u}, v_{\u}$, when
constructing $W^0$ and computing the test cone $\PP_{\u}$.
That is, we have $\PP_{\u} = \proj{W^0; \{ \v_{\u}, v_{\u} \}}$, where $W^0$
is the set of
all $(\v_{\u} , \w_{\u} , v_{\u}) \in \Q^q \times \Q^{m-q} \times
\Q$ satisfying
$$\{  (\v_{\u}, \w_{\u}, v_{\u} ) \ | \ 
       [\v_{\u}^t, \w_{\u}^t] B_0^{-1}A = \0 , -[\v_{\u}^t,
  \w_{\u}^t] B_0^{-1} \c + v_{\u} \ge 0, [\v_{\u}^t, \w_{\u}^t]
       B_0^{-1} \ge \0 \},$$
       while we have $\PP = \proj{W; \{ \v, v_0\}}$ where $W$ is the set of all $(\v , \w , v_0)\in \Q^n \times
       \Q^{m-n} \times \Q$ satisfying
       $\{ (\v, \w, v_0 )  \ | \ - [\v^t, \w^t]\A_0^{-1} \c + v_0 \ge
       0,  [\v^t, \w^t]\A_0^{-1} \ge \0 \}.$
       \fi
       \begin{proof}
  By Step \ref{ln:a0} of Algorithm \ref{alg:itc}, $[\v^t, \w^t]
  \A_0^{-1} \A = \v^t$ holds whenever $(\v, \w, v_0) \in W$.
  Rewrite  $\v$ as $\v^t = [\v_1^t, \v_2^t]$, where $\v_1$ and $\v_2$ are the
  first $p$ and last $n-p$ variables of $\v$.
  We have $[\v^t,  \w^t] \A_0^{-1}A = \v_1^t$ and
  $[\v^t,  \w^t] \A_0^{-1}B = \v_2^t$.
  Similarly, we have $[\v_{\u}^t, \w_{\u}^t] B_0^{-1} A = \0 $
  and $[\v_{\u}^t, \w_{\u}^t] B_0^{-1} B = \v_{\u}^t$ whenever
  $(\v_{\u}, \w_{\u}, v_{\u}) \in W^0$.
  This lemma holds if we can show $\PP_{\u} = \PP | _{\v_1 = \0 }$.
  We prove this in two steps.

  $(\subseteq)$ For any $(\overline{\v}_{\u}, \overline{v}_{\u}) \in
  \PP_{\u}$, there exists $\overline{\w}_{\u} \in \Q^{m-q}$ satisfying
  $(\overline{\v}_{\u}, \overline{\w}_{\u}, \overline{v}_{\u}) \in W^0$.
  Let $[\overline{\v}^t, \overline{\w}^t] :=
  [\overline{\v}_{\u}^t, \overline{\w}_{\u}^t] B_0^{-1} \A_0$, where
  $\overline{\v}^t = [\overline{\v}_1^t, \overline{\v}_2^t]$ with
  $\overline{\v}_1 \in\Q^{p}, \overline{\v}_2 \in \Q^{n-p}$ and
  $\overline{\w} \in \Q^{m-n}$.
  Then, $\overline{\v}_1^t = [\overline{\v}_{\u}^t, \overline{\w}_{\u}^t]
  B_0^{-1} A = \0 $
  and $\overline{\v}_2^t = [\overline{\v}_{\u}^t, \overline{\w}_{\u}^t]
  B_0^{-1} B = \overline{\v}_{\u} $
  due to $(\overline{\v}_{\u}, \overline{\w}_{\u},
  \overline{v}_{\u}) \in W^0$.
  Let $\overline{v}_0 = \overline{v}_{\u}$, it is easy to verify that
  $(\overline{\v}, \overline{\w}, \overline{v}_0) \in W$.
  Therefore, $(\0 , \overline{\v}_{\u},\overline{v}_{\u}) =
  (\overline{\v},  \overline{v}_0) \in \PP$.

  $(\supseteq)$ For any $(\0 , \overline{\v}_2, \overline{v}_0) \in
  \PP$, there exists $\overline{\w} \in \Q^{m-n}$ satisfying
  $(\0 , \overline{\v}_2, \overline{\w}, \overline{v}_0) \in W$.
  Let $(\overline{\v}_{\u}, \overline{\w}_{\u}) := (\0 ,
  \overline{\v}_2, \overline{\w}) \A_0^{-1}B_0$.
  We have $\overline{\v}_{\u} = (\0 ,
  \overline{\v}_2, \overline{\w}) \A_0^{-1}B = \overline{\v}_2$.
  Let $\overline{v}_{\u} = \overline{v}_0$, it is easy to verify
  $(\overline{\v}_{\u}, \overline{\w}_{\u}, \overline{v}_{\u}) \in
  W^0$.
  Therefore, $(\overline{\v}_2, \overline{v}_0) = (\overline{\v}_{\u},
  \overline{v}_{\u}) \in \PP_{\u}$.
\end{proof}

\fi


For the polyhedron $Q$, given a variable order $y_1> \cdots > y_{n}$,
for $1\le i \le n$, 
we denote by $Q^{(y_i)}$ the
  inequalities in the representation of $Q$ whose largest variable is
  $y_i$.

\begin{definition}[Projected representation]
\label{le:projection}
  The {\em projected representation} of $Q$ $w.r.t.$ the
  variable order $y_1> \cdots > y_{n}$, denoted $\PR(Q; y_1>
  \cdots > y_{n})$, is the linear system given by $Q^{(y_1)}$ if $n=1$, and
  is the conjunction of $Q^{(y_1)}$ and $\PR(\proj{Q; {\y_2}}; y_2 >
  \cdots > y_n)$ otherwise.  
We say that $P := \PR(Q; y_1>
  \cdots > y_{n})$  is a 
{\em minimal projected representation} if, for all $1 \le k \le n$,
every inequality of $P$ with $y_k$ 
as largest variable is not redundant among
all the inequalities of $P$ with variables among 
$y_k, \ldots, y_n$.
\end{definition}
We can generate the \textit{{\mpr}} of a polyhedron by \Cref{alg:mpr}.

\begin{algorithm}
	\caption{RedundancyTest}
	\label{alg:checkextremeray}
	\begin{algorithmic}[1]
          \REQUIRE{$(\PP, \ell)$: where \begin{inparaenum}[(i)] \item
              $\PP := \{ (\v, v_0) \in \Q^n \times \Q \ | \ M [\v^t, v_0]^t \le \0 \}$ with $M \in
              \Q^{m \times {(n+1)} }$, \item
              $\ell : \a^t \y \le c$ with $\a \in \Q^{n}$ and
              $c \in \Q$;
          \end{inparaenum}}
          \ENSURE{false if $[\a^t, c]^t$ is an extreme ray of $\PP$, true otherwise}
	  \STATE Let $M$ be the coefficient matrix of $\PP$ 
		\STATE Let $\s := M [\a^t, c]^t$
		\STATE Let $\zeta(\s)$ be the index set of the zero coefficients of $\s$
		\IF{$\rank{M_{\zeta(\s)}} = n$} 
		\STATE return false
		\ELSE 
		\STATE return true
		\ENDIF
	\end{algorithmic}
\end{algorithm}

\begin{algorithm}[!h]
  \caption{Minimal Projected Representation of $Q$}
  \label{alg:mpr}
  \begin{algorithmic}[1]
    \REQUIRE{$S = \{ \A \y \le \c \}$: a representation of the input
      polyhedron $Q$;}
    \ENSURE{A {\mpr} of $Q$;}
    \STATE Generate the {\initialtestcone} $\PP$ by Algorithm
    \ref{alg:itc};
    \STATE $S_0 := \{ \ \}$;
    \FOR{$i$ from $1$ to $m$} \label{ln:inputredbegin}
    \IF{{\rm RedundancyTest}($\PP, \A_i \y \le \c_i$) = false}
    \STATE $S_0 := S_0 \cup \{ \A_i \y \le \c_i \}$;
    \STATE $\PP := \PP |_{v_1 = 0}$;
    \ENDIF
    \ENDFOR \label{ln:inputredend}
    \FOR{$i$ from $0$ to $n-1$}
    \STATE $S_{i+1}:=\{ \ \}$;
    \FOR{${{\ell}_{\rm pos}} \in S_i$ with positive coefficient of $y_{i+1}$}
    \FOR{${{\ell}_{\rm neg}} \in S_i$ with negative coefficient of $y_{i+1}$}
    \STATE $\ell_{{\rm new}} := \combine({{\ell}_{\rm pos}}, {{\ell}_{\rm neg}}, y_{i+1})$;
    \IF{{\rm RedundancyTest}($\PP, \ell_{{\rm new}}$) = false}
    \STATE $S_{i+1} := S_{i+1} \cup \{ \ell_{{\rm new}} \}$;
    \ENDIF
    \ENDFOR
    \ENDFOR
    \FOR{$\ell \in S_i$ with zero coefficient of $y_{i+1}$}
    \IF{{\rm RedundancyTest}($\PP, \ell$) = false}
    \STATE $S_{i+1} := S_{i+1} \cup \{ \ell \}$;
    \ENDIF
    \ENDFOR
    \STATE $\PP := \PP |_{v_{i+1}=0}$;
    \ENDFOR
    \RETURN $S_0 \cup S_1 \cup \cdots \cup S_n$.
  \end{algorithmic}
\end{algorithm}

\section{Complexity estimates}
\label{sec:comp}

We analyze the computational complexity of \Cref{alg:mpr}, which
computes the minimal projected representation of a given polyhedron.
This computation is equivalent to eliminate all variables, one after
another, in Fourier-Motzkin elimination.  We prove that using our
algorithm, finding a minimal projected representation of a polyhedron
is singly exponential in the dimension $n$ of the ambient space.

The most consuming procedure in \Cref{alg:mpr} is finding the
{\initialtestcone}, which requires another polyhedron projection in
higher dimension.  As it is shown in Remark~\ref{re:proj}, we can use
block elimination method to perform this task efficiently.  This
requires the computations of the extreme rays of the projection cone.
The double description method is an efficient way to solve this
problem.  We begin this section by computing the bit complexity of the
double description algorithm.

\begin{lemma}[Coefficient bound of extreme rays]
\label{le:extcoefbound}
Let $S = \{ \x \in \Q^n \ | \  A \x \leq \0 \}$ be a
 minimal representation of a cone $C \subseteq \Q^n$, 
where $A \in \Q^{m \times n}$. Then, the 
absolute value of a coefficient in any extreme ray of $C$ 
is bounded over by $(n-1)^n \| A \|^{2(n-1)}$.
\end{lemma}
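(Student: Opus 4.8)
The plan is to reduce everything to bounding maximal minors of a full-rank submatrix of $A$. First I would invoke the algebraic test for extreme rays (\Cref{le:algetest}): if $\r$ is an extreme ray of $C$, then $\rank{A_{\zeta(\r)}} = n-1$, so I can extract from $A$ an $(n-1)\times n$ submatrix $A'$ of rank $n-1$ with $A'\r = \0$. Consequently $\r$ spans the one-dimensional null space of $A'$, and it suffices to bound the coordinates of one explicit generator of $\ker A'$ — recalling that an extreme ray is only determined up to a positive scalar, so the statement implicitly refers to the particular representative I construct.

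Next I would produce such a generator by Cramer's rule. Choosing $n-1$ columns of $A'$ that form an invertible $(n-1)\times(n-1)$ submatrix $B$, and letting $\a$ be the remaining column, the vector whose free coordinate equals $\det(B)$ and whose remaining coordinates are the entries of $-\mathrm{adj}(B)\,\a = -\det(B)\,B^{-1}\a$ lies in $\ker A'$; this is immediate from $B\,\mathrm{adj}(B) = \det(B)\,I$. Equivalently, the coordinates of this generator are, up to sign, the $(n-1)\times(n-1)$ minors of $A'$, hence minors of $A$, so any representative of $\r$ is a scalar multiple of this minor vector.

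The final step is to bound these coordinates. The coordinate $\det(B)$ is bounded by Hadamard's inequality by $(\sqrt{n-1}\,\|A\|)^{n-1}$. For the remaining coordinates $-(\mathrm{adj}(B)\,\a)_i$, I would bound the adjugate entries through $|\mathrm{adj}(B)_{ik}| = |\det(B)|\,|(B^{-1})_{ik}| \le (\sqrt{n-1}\,\|A\|)^{n-1}(\sqrt{n-2}\,\|A\|^{n-2})$, reusing the coefficient bound for inverses from Section~\ref{costmodel}, and then sum the $n-1$ products $\mathrm{adj}(B)_{ik}\,a_k$ with each $|a_k| \le \|A\|$. The powers of $\|A\|$ collect to exactly $\|A\|^{2(n-1)}$, while the combinatorial factors $(n-1)(\sqrt{n-1})^{n-1}\sqrt{n-2} = (n-1)^{(n+2)/2}$ are comfortably absorbed into $(n-1)^n$ for $n\ge 2$; the $\det(B)$ coordinate satisfies the same bound, giving the claimed estimate $(n-1)^n\|A\|^{2(n-1)}$.

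The main obstacle I anticipate is not any one estimate but the bookkeeping around normalization and the rational (rather than integer) nature of $A$: one must fix the minor representative before the phrase ``coefficient of an extreme ray'' is meaningful. I would also emphasize that this bound is deliberately generous. Applying Hadamard directly to the adjugate entries (which are themselves $(n-2)\times(n-2)$ minors) yields the tighter exponent $\|A\|^{n-1}$; the looser product route above is what makes the exponent $2(n-1)$ transparent, and since the lemma only feeds an upper bound into the complexity analysis of the double description method, the generous form is entirely sufficient.
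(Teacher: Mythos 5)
Your proposal is correct, and its skeleton coincides with the paper's: both invoke the algebraic test for extreme rays (\Cref{le:algetest}) to extract an $(n-1)\times n$ submatrix $A'$ of rank $n-1$ with $A'\r=\0$, and both then reduce the lemma to bounding a generator of the one-dimensional null space of $A'$. Where you diverge is in how that final bound is obtained: the paper simply cites Proposition~6.6 of Storjohann's thesis as a black box, whereas you prove the needed estimate from scratch by exhibiting the Cramer/adjugate null vector $\left(-\mathrm{adj}(B)\,\a,\ \det(B)\right)$ and bounding its entries via Hadamard's inequality together with the inverse-coefficient bound from Section~\ref{costmodel}. Your computation is sound --- the factorization $|\mathrm{adj}(B)_{ik}|=|\det(B)|\,|(B^{-1})_{ik}|$ does collect the powers of $\norm{A}$ to exactly $2(n-1)$, and $(n-1)^{(n+2)/2}\le (n-1)^n$ for $n\ge 2$ --- and you are right both that the bound is deliberately generous (the coordinates are, up to sign, maximal minors of $A'$, so Hadamard alone gives exponent $n-1$) and that the statement only makes sense once a representative of the ray is fixed, a normalization point the paper glosses over. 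The self-contained argument buys transparency and makes the provenance of the exponent $2(n-1)$ visible; the paper's citation buys brevity and defers the integer-versus-rational bookkeeping to Storjohann. One small caveat: your absorption of the $\det(B)$ coordinate into $\norm{A}^{2(n-1)}$ implicitly assumes $\norm{A}\ge 1$, which is harmless here (and equally implicit in the paper) but worth stating.
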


\iflong
\begin{proof}
From the properties of extreme rays, see
Section~\ref{sec:PolyhedralCones},
\ifCoRRversion
by \Cref{le:algetest},
\fi
we know that when $\r$ is an extreme ray, there
exists a sub-matrix $A' \in \Q^{(n-1) \times n}$ of $A$, such that
$A'\r=0$. This means that $\r$ is in the null-space of $A'$. Thus, the
claim follows by proposition 6.6 of \cite{storjohann2013algorithms}.
\end{proof}

\fi

\begin{lemma}
\label{le:ddcomp}
Let $S = \{ \x \in \Q^n \ | \  A\x \leq \0 \}$ be the minimal representation of a cone $C \subseteq
\Q^n$, where $A \in \Q^{m \times n}$.
\ifCoRRversion
The double description method, as specified in \Cref{alg:DDmethod},
\else
The double description method
\fi
requires $O(m^{n+2} n^{\theta+\epsilon}{h}^{1+\epsilon})$
bit operations, where $h$ is the height of the matrix $A$.
\end{lemma}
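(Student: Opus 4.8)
The plan is to analyze \Cref{alg:DDmethod} iteration by iteration, bounding separately the number of ray pairs processed and the bit cost of a single adjacency test, then multiplying these estimates together with the number of outer iterations. First I would fix the combinatorial quantities. Since $C$ is pointed, $A$ has full column rank $n$, so the initial invertible block $A^0$ consists of $n$ rows and the outer \textbf{while} loop runs $m-n = O(m)$ times. By \Cref{le:maxextr}, every intermediate cone $C^{i-1}$ — being cut out by at most $m$ half-spaces in $\Q^n$ — has at most $m^{\lfloor n/2 \rfloor}$ extreme rays, whence $|J^+| \le m^{\lfloor n/2 \rfloor}$ and $|J^-| \le m^{\lfloor n/2 \rfloor}$. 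Consequently the nested \textbf{for} loops over $(j,j') \in J^+ \times J^-$ process at most $m^{2\lfloor n/2 \rfloor} \le m^{n}$ pairs per iteration.

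Next I would pin down the cost of one adjacency test. By \Cref{prop:adjtest}, testing $(\r_j,\r_{j'})$ amounts to computing $\rank{A_{\zeta(\r_j) \cap \zeta(\r_{j'})}}$. The crucial observation is that this rank is taken on a \emph{submatrix of the input} $A$, so its entries have height $h$ no matter how large the ray coefficients have grown; by the cost model this costs $O(m n^{\theta+\epsilon} h^{1+\epsilon})$ word operations. The zero sets $\zeta(\r_j)$ are computed once per iteration from the products $A\r_j$ formed in the partition step. To cost those I would invoke \Cref{le:extcoefbound}, which bounds every extreme-ray coefficient by $(n-1)^n\|A\|^{2(n-1)}$ and hence caps the height of every ray produced in the run at $O(nh)$. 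Thus each product $A\r_j$ uses $O(mn)$ arithmetic operations on integers of bit size $O(nh)$, which is dominated by the rank test as soon as one recalls $\theta \ge 2$. The same height bound shows that each new ray $\r_{\rm new} = (A_i\r_j)\r_{j'} - (A_i\r_{j'})\r_j$, together with the reduction to its primitive part, is likewise dominated.

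Multiplying the three bounds — $O(m)$ iterations, $O(m^{n})$ pairs per iteration, and $O(m n^{\theta+\epsilon} h^{1+\epsilon})$ per adjacency test — yields the claimed $O(m^{n+2} n^{\theta+\epsilon} h^{1+\epsilon})$ bit operations. I would close by checking that the setup is negligible: selecting the $n$ independent rows and forming $R^0 = (A^0)^{-1}$ costs $O(n^{\theta+1+\epsilon} h^{1+\epsilon})$ by the inversion estimate in the cost model, which is absorbed into the total.

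The main obstacle is the bit-complexity bookkeeping rather than the combinatorics. One must argue that the coefficients of \emph{all} intermediate rays remain inside the $(n-1)^n\|A\|^{2(n-1)}$ bound of \Cref{le:extcoefbound} — independently of the iteration count, which is what prevents a height blow-up over the $m-n$ steps — and that each surviving ray is genuinely an extreme ray of a cone defined by a submatrix of $A$, so the bound actually applies. The second delicate point is to keep the adjacency test's rank evaluated on $A$ (height $h$) and not on a matrix inheriting the inflated height $O(nh)$ of the rays, so that the per-test cost stays $O(m n^{\theta+\epsilon} h^{1+\epsilon})$; aligning the soft-$O$ accounting of the $n^{\epsilon}$ and $h^{\epsilon}$ factors with the cost-model estimates is the last thing to verify.
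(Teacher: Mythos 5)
Your proposal is correct and follows essentially the same route as the paper's proof: bound the $O(m)$ incremental steps, the $O(m^n)$ ray pairs per step via Lemma~\ref{le:maxextr}, and the cost of each adjacency test as a rank computation on a height-$h$ submatrix of $A$, with Lemma~\ref{le:extcoefbound} controlling the ray coefficients so that the partition and combination steps are dominated. The paper organizes the same three per-iteration costs as $C_1, C_2, C_3$ and sums over $t$, but the decomposition and the dominant term $O(m^{n+2} n^{\theta+\epsilon} h^{1+\epsilon})$ are identical to yours.
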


\iflong
\begin{proof}
To analyze the complexity of the DD method 
after adding $t$ inequalities, with $n \leq t \leq m$, 
the first step is to partition the extreme rays 
at the ${t-1}$-iteration, 
with respect to the newly added inequality.
Note that we have at most $(t-1)^{\lfloor{\frac{n}{2}}\rfloor}$ 
extreme rays
\ifCoRRversion
(\Cref{le:maxextr})
\else
(\Cref{eq:EC})
\fi
whose coefficients can be bounded
over  by $(n-1)^n \|A\|^{2(n-1)}$ (Lemma~\ref{le:extcoefbound}) 
at the ${t-1}$-iteration.
Hence, this step needs at most
$C_1 := (t-1)^{\lfloor{\frac{n}{2}}\rfloor} \times n \times \MC (\log((n-1)^n \|A\|^{2(n-1)})) \leq O(t^{
\lfloor \frac{n}{2} \rfloor} n^{2+\epsilon} h^{1+\epsilon})$ bit operations. 
After partitioning the vectors, the next
step is to check adjacency for each pair of vectors.
The cost of this step is equivalent to computing the 
rank of a sub-matrix $A' \in \Q^{(t-1) \times n}$ of $A$.
 This should be done for $\frac{t^n}{4}$ pairs of vectors. This step needs at most
$C_2 := \frac{t^n}{4} \times O((t-1)n^{\theta+\epsilon}h^{1+\epsilon}) \leq O(t^{n+1}n^{\theta+\epsilon}h^{1+\epsilon})$
 bit operations.
 \ifCoRRversion
 By \Cref{le:maxextr}, 
 we
 \else
 We
 \fi
 know there are at most $t^{\lfloor{\frac{n}{2}}\rfloor}$ pairs of adjacent extreme rays. 
The next step is to combine every pair of adjacent vectors 
in order to obtain a new extreme ray.
This step consists of $n$ multiplications in {\Q} of coefficients with absolute value
bounded over by  $(n-1)^n \|A\|^{2(n-1)}$ 
(Lemma~\ref{le:extcoefbound}) and this should be done 
for at most $t^{\lfloor{\frac{n}{2}}\rfloor}$ vectors. 
Therefore, the bit complexity of this step, is no more than
$C_3 := t^{\lfloor{\frac{n}{2}}\rfloor} \times n \times \MC (\log((n-1)^n \|A\|^{2(n-1)})) \leq
O(t^{\lfloor \frac{n}{2} \rfloor} n^{2+\epsilon} h^{1+\epsilon})$.
Finally, the complexity of step $t$ of the algorithm is $C := C_1+ C_2+C_3$. 
The claim follows after simplifying $m \cdot C$.
\end{proof}

\fi

\begin{lemma}[Complexity of constructing the {\initialtestcone}]
  \label{le:compcrtc}
Let $h$ be the maximum height of $A$ and $\c$ in the input system,
then generating the {\initialtestcone} (Algorithm \ref{alg:itc})
requires at most
$O(m^{n+3+\epsilon}(n+1)^{\theta+\epsilon}h^{1+\epsilon})$ bit
operations.  Moreover, $\proj{W; \{\v, v_0\}}$ can be represented by
$O(m^{\lfloor \frac{ n+1 }{2}\rfloor })$ inequalities, each
with a height bound of $O(m^{\epsilon}n^{2+\epsilon}h)$.
\end{lemma}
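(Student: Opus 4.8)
The plan is to walk through the three operations performed by \Cref{alg:itc}---constructing $\A_0$, assembling $W$, and the block elimination producing $\PP = \proj{W;\{\v,v_0\}}$---and to charge each of them, showing that the block elimination dominates. First I would record that $\A_0 = [\A, \A']$ is invertible: its trailing $m-n$ columns are $\e_{n+1}, \ldots, \e_m$, so $\det \A_0$ equals, up to sign, the determinant of the leading $n \times n$ block of $\A$, which is nonzero because the first $n$ rows of $\A$ are linearly independent by hypothesis. Since the appended columns are $0/1$ vectors, the height of $\A_0$ is $h$, so by the cost model computing $N := \A_0^{-1}$ takes $O(m^{\theta+1+\epsilon}h^{1+\epsilon})$ bit operations, and the entries of $N$ have height $O(mh)$ (from the bound $\sqrt{m-1}\,\|\A_0\|^{m-1}$). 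Assembling $W$ then only requires the products $N\c$ and a reorganisation of $N$ into the inequality system, whose cost is dominated by the inversion.

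Next I would treat the projection $\PP = \proj{W;\{\v,v_0\}}$, where the real work and the crux of the argument lie. Naively, \Cref{le:projectwithextremeray} asks for the extreme rays of a projection cone living in $\Q^{m+1}$; enumerating those directly would be exponential in $m$ and would ruin the estimate. The key step, flagged in \Cref{re:proj}, is the structural observation that the coefficient matrix of this projection cone has a special shape that lets us replace it by an equivalent cone with still $m+1$ facets but only $n+1$ variables. Concretely, after substituting $\p^t := [\v^t,\w^t]\A_0^{-1}$ and eliminating $\w$, one sees that $\PP$ is exactly the cone generated by the rows $(\A_i, c_i)$ of $[\A \mid \c]$ together with $(\0,1)$; hence its $H$-representation is obtained by enumerating the extreme rays of the polar of this $(n+1)$-dimensional, $(m+1)$-facet cone, whose representation matrix has height $O(h)$. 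Making this dimension reduction precise, and checking that the resulting $(n+1)$-variable cone is the correct one, is the main obstacle.

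With the dimension reduction in hand, the cost of the extreme-ray enumeration follows from the double description estimate \Cref{le:ddcomp}, applied with ambient dimension $n+1$ and $m+1$ facets, giving $O((m+1)^{(n+1)+2}(n+1)^{\theta+\epsilon}h^{1+\epsilon}) = O(m^{n+3+\epsilon}(n+1)^{\theta+\epsilon}h^{1+\epsilon})$ bit operations. Comparing this with the $O(m^{\theta+1+\epsilon}h^{1+\epsilon})$ inversion cost (and the negligible cost of assembling $W$), the block elimination dominates for all $n \ge 1$ since $n+3 \ge \theta+1$, which yields the claimed overall bound.

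Finally, for the size of the output I would invoke \Cref{le:maxextr}: the number of inequalities of $\PP$ equals the number of facets of the generated $(n+1)$-dimensional cone, equivalently the number of extreme rays of its polar, a cone in $\Q^{n+1}$ with $m+1$ facets, which is at most $(m+1)^{\lfloor (n+1)/2\rfloor} = O(m^{\lfloor (n+1)/2\rfloor})$. For the height of a single inequality, each inequality is a primitive representative of an extreme ray of the reduced $(n+1)$-dimensional cone, so \Cref{le:extcoefbound} bounds its coefficients by $n^{\,n+1}\,\|M\|^{2n}$ with $\|M\| = 2^{O(h)}$; taking logarithms gives a height of $O(nh)$ up to logarithmic and dimension factors, comfortably within the stated bound $O(m^{\epsilon}n^{2+\epsilon}h)$.
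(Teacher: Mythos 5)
Your argument is correct and reaches all three claimed bounds, but the way you handle the dominant step is genuinely different from the paper's. The paper exploits the block structure of $\A_0$ (an invertible $n\times n$ block $A_1$ on top of $[A_2 \;\; I_{m-n}]$) so that only $A_1$ is inverted, keeping $\height(\A_0^{-1})\in O(n^{1+\epsilon}h)$ rather than your $O(mh)$; it then forms the block-elimination projection cone of \Cref{le:projectwithextremeray} in $\Q^{m+1}$ and uses the identity block to solve its equations for $m-n$ of the multipliers, reducing to a cone $C'\subseteq\Q^{n+1}$ with $m+1$ inequalities whose representation matrix is built from $-A_2A_1^{-1}$ and part of $\A_0^{-1}\c$ (height $O(m^{\epsilon}+n^{1+\epsilon}h)$), to which \Cref{le:ddcomp} is applied. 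You instead substitute $\p^t=[\v^t,\w^t]\A_0^{-1}$ and observe that $\proj{W; \{\v, v_0\}}$ is the cone generated by the rows of $[\A \mid \c]$ together with $(\0,1)$ --- this is precisely the {\sc} of Section~\ref{sec:sc}, whose equivalence with $\PP$ the paper establishes separately but does not use in this proof --- and you then dualize a cone with $m+1$ constraints in $n+1$ variables whose representation matrix has height exactly $h$. Both routes feed the same parameters into \Cref{le:ddcomp} and \Cref{le:maxextr}, hence give the same $O(m^{n+3+\epsilon}(n+1)^{\theta+\epsilon}h^{1+\epsilon})$ cost and $O(m^{\lfloor(n+1)/2\rfloor})$ facet count; yours yields the slightly sharper output height $O(n^{1+\epsilon}h)$ via \Cref{le:extcoefbound} and makes the inversion of $\A_0$ essentially irrelevant to the projection, at the price of a coarser $O(m^{\theta+1+\epsilon}h^{1+\epsilon})$ inversion estimate, which, as you note, is still dominated. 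Two small points of care: the cone you build has $m+1$ \emph{generators} (it is its polar that has at most $m+1$ facets), and your $O(mh)$ bound on $\height(\A_0^{-1})$ would be too weak if it were propagated into the paper's $C'$, so it matters that your Step 3 bypasses the explicit inverse entirely.
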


\iflong
\begin{proof}
We analyze Algorithm~\ref{alg:itc} step by step.
 
{\noindent \small \bf  Step 1: construction of $A_0$ from $A$.}
The cost of this step can be neglected. 
However, it should be noticed that the matrix $A_0$ 
has a special structure.
Without loss of generality, we cam 
assume that the first $n$ rows of $A$ are
linearly independent.
The matrix $A_0$ has the following structure 
$A_0$ $=\left( \begin{aligned} A_1 & \quad \0 \\ A_2 & \quad I_{m-n} \end{aligned} \right)$, 
where $A_1$ is a full rank matrix in $\Q^{n \times n}$ and $A_2 \in \Q^{(m-n) \times n}$.

{\noindent \small \bf Step 2: construction of the cone $W$.}
Using the structure of the matrix $A_0$, its inverse can be expressed as
 $A_0^{-1} = \left( \begin{aligned} A_1^{-1} & \quad \0 \\ -A_2A_1^{-1} & \quad
      I_{m-n} \end{aligned} \right)$.
Also, from \Cref{costmodel} we have $\| A_1^{-1} \| \le (\sqrt{n-1} \| A_1 \|)^{n-1}$.
Therefore, $\|A_0^{-1}\| \le n^{\frac{n+1}{2}} \| A \|^q$, and  
$\|A_0^{-1} \c \| \le n^{\frac{n+3}{2}} \| A \|^n \| \c \| + (m-n)\|\c\|$. That is, 
$\height(A_0^{-1}) \in O(n^{1+\epsilon}h)$ and $\height(A_0^{-1} \c) \in O(m^{\epsilon} + n^{1+\epsilon}h)$.
As a result, height of coefficients of $W$ can be bounded over 
by $O(m^{\epsilon} + n^{1+\epsilon}h)$.

To estimate the bit complexity, we need the following consecutive steps:
\begin{itemize}
\item[-] Computing $A_0^{-1}$, 
which requires 
\begin{center} 
$\begin{aligned} &O(n^{\theta + 1 + \epsilon}h^{1+\epsilon})
 +O((m-n) n^2 \MC(\max(\height(A_2), \height(A_1^{-1}))))\\ \le & O(mn^{\theta + 1 + \epsilon}h^{1+\epsilon}) \text{ bit
 operations;} \end{aligned}$ 
 \end{center}
\item[-]Constructing $W:= \{ (\v, \w, v_0 ) \ | \ - [\v^t, \w^t]\A_0^{-1} \c + v_0 \ge
0,  [\v^t, \w^t]\A_0^{-1} \ge \0 \}$ requires at most
\begin{center} 
 $\begin{aligned}
C_1:= & O(m^{1+\epsilon}n^{\theta + 1 +
  \epsilon}h^{1+\epsilon})+{O(m n
  \MC(\height(A_0^{-1},\c)))}\\
+ & O((m-n)h) \le  O(m^{1+\epsilon}n^{\theta+\epsilon+1}h^{1+\epsilon}) \text{ bit operations.}
\end{aligned}$ 
\end{center}           
\end{itemize}

{\noindent \small \bf Step 3: projecting $W$ and finding the {\initialtestcone}.}
Following Lemma~\ref{le:projectwithextremeray}, we obtain a
representation of $\proj{W; \{\v, v_0\}}$ through finding extreme rays of the 
corresponding projection cone. 

\noindent
Let $E = (-A_2A_1^{-1})^t \in \Q^{n \times (m-n)}$ and $\g^t$ be the last $m-n$ elements of
$(A_0^{-1}\c)^t$. Then, the projection cone can be represented by:
\begin{center}
$C =\{ \y \in\Q^{m+1} \ | \ \y^t \left( \begin{aligned} E
\\ \g^t \\ I_{m-n} \end{aligned}
         \right) = \0 , \y \ge \0 \}.$
\end{center}
Note that $y_{n+2}, \ldots, y_{m+1}$ can be solved from the system of equations
in the representation of $C$. We substitute them in the inequalities and obtain 
a representation of the cone $C'$, given by:
\begin{center}
$C' = \{ \y' \in \Q^{n+1}
 \ | \ \y'^t \left( \begin{aligned} E \\ \g^t  \end{aligned}\right) \leq \0 , \y' \ge \0 \}$
\end{center}
In order to find the extreme rays of the cone $C$, we can find 
the extreme rays of the cone $C'$
and then back-substitute them into the equations to find the extreme rays of $C$.
Applying \Cref{alg:DDmethod} to $C'$, we can obtain all extreme rays
of $C'$, and subsequently, the extreme rays of $C$. 
The cost estimate of this step is bounded over by
the complexity of \Cref{alg:DDmethod} with $C'$ as input.
This operation requires at most
    $C_2 := O(m^{n+3}
    (n+1)^{\theta+\epsilon} \max(\height(E,\g^t))^{1+\epsilon})
    \le O(m^{n+3+\epsilon}
    (n+1)^{\theta+\epsilon}h^{1+\epsilon})$ bit operations.
The overall complexity of the algorithm can be bounded over by:
$C_1+C_2 \le  O(m^{n+3+\epsilon}(n+1)^{\theta+\epsilon}h^{1+\epsilon}).$
Also, by \Cref{le:extcoefbound} and \Cref{le:ddcomp}, we know that
the cone $C$ has at most $O(m^{\lfloor \frac{n+1}{2} \rfloor})$ distinct extreme
rays, each with height no more than $O(m^{\epsilon}n^{2+\epsilon}h)$.
 That is,  $\proj{W^0; \{\v, v_0\}}$ can be represented by at most $O(m^{\lfloor \frac{ n+1 }{2}\rfloor })$ inequalities,
each with a height bound of $O(m^{\epsilon}n^{2+\epsilon}h)$.
\end{proof}

\fi

\begin{lemma}
\label{le:extremeraycheck}
\Cref{alg:checkextremeray} runs within {$O(m^{\frac{n}{2}}n^{\theta+\epsilon}h^{1+\epsilon})$} bit operations.
\end{lemma}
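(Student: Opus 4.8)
The plan is to bound separately the cost of the three operations carried out by Algorithm~\ref{alg:checkextremeray} — forming the vector $\s = M[\a^t,c]^t$ in Step~2, extracting the zero set $\zeta(\s)$ in Step~3, and computing $\rank{M_{\zeta(\s)}}$ in Step~4 — and then to argue that the rank computation dominates. The essential structural input comes from \Cref{le:compcrtc}: the coefficient matrix $M$ of the {\testcone} $\PP$ has $O(m^{\frac{n}{2}})$ rows (the projection cone of $Q$ has at most $O(m^{\lfloor (n+1)/2\rfloor})$ extreme rays), while it has exactly $n+1$ columns because $\PP \subseteq \Q^n \times \Q$, and the candidate $[\a^t,c]^t$ is tested for being an extreme ray of a cone in $\Q^{n+1}$, so the target rank is $n=(n+1)-1$. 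After clearing denominators we may regard $M$ as an integer matrix of height $h$ without changing these bounds, which lets me invoke directly the integer rank and matrix-multiplication estimates of Section~\ref{costmodel}.

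First I would dispatch the two cheap steps. The product $\s = M[\a^t,c]^t$ is a single matrix--vector multiplication of an $O(m^{\frac{n}{2}}) \times (n+1)$ matrix by an $(n+1)$-vector, hence uses $O(m^{\frac{n}{2}}\,n)$ coefficient operations, each on integers of height $O(h)$; by the Sch\"onhage--Strassen bound for $\MC$ this costs $O(m^{\frac{n}{2}} n^{1+\epsilon} h^{1+\epsilon})$ bit operations. Scanning $\s$ for its vanishing entries to build $\zeta(\s)$ requires only $O(m^{\frac{n}{2}})$ comparisons. Both contributions are subdominant relative to the claimed estimate, since $\theta \ge 2 > 1$.

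The dominant cost is the rank test of Step~4. The matrix $M_{\zeta(\s)}$ is a row-submatrix of $M$, so it has at most $O(m^{\frac{n}{2}})$ rows, $n+1$ columns, and height $h$. Applying the cost-model estimate for the rank of an integer matrix, namely $O(\text{(rows)}\cdot(n+1)^{\theta+\epsilon}h^{1+\epsilon})$, and absorbing the polylogarithmic factors into the infinitesimal $\epsilon$, gives $O(m^{\frac{n}{2}}(n+1)^{\theta+\epsilon}h^{1+\epsilon}) = O(m^{\frac{n}{2}} n^{\theta+\epsilon} h^{1+\epsilon})$ bit operations, which is exactly the stated bound and evidently dominates the two previous steps.

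The only genuine obstacle I anticipate is the bookkeeping at the interface with \Cref{le:compcrtc}: one must confirm that the number of rows of $M$ is correctly charged as $O(m^{\frac{n}{2}})$ rather than as a naive count, and that the height appearing in the rank estimate is indeed the height $h$ of the {\testcone}'s representation and not an inflated quantity. Everything else is a routine comparison of exponents, using $\theta \ge 2$ to conclude that the rank computation, rather than the matrix--vector product, sets the overall complexity.
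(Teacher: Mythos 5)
Your decomposition of the cost --- matrix--vector product, zero-set scan, rank test, with the rank test dominating --- is exactly the paper's argument, and your row and column counts for $M$ ($O(m^{\frac{n}{2}})$ rows, at most $n+1$ columns) agree with what \Cref{le:compcrtc} supplies. The one genuine gap is the height of $M$. You assert that ``after clearing denominators we may regard $M$ as an integer matrix of height $h$ without changing these bounds,'' and you even single out confirming this as the remaining obstacle --- but the confirmation fails. The rows of $M$ are not inequalities of the input system: they are the extreme rays of a projection cone produced by the double description method inside \Cref{alg:itc}, built from $\A_0^{-1}$ and $\A_0^{-1}\c$. By \Cref{le:extcoefbound} and the explicit statement of \Cref{le:compcrtc}, each such row has height only bounded by $O(m^{\epsilon}n^{2+\epsilon}h)$, not by $h$; this is precisely the quantity the paper's proof plugs into $\height(M)$.

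Substituting the correct height into your estimates, the $m^{\epsilon}$ factor is absorbed into the exponent's infinitesimal, but the $n^{2+\epsilon}$ factor is not: the rank test then costs $O\bigl(m^{\frac{n}{2}+\epsilon}(n+1)^{\theta+\epsilon} n^{2+\epsilon} h^{1+\epsilon}\bigr)$, which exceeds the stated $O(m^{\frac{n}{2}} n^{\theta+\epsilon} h^{1+\epsilon})$ by roughly $n^{2}$. To be fair, the paper's own proof silently drops the same factor when passing from $\height(M) \in O(m^{\epsilon}n^{2+\epsilon}h)$ to its final bound for the rank step, and the slack in the $n^{\theta+1+\epsilon}$ of \Cref{thm:comp} covers the discrepancy downstream; so your conclusion is no worse than the paper's. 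But as written your justification of the height is incorrect, and the argument should cite \Cref{le:compcrtc} for the actual coefficient growth of the {\testcone}'s representation rather than appeal to the height of the input.
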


\iflong
\begin{proof}
The first step is to multiply the matrix $M$ and the vector $(\t,t_0)$. 
Let $d_M$ and $c_M$ be the number of rows and columns of $M$,
respectively, thus $M \in \Q^{d_M \times c_M}$. 
We know that $M$ is the coefficient matrix of $\proj{W^0, \{ \v, v_0 \}}$. 
Therefore, after eliminating $p$ variables $c_M = q+1$, where $q = n-p$ and
$d_M \leq m^{\frac{n}{2}}$. 
Also, we have $\height(M) \in O(m^{\epsilon}n^{2+\epsilon}h)$. 
With these specifications, the multiplication step and the rank computation step need 
$O(m^{\frac{n}{2}}n^{2+\epsilon}h^{1+\epsilon})$ and $O(m^{\frac{n}{2}}(q+1)^{\theta+\epsilon}h^{1+\epsilon})$
bit operations, respectively, and the claim follows after simplification.
\end{proof}

\fi

Using Algorithms~\ref{alg:itc} and \ref{alg:checkextremeray}, 
we can find the {\mpr} of a polyhedron in
singly exponential time w.r.t. the number of variables $n$.

\begin{theorem}\label{thm:comp}
  Algorithm \ref{alg:mpr} is correct.
  Moreover, a minimal projected representation of $Q$ can be produced within
  $O(m^{\frac{5n}{2}} n^{\theta + 1 + \epsilon} h^{1+\epsilon})$ bit operations.
\end{theorem}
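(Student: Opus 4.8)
The plan is to establish the two assertions—correctness and the running-time bound—separately, using the structural results of Section~\ref{sec:revision} for the first and Lemmas~\ref{le:compcrtc} and \ref{le:extremeraycheck} for the second. For correctness I would argue, by induction on the elimination index $i$, that immediately before $y_{i+1}$ is eliminated the cone $\PP$ held by the algorithm is a representation of the {\testcone} of $Q$ with respect to $\{y_1,\ldots,y_i\}$. The base case is the output of Algorithm~\ref{alg:itc}, and the inductive step is exactly Lemma~\ref{le:initialprojection}: passing from the {\testcone} for $\{y_1,\ldots,y_i\}$ to the one for $\{y_1,\ldots,y_{i+1}\}$ is the substitution that zeroes the coefficient of the eliminated variable, i.e. the update $\PP := \PP|_{v_{i+1}=0}$. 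Granting this invariant, Theorem~\ref{thm:correspondence} guarantees that the condition $\rank{M_{\zeta(\s)}} = n$ in Algorithm~\ref{alg:checkextremeray} holds precisely when the tested inequality is an extreme ray of the current {\testcone}, hence precisely when it is non-redundant in the corresponding projection; since the test is insensitive to positive scaling it remains correct on the unnormalised output of $\combine$. Gathering, at each level $k$, the accepted inequalities whose largest variable is $y_k$ then produces a minimal projected representation in the sense of Definition~\ref{le:projection}, and the same criterion applied to the {\initialtestcone} (the case of empty $\u$) justifies the input-reduction loop that discards redundant input inequalities before elimination starts.

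For the complexity I would split the work into building the {\initialtestcone}, the input-reduction loop, and the $n$ elimination passes. The first is $O(m^{n+3+\epsilon}(n+1)^{\theta+\epsilon}h^{1+\epsilon})$ by Lemma~\ref{le:compcrtc}, and the returned representation of $\PP$ has $O(m^{\lfloor (n+1)/2\rfloor})$ rows of height $O(m^{\epsilon}n^{2+\epsilon}h)$. The input-reduction loop makes $m$ calls to Algorithm~\ref{alg:checkextremeray}, each $O(m^{n/2}n^{\theta+\epsilon}h^{1+\epsilon})$ by Lemma~\ref{le:extremeraycheck}, and is dominated. For a single elimination pass I would use two bounds: the surviving system $S_i$ is a minimal representation of a projection of $Q$, so by the facet estimate recalled in Section~\ref{FM} (cf. Lemma~\ref{le:maxextr}) it has $O(m^{\lfloor n/2\rfloor})$ inequalities, whence the pass forms and tests at most $|S_i^+|\cdot|S_i^-| \le O(m^{n})$ combinations; and each test costs $O(m^{n/2}n^{\theta+\epsilon}(h_i)^{1+\epsilon})$, where $h_i$ bounds the height of the coefficients present at that pass.

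The crux—and the step I expect to be the main obstacle—is controlling $h_i$. One application of $\combine$ at most doubles the bit-length of the coefficients, so after $i$ eliminations the inequalities have height $O(2^i h)$, the classical source of Fourier--Motzkin blow-up; since $2^n \le m^n$ for $m \ge 2$, I would bound this uniformly by $O(m^n h)$. A single test then costs $O(m^{n/2}n^{\theta+\epsilon}(m^n h)^{1+\epsilon})$, a pass costs $O(m^{n}\cdot m^{n/2}n^{\theta+\epsilon}(m^n h)^{1+\epsilon}) = O(m^{5n/2}n^{\theta+\epsilon}h^{1+\epsilon})$ once the factor $m^{n\epsilon}$ is folded into the infinitesimal, and multiplying by the $n$ passes yields $O(m^{5n/2}n^{\theta+1+\epsilon}h^{1+\epsilon})$, which dominates the other two contributions and gives the stated bound. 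The delicate points are the height bookkeeping—ensuring the exponential coefficient growth contributes only the single extra factor $m^n$ and is charged correctly against it—and the uniform use of the $O(m^{\lfloor n/2\rfloor})$ facet bound to keep every pass at $O(m^n)$ tests.
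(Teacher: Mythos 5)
Your correctness argument is the same as the paper's (which simply cites \Cref{thm:correspondence} and \Cref{le:initialprojection}); your induction merely makes explicit the invariant that the paper leaves implicit, and that part is fine. Your complexity accounting, however, genuinely diverges from the paper's. The paper bounds the size of $S_p$ by Kohler's estimate $m^{p+1}$ for the projection after $p$ eliminations, so the total number of redundancy tests is $\sum_{p} m^{2p+2} = O(m^{2n})$, and each test is charged $O(m^{n/2}n^{\theta+\epsilon}h^{1+\epsilon})$ via \Cref{le:extremeraycheck} \emph{with the original height $h$}, giving $m^{2n}\cdot m^{n/2}=m^{5n/2}$. You instead bound $|S_i|$ uniformly by the facet bound $O(m^{\lfloor n/2\rfloor})$ of \Cref{le:maxextr}, getting only $O(nm^{n})$ tests in total, and you spend the saved factor on the coefficient growth caused by $\combine$, which you bound by $2^{n}h\le m^{n}h$. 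It is a notable (and slightly suspicious) coincidence that both routes land on the exponent $5n/2$. Your route has the merit of surfacing a real gap in the paper's own proof: \Cref{alg:mpr} never normalizes $\ell_{\rm new}$, so the vector fed to \Cref{alg:checkextremeray} can indeed have height $\Theta(2^{i}h)$ after $i$ passes, and the matrix--vector product in that test then costs more than \Cref{le:extremeraycheck} accounts for. The paper silently assumes the tested inequalities keep height $O(h)$.

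Two caveats on your version. First, the step ``once the factor $m^{n\epsilon}$ is folded into the infinitesimal'' is not legitimate as written: $m^{n}\cdot m^{n/2}\cdot(m^{n}h)^{1+\epsilon}=m^{5n/2+n\epsilon}h^{1+\epsilon}$, and $m^{n\epsilon}$ is polynomial in the main term, not a polylogarithmic correction; what you actually prove is $O\bigl(m^{\frac{5n}{2}(1+\epsilon')}n^{\theta+1+\epsilon}h^{1+\epsilon}\bigr)$, which is weaker than the stated bound. The clean fix is to replace each $\ell_{\rm new}$ by its primitive part: since a surviving inequality is a positive multiple of an extreme ray of the current {\testcone}, whose defining matrix has $O(m^{\lfloor(n+1)/2\rfloor})$ rows of height $O(m^{\epsilon}n^{2+\epsilon}h)$ by \Cref{le:compcrtc}, the normalized coefficients have height $O(m^{\epsilon}n^{3+\epsilon}h)$ by the extreme-ray coefficient bound, and the blow-up disappears; then your tighter count of $O(nm^{n})$ tests would in fact yield roughly $m^{3n/2}$, better than claimed. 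Second, a small attribution point: the equivalence between ``$\rank{M_{\zeta(\s)}}=n$'' and extremality is the algebraic test of \Cref{le:algetest} (and presupposes that the tested vector lies in $\PP$, which follows from the polar-cone characterization in \Cref{thm:polarcone}); \Cref{thm:correspondence} is what converts extremality into irredundancy.
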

\begin{proof}
Correctness of the algorithm follows from \Cref{thm:correspondence}, \Cref{le:initialprojection}.

By \cite{imbert1993fourier,kohler1967projections}, we know that after eliminating $p$ variables, the projection of the 
polyhedron has at most $m^{p+1}$ facets. For eliminating the next variable, there will be at most
$(\frac{m^{p+1}}{2})^2$ pairs of inequalities to be considered and each of the pairs generate a new inequality which should
be checked for redundancy. Therefore, overall the complexity of the algorithm is:
\begin{center}
$O(m^{n+3+\epsilon}(n+1)^{\theta+\epsilon}h^{1+\epsilon}) +
  \sum_{p=0}^{n} m^{2p+2}
  O(m^{\frac{n}{2}}n^{\theta+\epsilon}h^{1+\epsilon}) =
  O(m^{\frac{5n}{2}} n^{\theta + 1 + \epsilon} h^{1+\epsilon}).$
\end{center}
\end{proof}

\ifCoRRversion
\section{Experimentation}
\label{sec:exp}

This section reports on our software implementation
of the algorithms presented in the previous sections.
Our code is part of the BPAS library,
which is available at \url{www.bpaslib.org}
and is written 
in the \textsc{C} programming language.  
We tested our algorithm in terms of effectiveness for removing redundant inequalities and also in terms of running time.
The first thirteen test cases, (t1 to t13) are linear inequality 
systems with random coefficients; moreover, of these systems
is consistent, that is, has a non-empty solutiiion set.
The systems S24 and S35 are 24-simplex and 35-simplex polytopes, 
C56 and C510 are cyclic polytopes 
in dimension five with six and ten vertices,  
C68 is a cyclic polytope in dimension six with eight vertices and C1011
is cyclic polytope in dimension ten with eleven vertices~\cite{henk200416}. 
Our test cases can be found at \url{www.bpaslib.org/FME-tests.tgz}.
In our implementation, each system of linear inequalities is encoded
by an unrolled linked list, where each cell stores an inequality in a
dense representation.

\Cref{number} illustrates the effectiveness of each redundancy elimination method. The columns \texttt{\#var} and \texttt{\#ineq}
specify the number of variables and inequalities of each input system,
respectively. The last two columns show the maximum number of
inequalities appearing in the process of FME algorithm. The
column \texttt{check1} corresponds to the case that the Kohler's
algorithm is the only method for redundancy detection and the
column \texttt{check2} is for the case that Balas' algorithms is used.
Column \texttt{MinProjRep} gives the running times of our algorithm
for computing a minimal projected representation.

The \texttt{Maple} column  shows the running time 
for the \texttt{Projection} function of the \texttt{PolyhedralSets} package in Maple.
The last two columns show running time of Fourier elimination function in the 
\textsc{CDD} library.
The \texttt{CDD1} column is running time of the function when it uses an 
LP method for redundancy elimination, while the \texttt{CDD2} column is the running time of the same function  but it uses
Clarkson's algorithm \cite{clarkson1994more}.
\footnote{Because the running time of the
algorithm for eliminating all variables is more than one hour for some cases we only remove \textit{some} of the variables. The numbers in level parts shows the number of variables that can be eliminated in one hour of running program}

\begin{table}
\footnotesize{
 \begin{tabular}{|c | c | c  | c |c |} 
 \hline
 Test case & \# var & \# ineq & check 1 & check 2\\
 \hline
 \hline
 t1   & 5 & 10 & 36 & 20 \\  
 \hline
 t2  & 10 & 12 & 73 & 66 \\  
 \hline
 t3  & 4 & 8 & 20 & 11   \\ 
 \hline
 t4 & 5 & 10 & 33 & 19   \\ 
 \hline
 t5 & 5 & 8 & 20 & 14   \\ %
 \hline
 t6 & 7 & 10 & 40 & 37 \\    
 \hline
 t7  & 10 & 12 & 92 & 82 \\   
 \hline
 t8 & 6 & 8 & 18 & 15 \\  
 \hline
 t9 & 5 & 11 & 52 & 18 \\  
 \hline
 t10 & 10 & 20 &1036 & 279\\   
 \hline
 t11 & 9 & 19 & 695 & 362 \\
 \hline
 t12 & 8 & 19 & 620 & 257 \\
 \hline
 t13 & 6 & 18 & 435 & 91 \\
 \hline
 S24 & 24 & 25 & 24  & 24   \\
 \hline
 S35 & 35 & 36 & 35  &  35  \\
 \hline
 C56 & 5 & 6 & 9  &  9   \\
 \hline
 C68 & 6 & 16 & 24   &  20   \\
 \hline
 C1011 & 10 & 11 & 77  &  77  \\
 \hline
 C510 & 5 & 42 & 24024   & 35    \\
 \hline
 \end{tabular}
 }
 \caption{Maximum number of inequalities }
 \label{number}
\end{table}

\begin{table}
\footnotesize{
 \begin{tabular}{|c | c| c | c | c  | c | c |} 
 \hline
 Case   & MinProjRep & Maple & CDD1  & CDD2 \\ [0.5ex] 
 \hline\hline
 t1  &  8.042 & 7974 & 142 & 47 \\  
 \hline
 t2  & 107.377 & 3321217 & 122245 & 7925  \\  
 \hline
 t3 & 2.193  & 736 & 4  &  1\\ 
 \hline
 t4 & 5.960 & 2579& 48 &  17\\ 
 \hline
 t5 &  3.946 & 3081 & 32 & 13\\ 
 \hline
 t6  &  26.147 & 117021 & \small{core dump} & \footnotesize{wrong result}\\  
 \hline
 t7 &   353.588 & $>$1h &  1177807& 57235 \\   
 \hline
 t8 & 4.893 &  4950 & 124 & 22 \\   
 \hline
 t9 & 8.858 & 8229 & 75   &  39  \\   
 \hline
 t10 & 24998.501 & $>$ 1h & $>$ 1h (2)& $>$1h (3)\\  
 \hline    
 t11& 191191.909 & $>$ 1h & $>$ 1h (2)& $>$ 1h (2)\\ 
 \hline
 t12 & 21665.704 & $>$ 1h &  $>$1h (2)&  746581\\ 
 \hline
 t13 &  1264.289 &  $>$ 1h & 77372 & 30683 \\ 
 \hline
 S24 & 39.403 & 6485 & 334    &   105     \\ 
 \hline
 S35 & 158.286 & 57992 & 1827    &   431    \\ 
 \hline
 C56 & 1.389 & 825  &  11      &   3         \\ 
 \hline
 C68 & 4.782 &  20154   & 682  &  75     \\ 
 \hline
 C1011& 85.309 & $>$ 1h & $>$1h (4) &  76861  \\ 
 \hline
 C510&  23.973  & 6173 & 6262 & 483  \\ 
 \hline
 \end{tabular}}
 \caption{Running time comparison (ms)}
 \label{runtime}
\end{table}

\else
\section{Experimentation}
\label{sec:exp}

This section reports on our software implementation
of the algorithms presented in the previous sections.  
We tested our algorithm 
in terms of running time.
The first ten test cases, (t1 to t10) are linear inequality 
systems with random coefficients; moreover, of these systems
is consistent, that is, has a non-empty solution set.
The systems S24 and S35 are 24-simplex and 35-simplex polytopes, 
C56 and C510 are cyclic polytopes 
in dimension five with six and ten vertices,  
C68 is a cyclic polytope in dimension six with eight vertices and C1011
is cyclic polytope in dimension ten with eleven vertices~\cite{henk200416}. 
In our implementation, each system of linear inequalities is encoded
by an unrolled linked list, where each cell stores some inequalities in a
dense representation.
\Cref{number} illustrates 
our running time comparisons (all timings are in milliseconds). The columns \texttt{\#var} and \texttt{\#ineq}
specify the number of variables and inequalities of each input system,
respectively. 
Column \texttt{MinProjRep} gives the running times of our algorithm
for computing a minimal projected representation.
The \texttt{Maple} column  shows the running time 
for the \texttt{Projection} function of the \texttt{PolyhedralSets} package in Maple.
The last two columns show running time of Fourier elimination function in the 
\textsc{CDD} library.
The \texttt{CDD1} column is running time of the function when it uses an 
LP method for redundancy elimination, while the \texttt{CDD2} column is the running time of the same function  but it uses
Clarkson's algorithm \cite{clarkson1994more}.
\footnote{Because the running time of the
algorithm for eliminating all variables is more than one hour for some cases we only remove \textit{some} of the variables. The numbers in level parts shows the number of variables that can be eliminated in one hour of running program}

\begin{table}[h]
\footnotesize{
 \begin{tabular}{|c | c | c  | c |c | c | c| c | c |} 
 \hline
 Test case & \# var & \# ineq   & MinProjRep & Maple & CDD1  & CDD2\\
 \hline
 \hline
 t1   & 5 & 10 &  8.042 & 7974 & 142 & 47\\  
 \hline
 t2  & 10 & 12 &  107.377 & 3321217 & 122245 & 7925\\  
 \hline
 t3  & 4 & 8 &  2.193  & 736 & 4  &  1  \\ 
 \hline
 t4 & 7 & 10 &   26.147 & 117021 & \small{core dump} & \footnotesize{wrong result}\\    
 \hline
 t5  & 10 & 12 & 353.588 & $>$1h &  1177807& 57235 \\   
 \hline
 t6 & 5 & 11 & 8.858 & 8229 & 75   &  39\\  
 \hline
 t7 & 10 & 20 & 24998.501 & $>$ 1h & $>$ 1h (2)& $>$1h (3)\\   
 \hline
 t8 & 9 & 19 & 191191.909 & $>$ 1h & $>$ 1h (2)& $>$ 1h (2)\\
 \hline
 t9 & 8 & 19  & 21665.704 & $>$ 1h &  $>$1h (2)&  746581\\
 \hline
 t10 & 6 & 18 & 1264.289 &  $>$ 1h & 77372 & 30683 \\
 \hline
 S24 & 24 & 25 & 39.403 & 6485 & 334    &   105   \\
 \hline
 S35 & 35 & 36 &  158.286 & 57992 & 1827    &   431  \\
 \hline
 C56 & 5 & 6 & 1.389 & 825  &  11      &   3   \\
 \hline
 C68 & 6 & 16 &  4.782 &  20154   & 682  &  75  \\
 \hline
 C1011 & 10 & 11 & 85.309 & $>$ 1h & $>$1h (4) &  76861 \\
 \hline
 C510 & 5 & 42 &  23.973  & 6173 & 6262 & 483   \\
 \hline
 \end{tabular}
 }
 \caption{Experimental results}
 \label{number}
\end{table}

\fi

\section{Related work}
\label{sec:relatedwork}

During our study of the {\FM}, we found many related works.  As
discussed above, removing redundant inequalities during the execution
of {\FM} is the central issue towards efficiency.
To our knowledge, all available implementations of {\FM}
rely on linear programming for removing all the redundant 
inequalities, an idea suggested in~\cite{DBLP:reference/opt/Khachiyan09}.
However, and as mentioned above, there are alternative 
algorithmic approaches
relying on linear algebra.
In~\cite{chernikov1960contraction}, 
Chernikov proposed a redundancy test with
little added work, which greatly improves the practical efficiency
of {\FM}.
Kohler proposed a method in \cite{kohler1967projections} which only uses
matrix arithmetic 
operations to test the redundancy of inequalities.
As observed by Imbert in his work~\cite{imbert1993fourier}, 
the method he proposed in this paper as well as those 
of Chernikov and Kohler  are essentially equivalent.
Even though these works are very effective in practice, 
none of them can remove all redundant inequalities
generated by {\FM}.

Besides {\FM}, block elimination is another algorithmic tool to
project polyhedra on a lower dimensional subspace.
This method relies on the extreme rays
of the so-called {\projectcone}. 
Although there exist efficient methods to enumerate the extreme rays
of this {\projectcone}, like the 
{\em double description method}~\cite{fukuda1996double}
(also known as Chernikova's algorithm~\cite{chernikova1965algorithm,le1992note}), this method can not remove all the redundant inequalities.

In~\cite{balas1998projection}, Balas shows that if certain
{\em inconvertibility conditions} are satisfied, then the extreme rays of the
{\testcone} exactly defines a minimal representation of the
projection of a polyhedron. As Balas mentioned in his paper, this method can
be extended to any polyhedron.
Through experimentation, we found that the results and constructions 
in Balas' paper had some flaws.
First of all, in Balas' work, the redundancy test cone is defined as the projection of the cone
$
W^0 := \{ (\v , \w , v_0)\in
  \Q^q \times \Q^{m-q} \times \Q \ 
| \ [\v^t, \w^t]B_0^{-1}A_0 = 0 , 
-[\v^t, \w^t]B_0^{-1}\c_0+v_0 = 0
, [\v^t, \w^t] B_0^{-1} \geq 0 \}
$
on the $(\v,v_0)$ space.
The Author claimed that 
$\a^t \x \le c$
defines a facet of the projection $\proj{Q; \x}$ if and only if
$(\a, c)$ is an extreme ray of
the {\testcone} $\proj{W^0; \{ \v, v_0\}}$.
However, we have a counter example for this claim. 
\iflong
Please refer to the page \url{http://www.jingrj.com/worksheet.html}.
In this example, when we eliminate two variables, 
the cone $\proj{W^0; \{ \v, v_0 \}}$
has 19 extreme rays while $\proj{Q;\x}$ has 18 facets.
18 of the 19 extreme rays of $\proj{W^0; \{ \v, v_0 \}}$ give out the 18
facets of $\proj{Q;\x}$, while the remaining extreme ray gives out a
redundant inequality w.r.t. the 18 facets.
\fi
The main reason leading to this situation is due 
to a misuse of Farkas' lemma in the
proof of Balas' paper.
We improved this situation 
by changing $-[\v^t, \w^t]B_0^{-1}\c_0+v_0 = 0$ to
$-[\v^t, \w^t] B_0^{-1}\c_0+v_0 \ge 0$ and carefully showed the relations 
between the 
extreme rays of $\proj{W^0; \{ \v, v_0 \}}$ and the facets of $\proj{Q;\x}$, for
the details please refer to
Theorems~\ref{thm:polarcone}, \ref{thm:correspondence}.
In fact, with our change in the construction of $W^0$, we will have at most one
extra extreme ray, which is always $(\0 ,1)$.
An other drawback of Balas' work is 
that the necessity of enumerating the extreme rays of
  the redundancy test cone in order to produce a minimal
  representation of $\proj{Q;\x}$, which is time consuming.
 Our algorithm tests 
  the redundancy of the inequality $\a
  \x \le c$ by checking whether $(\a, c)$ is an
  extreme ray of the redundancy test cone or not.

\ifCoRRversion
\subsection{Subsumption Cone}\label{sec:sc}
After revisiting Balas' method, we found another cone called {\sc} \cite{huynh1992practical,lassez1990querying},
which we will prove later equals to the initial test cone $\PP :=\proj{W;\{\v, v_0\}}$ in the previous section.

Consider the polyhedron $Q$ given in Equation (\ref{eq:Q}), denote
$T := \{ ( \bflambda,  \bfalpha, \beta) \ | \ \bflambda^t \A =
\bfalpha^t, \bflambda^t \c \le \beta, \bflambda \ge \0 \}$,
where $\bflambda$ and $\bfalpha$ are column vectors of dimension $m$
and $n$ respectively, $\beta$ is a variable.
The {\em {\sc}} of $Q$ is obtained by eliminating $\bflambda$ in $T$,
that is, $\proj{T; \{ \bfalpha, \beta \}}$.

Remember that we can obtain the  initial test cone $\PP =
\proj{W;\{\v, v_0 \}}$ by  Algorithm \ref{alg:itc}, here
$W := \{ (\v, \w, v_0 ) \ | \ - [\v^t, \w^t]\A_0^{-1} \c + v_0 \ge
0,  [\v^t, \w^t]\A_0^{-1} \ge \0 \}$.

\begin{lemma}
  \label{le:scequaltoitc}
  The {\sc} of $Q$ equals to its {\initialtestcone} $\PP$.
\end{lemma}
\begin{proof}
  Let $\bflambda^t := [\overline{\v}^t, \overline{\w}^t] \A_0^{-1}$ and $\beta = \overline{v}_0$, we
  prove the lemma in two steps.

  $(\subseteq)$ For any $(\bfalpha, \beta)$ in the {\sc} $\proj{T;
    \{\bfalpha, \beta\}}$, there exists $\bflambda \in \Q^m$ satisfying
  $(\bflambda, \bfalpha, \beta) \in T$.
  Remember that $\A_0 =
       [\A, \A']$, where $\A'= [\e_{n+1}, \ldots, \e_{m}]$ with $\e_i$
       being the $i$-th canonical basis of $\Q^n$ for $i: n+1 \le i
       \le m$, we have $\A_0^{-1} \A = [\e_1, \ldots, \e_{n}]$ with
       $\e_i$ being the $i$-th canonical basis of $\Q^n$ for $i: 1\le
       i \le n$.
       Hence, $\bfalpha^t = \bflambda^t \A = [\overline{\v}^t,\overline{\w}^t] \A_0^{-1} \A
       = \overline{\v}^t$.
       Also, we have $[\overline{\v}^t, \overline{\w}^t] \A_0^{-1} \c \le
       \beta=\overline{v}_0, [\overline{\v}^t, \overline{\w}^t] \A_0^{-1} \ge \0 $.
       Therefore, $(\bfalpha, \beta) = (\overline{\v}, \overline{v}_0) \in \proj{W;\{\v,
         v_0\}}$.

       $(\supseteq)$ For any $(\overline{\v}, \overline{v}_0)$ in the {\initialtestcone}
       $\proj{W; \{\v, v_0\}}$, there exists $\overline{\w}\in
       \Q^{m-n}$ satisfying $(\overline{\v}, \overline{\w}, \overline{v}_0) \in \proj{W;\{\v, v_0\}}$.
       Let $\bfalpha = \overline{\v}$.
       Then, $\bfalpha^t = \overline{\v}^t = [\overline{\v}^t,\overline{\w}^t] \A_0^{-1} \A =
       \bflambda^t \A$,
       $\bflambda^t \c = [\overline{\v}^t,\overline{\w}^t] \A_0^{-1} \c \le \overline{v}_0 = \beta$ and
       $\bfalpha^t = [\overline{\v}^t,\overline{\w}^t] \A_0^{-1} \ge \0 $.
       Therefore, $(\overline{\v}, \overline{v}_0) = (\bfalpha, \beta) \in \proj{T; \{
         \bfalpha, \beta \}}$.
\end{proof}

In Section \ref{sec:MRPP}, we have shown how to use the initial test
cone to remove all the redundant inequalities and give a minimal
representation of the projections of given pointed polyhedra.
Detailed proofs are also explained in the previous section.
It also applies to the subsumption cone.
In \cite{huynh1992practical,lassez1990querying}, the authors mentioned
that the {\sc} can not detect all the redundant inequalities.
However, their object is full-dimensional
polyhedra while ours are pointed polyhedra.
Notice that any full-dimensional polyhedron can be transformed to a
pointed polyhedron by some coordinate transformations.

\else
Another related topic to our work is {\sc}. 
Consider the polyhedron $Q$ given in Equation (\ref{eq:Q}), define
$T := \{ ( \bflambda,  \bfalpha, \beta) \ | \ \bflambda^t \A =
\bfalpha^t, \bflambda^t \c \le \beta, \bflambda \ge \0 \}$,
where $\bflambda$ and $\bfalpha$ are  vectors of dimension $m$
and $n$ respectively, $\beta$ is a variable.
The {\em {\sc}} of $Q$ is obtained by eliminating $\bflambda$ in $T$,
that is, $\proj{T; \{ \bfalpha, \beta \}}$.
In \cite{huynh1992practical,lassez1990querying}, the authors mentioned
that the {\sc} can not detect all the redundant inequalities in the
representation of projections of a full-dimensional polyhedron.
However,
we proved that  {\sc} and {\initialtestcone} are equivalent (\Cref{ap:subsumptionProof}) and both of
them can be used to remove all the redundant inequalities in the
representation of projections of a full-dimensional pointed polyhedron.
\fi

\ifCoRRversion
Based on the improved version of Balas' methods, 
we obtain an
algorithm to remove all the redundant inequalities produced by {\FM}.
Even though this algorithm still has exponential complexity, 
which is expected, 
it is very 
effective in practice, as we have shown in
Section~\ref{sec:exp}.

The projection of polyhedra is a useful tool to solve 
problem instances in 
parametric
linear programming, which plays an important role in the analysis,
transformation and scheduling of for-loops of computer programs,
see for instance~\cite{jones2005polyhedral,borrelli2003geometric,jones2008polyhedral}.
\fi

\section{Solving parametric linear programming problem with \FM}
\label{sec:application}

In this section, we show how to use \FM \  for solving parametric
linear programming (PLP) problem instances.

Given a PLP problem instance:
\begin{equation} \label{eq:PLP}
  \begin{aligned}
     z(\bftheta)  &=  \min \c \x \\
     A \x& \le B \bftheta + \b
      \end{aligned}
\end{equation}
  where $A \in \Z^{m \times n}, B \in \Z^{m \times p}, \b \in
    \Z^{m},$ and $\x \in \Q^{n}$ are the variables, $\bftheta \in
    \Q^{p}$ are the parameters.

To solve this problem, first we need the following preprocessing
step.
Let $g>0$ be the greatest common divisor of elements in $\c$.
Via Gaussian elimination, we can obtain a uni-modular matrix $U \in \Q^{n
  \times n} $ satisfying $[0, \ldots, 0, g] =  \c U$.
Let $\t = U^{-1} \x$, the above PLP problem can be transformed to the
following equivalent form:
\begin{equation} 
\label{eq:PLPtransform}
  \begin{aligned}
    & z(\bftheta) =  \min g t_n \\
    & AU \t \le B \bftheta + \b .
  \end{aligned}
\end{equation}
Applying \Cref{alg:mpr} to the constraints $AU \t \le B \bftheta + \b$
with the variable order $t_1 > \cdots > t_n > \bftheta$, we obtain
${\PR(Q; t_1 > \cdots > t_n > \bftheta)}$, where $Q\subseteq \Q^{n+p}$ is the polyhedron
represented by $AU \t \le B \bftheta + \b$.
We extract the representation of the projection $\proj{Q; \{t_n, \bftheta\}}$,
denoted by $\Phi := \Phi_1 \cup \Phi_2$. Here we denote by $\Phi_1$
the set of inequalities which have a non-zero coefficient in $t_n$ and
$\Phi_2$ the set of inequalities which are free of  $t_n$.
Since $g>0$, 
to solve (\ref{eq:PLPtransform}),
 we only need to consider
the lower bound of $t_n$, which is very easy to deduce from $\Phi_1$.

Consider Example 3.3 in \cite{borrelli2003geometric}:
$$ \begin{aligned}
  &\min \quad  -2 x_1 -x_2\\
  & \left \{
  \begin{aligned}
  & x_1 + 3x_2 \le 9 - 2 \theta_1 + \theta_2,
   2x_1 + x_2 \le 8 + \theta_1 -2 \theta_2 \\
 & x_1 \le 4 + \theta_1 + \theta_2,\quad
 - x_1 \le 0,\quad
 - x_2 \le 0
  \end{aligned}
  \right.
  \end{aligned}$$
We have $(-2, -1) U = (0, 1)$, where $U = \left( \begin{aligned} 1 &
  \quad 0 \\ -2 & \quad -1 \end{aligned} \right)$.
Let $(t_1, t_2)^T = U^{-1} (x_1, x_2)^T$, the above PLP problem is
equivalent to
$$\begin{aligned}
  & \min \quad  t_2 \\
  & \left \{
  \begin{aligned}
  & -5 t_1 - 3 t_2 \le 9 - 2 \theta_1 + \theta_2, -t_2 \le 8 +
    \theta_1 -2 \theta_2 \\ 
 & t_1 \le 4 + \theta_1 + \theta_2, \quad -t_1 \le 0, \quad 2t_1 + t_2 \le 0
  \end{aligned}
  \right.
  \end{aligned}$$
Let $P$ denote the polyhedron represented by the above constraints.
Applying \Cref{alg:mpr} to $P$ with variable order
$t_1>t_2>\theta_1>\theta_2$, we obtain the projected representation
$\PR(P; t_1>t_2>\theta_1>\theta_2)$, from which we can easily extract
the representation of the projected polyhedron $\proj{P; \{t_2, \theta_1, \theta_2\}}$:
$$\proj{P; \{t_2, \theta_1, \theta_2\}} := \left \{ \begin{aligned}
  -t_2- \theta_1+2 \theta_2 & \le 8, \
  -3 t_2 - 3 \theta_1 - 6 \theta_2 &\le 29, \\
  - t_2 + 4 \theta_1 - 2 \theta_2 &\le 18, 
  \quad t_2 & \le 0, \\
  - \theta_1 - \theta_2 & \le 4, \quad \quad \quad -\theta_1 + 2 \theta_2 &\le 8,\\
  -3 \theta_2 &\le 17, \quad \quad \quad  \quad \quad \quad 3 \theta_2 &\le 25.
\end{aligned} \right.
$$
$t_2$ has three lower bounds: $t_2 = -8 - \theta_1 + 2
  \theta_2,
t_2 = -\theta_1 - 2 \theta_2 -29/3$ and $t_2 = 4 \theta_1 - 2 \theta_2
-18$, under the constraints
  {\small \begin{center}
    $\left\{ \begin{aligned}
      & -\theta_2 \le 5/12,  & -\theta_1 - \theta_2 \le 4,\\
      & \theta_1 + 2 \theta_2 \le 8, & \theta_1 - 4/5 \theta_2 \le 2.
    \end{aligned}
    \right.
    $, 
    $ \left\{ \begin{aligned}
      & \theta_2 \le -5/12,   \theta_1 \le 5/3,\\
      & -\theta_1 - \theta_2 \le 4.
    \end{aligned}
    \right.
    $, 
    $ \left\{ \begin{aligned}
      & -\theta_1 \le -5/3,  -\theta_1 + 4/5 \theta_2 \le -2, \\
      & \theta_1 - \theta_2/2 \le 9/2
    \end{aligned}
    \right.
    $
  \end{center}
}


\begin{thebibliography}{10}

\bibitem{balas1998projection}
Egon Balas.
\newblock Projection with a minimal system of inequalities.
\newblock {\em Computational Optimization and Applications}, 10(2):189--193,
  1998.

\bibitem{Bastoul:2004:CGP:1025127.1025992}
C.~Bastoul.
\newblock Code generation in the polyhedral model is easier than you think.
\newblock In {\em Proceedings of the 13th International Conference on Parallel
  Architectures and Compilation Techniques}, PACT '04, pages 7--16, Washington,
  DC, USA, 2004. IEEE Computer Society.

\bibitem{Benabderrahmane:2010:PMM:2175462.2175484}
M.~Benabderrahmane, L.~Pouchet, A.~Cohen, and C.~Bastoul.
\newblock The polyhedral model is more widely applicable than you think.
\newblock In {\em Proceedings of the 19th joint European conference on Theory
  and Practice of Software, international conference on Compiler Construction},
  CC'10/ETAPS'10, pages 283--303, Berlin, Heidelberg, 2010. Springer-Verlag.

\bibitem{Bondhugula:2008:PAP:1379022.1375595}
U.~Bondhugula, A.~Hartono, J.~Ramanujam, and P.~Sadayappan.
\newblock A practical automatic polyhedral parallelizer and locality optimizer.
\newblock {\em SIGPLAN Not.}, 43(6):101--113, June 2008.

\bibitem{borrelli2003geometric}
Francesco Borrelli, Alberto Bemporad, and Manfred Morari.
\newblock Geometric algorithm for multiparametric linear programming.
\newblock {\em Journal of optimization theory and applications},
  118(3):515--540, 2003.

\bibitem{DBLP:conf/icms/ChenCMMXX14}
Changbo Chen, Svyatoslav Covanov, Farnam Mansouri, Marc {Moreno Maza}, Ning
  Xie, and Yuzhen Xie.
\newblock The basic polynomial algebra subprograms.
\newblock In {\em Mathematical Software - {ICMS} 2014 - 4th International
  Congress, Seoul, South Korea, August 5-9, 2014. Proceedings}, pages 669--676,
  2014.

\bibitem{chernikov1960contraction}
Sergei~Nikolaevich Chernikov.
\newblock Contraction of systems of linear inequalities.
\newblock {\em Doklady Akademii Nauk SSSR}, 131(3):518--521, 1960.

\bibitem{chernikova1965algorithm}
Natal'ja~V. Chernikova.
\newblock Algorithm for finding a general formula for the non-negative
  solutions of a system of linear inequalities.
\newblock {\em Zhurnal Vychislitel'noi Matematiki i Matematicheskoi Fiziki},
  5(2):334--337, 1965.

\bibitem{clarkson1994more}
Kenneth~L Clarkson.
\newblock More output-sensitive geometric algorithms.
\newblock In {\em Foundations of Computer Science, 1994 Proceedings., 35th
  Annual Symposium on}, pages 695--702. IEEE, 1994.

\bibitem{Feautrier91dataflowanalysis}
P.~Feautrier.
\newblock Dataflow analysis of array and scalar references.
\newblock {\em International Journal of Parallel Programming}, 20, 1991.

\bibitem{Feautrier:1996:APP:647429.723579}
Paul Feautrier.
\newblock Automatic parallelization in the polytope model.
\newblock In {\em The Data Parallel Programming Model: Foundations, HPF
  Realization, and Scientific Applications}, pages 79--103, London, UK, UK,
  1996. Springer-Verlag.
\newblock \url{http://dl.acm.org/citation.cfm?id=647429.723579}.

\bibitem{fukudacdd}
Komei Fukuda.
\newblock The {CDD} and {CDD}plus homepage.
\newblock \url{https://www.inf.ethz.ch/personal/fukudak/cdd_home/}.

\bibitem{fukuda1996double}
Komei Fukuda and Alain Prodon.
\newblock Double description method revisited.
\newblock In {\em Combinatorics and computer science}, pages 91--111. Springer,
  1996.

\bibitem{grosser.11.impact}
T.~Grosser, H.~Zheng, R.~Aloor, A.~Simb{\"u}rger, A.~Gr{\"o}{\ss}linger, and
  L.~Pouchet.
\newblock Polly - polyhedral optimization in llvm.
\newblock In {\em First International Workshop on Polyhedral Compilation
  Techniques (IMPACT'11)}, Chamonix, France, April 2011.

\bibitem{henk200416}
Martin Henk, J{\"u}rgen Richter-Gebert, and G{\"u}nter~M Ziegler.
\newblock 16 basic properties of convex polytopes.
\newblock {\em Handbook of discrete and computational geometry}, pages
  255--382, 2004.

\bibitem{huynh1992practical}
Tien Huynh, Catherine Lassez, and Jean-Louis Lassez.
\newblock Practical issues on the projection of polyhedral sets.
\newblock {\em Annals of mathematics and artificial intelligence},
  6(4):295--315, 1992.

\bibitem{imbert1993fourier}
Jean-Louis Imbert.
\newblock Fourier's elimination: Which to choose?
\newblock In {\em PPCP}, pages 117--129, 1993.

\bibitem{jing2017integerpoints}
Rui{-}Juan Jing and Marc {Moreno Maza}.
\newblock Computing the integer points of a polyhedron, {I:} algorithm.
\newblock In {\em Computer Algebra in Scientific Computing - 19th International
  Workshop, {CASC} 2017, Beijing, China, September 18-22, 2017, Proceedings},
  pages 225--241, 2017.

\bibitem{jing2017computing}
Rui{-}Juan Jing and Marc {Moreno Maza}.
\newblock Computing the integer points of a polyhedron, {II:} complexity
  estimates.
\newblock In {\em Computer Algebra in Scientific Computing - 19th International
  Workshop, {CASC} 2017, Beijing, China, September 18-22, 2017, Proceedings},
  pages 242--256, 2017.

\bibitem{jones2005polyhedral}
Colin Jones.
\newblock Polyhedral tools for control.
\newblock Technical report, University of Cambridge, 2005.

\bibitem{jones2008polyhedral}
Colin~N. Jones, Eric~C. Kerrigan, and Jan~M. Maciejowski.
\newblock On polyhedral projection and parametric programming.
\newblock {\em Journal of Optimization Theory and Applications},
  138(2):207--220, 2008.

\bibitem{DBLP:reference/opt/Khachiyan09}
Leonid Khachiyan.
\newblock Fourier-motzkin elimination method.
\newblock In Christodoulos~A. Floudas and Panos~M. Pardalos, editors, {\em
  Encyclopedia of Optimization, Second Edition}, pages 1074--1077. Springer,
  2009.

\bibitem{kohler1967projections}
David~A. Kohler.
\newblock Projections of convex polyhedral sets.
\newblock Technical report, California Univ. at Berkeley, Operations Research
  Center, 1967.

\bibitem{lassez1990querying}
Jean-Louis Lassez.
\newblock Querying constraints.
\newblock In {\em Proceedings of the ninth ACM SIGACT-SIGMOD-SIGART symposium
  on Principles of database systems}, pages 288--298. ACM, 1990.

\bibitem{le1992note}
Herv{\'e} Le~Verge.
\newblock {\em A note on Chernikova's algorithm}.
\newblock PhD thesis, INRIA, 1992.

\bibitem{mcmullen1970maximum}
Peter McMullen.
\newblock The maximum numbers of faces of a convex polytope.
\newblock {\em Mathematika}, 17(2):179--184, 1970.

\bibitem{monniaux2010quantifier}
David Monniaux.
\newblock Quantifier elimination by lazy model enumeration.
\newblock In {\em International Conference on Computer Aided Verification},
  pages 585--599. Springer, 2010.

\bibitem{DBLP:journals/computing/SchonhageS71}
Arnold Sch{\"{o}}nhage and Volker Strassen.
\newblock Schnelle multiplikation gro{\ss}er zahlen.
\newblock {\em Computing}, 7(3-4):281--292, 1971.

\bibitem{Schrijver:1986:TLI:17634}
Alexander Schrijver.
\newblock {\em Theory of linear and integer programming}.
\newblock John Wiley \& Sons, Inc., New York, NY, USA, 1986.

\bibitem{storjohann2013algorithms}
Arne Storjohann.
\newblock {\em Algorithms for matrix canonical forms}.
\newblock PhD thesis, Swiss Federal Institute of Technology Zurich, 2000.

\bibitem{terzer2009large}
Marco Terzer.
\newblock {\em Large scale methods to enumerate extreme rays and elementary
  modes}.
\newblock PhD thesis, ETH Zurich, 2009.

\bibitem{DBLP:journals/taco/VerdoolaegeJCGTC13}
S.~Verdoolaege, J.~{Carlos Juega}, A.~Cohen, J.~{Ignacio G{\'{o}}mez},
  C.~Tenllado, and F.~Catthoor.
\newblock Polyhedral parallel code generation for {CUDA}.
\newblock {\em {TACO}}, 9(4):54, 2013.

\bibitem{waldschmidt2000diophantine}
Michel Waldschmidt.
\newblock {\em Diophantine approximation on linear algebraic groups}.
\newblock Springer Verlag, 2000.

\end{thebibliography}
\end{document}